\newif\ifsubmit
\ifsubmit \usepackage[letterpaper]{geometry}
\providecommand{\groundset}{\mathcal{N}}     %
\providecommand{\groundsets}{2^{\groundset}} %
\providecommand{\groundcube}{[0,1]^{\groundset}}
\providecommand{\weight}{w\optsub} %
\providecommand{\varweight}{\omega\optpar} %
\newcommand{\defterm}[1]{\emph{#1}}
\providecommand{\Qout}{\hat{Q}} %
\providecommand{\infnorm}[1]{\norm{#1}_{\infty}} %
\providecommand{\apxdelta}{\tilde{\delta}} %
\newcommand{\xout}{\hat{x}}
\NewDocumentCommand{\mlf}{O{F} g g}{
  #1\IfNoValueF{#3}{_{#3}}\IfNoValueF{#2}{\parof{#2}} %
}                                                                %
\NewDocumentCommand{\dmlf}{o g}{%
  \IfNoValueTF{#2}{%
    F'%
    \IfNoValueF{#1}{%
      _{#1}%
    }%
  }{%
    \IfNoValueTF{#1}{%
      F'\parof{#2}%
    }{%
      \parof{F'\parof{#2}}_{#1}%
    }%
  }%
}
\NewDocumentCommand{\ddmlf}{o o g}{%
  \IfNoValueF{#1}{%
    \mleft(%
  }%
  F''
  \IfNoValueF{#3}{\parof{#3}}
  \IfNoValueF{#1}{%
    \mright)%
    \IfNoValueTF{#2}{
      _{#1}                     %
    }{                          %
      _{#1,#2}                  %
    }                           %
  }                             %
}
\begin{document}

\title{Submodular~Function~Maximization in~Parallel
  via~the~Multilinear~Relaxation\footnote{This work is partially
    supported by NSF grant CCF-1526799. University of Illinois,
    Urbana-Champaign, IL 61801. {\tt
      \{chekuri,quanrud2\}@illinois.edu}.}
  % Maximizing the~Multilinear Extension of
  % Submodular~Functions~in~Parallel
} %
\author{Chandra Chekuri \and Kent Quanrud}
\maketitle

\begin{abstract}
  \citet{bs-18} recently initiated the study of adaptivity (or
  parallelism) for constrained submodular function maximization, and
  studied the setting of a cardinality constraint. Subsequent
  improvements for this problem by \citet{brs-18} and \citet{en-18}
  resulted in a near-optimal $(1-1/e-\eps)$-approximation in
  $O(\log n/\eps^2)$ rounds of adaptivity. Partly motivated by the
  goal of extending these results to more general constraints, we
  describe parallel algorithms for approximately maximizing the
  multilinear relaxation of a monotone submodular function subject to
  packing constraints. Formally our problem is to maximize $\mlf{x}$
  over $x \in [0,1]^{n}$ subject to $Ax \le \ones$ where $\mlf$ is the
  multilinear relaxation of a monotone submodular function.  Our
  algorithm achieves a near-optimal $(1-1/e-\eps)$-approximation in
  $O(\log^2 m \log n/\eps^4)$ rounds where $n$ is the cardinality of
  the ground set and $m$ is the number of packing constraints. For
  many constraints of interest, the resulting fractional solution can
  be rounded via known randomized rounding schemes that are
  \emph{oblivious} to the specific submodular function. We thus derive
  randomized algorithms with poly-logarithmic adaptivity for a number
  of constraints including partition and laminar matroids, matchings,
  knapsack constraints, and their intersections.

  Our algorithm takes a continuous view point and combines several
  ideas ranging from the continuous greedy algorithm of
  \cite{v-08,ccpv}, its adaptation to the MWU framework for packing
  constraints \cite{cjv-15}, and parallel algorithms for packing LPs
  \cite{ln-93,young-01}. For the basic setting of cardinality constraints,
  this viewpoint gives rise to an alternative, simple to understand
  algorithm that matches recent results \cite{brs-18,en-18}. Our
  algorithm to solve the multilinear relaxation is deterministic if it
  is given access to a value oracle for the multilinear extension and
  its gradient; this is possible in some interesting cases such as the
  coverage function of an explicitly given set system.
\end{abstract}

\thispagestyle{empty}

\newpage
\setcounter{page}{1}

\section{Introduction}
\labelsection{intro}
A real-valued set function $f: \groundsets \rightarrow \mathbb{R}$ over a
finite ground set $\groundset$ is \emph{submodular} iff
\begin{align*}
  f(A) + f(B) \ge f(A \cup B) + f(A \cap B)
  \text{ for all } A, B \subseteq
  \groundset.
  \labelthisequation{submodularity}
\end{align*}
Submodular set functions play a significant role in classical
combinatorial optimization. More recently, due to theoretical
developments and a plethora of applications ranging from algorithmic
game theory, machine learning, and information retrieval \& analysis,
their study has seen a resurgence of interest. In this paper we are
interested in constrained submodular function
\emph{maximization}. Given a non-negative submodular set function
$f:\groundsets \rightarrow \mathbb{R}_+$ over a finite ground set
$\groundset$ the goal is to find $\max_{S \in \mathcal{I}} f(S)$ where
$\mathcal{I}$ is down-closed family of sets that captures some packing
constraint of interest.  The canonical problem here is the cardinality
constrained problem $\max_{|S| \le k} f(S)$. Among many other
applications, this problem captures NP-Hard problems including the
Maximum $k$-Cover problem which can not be approximated to better than
a $(1-1/e-\eps)$-factor for any $\eps > 0$ unless $P=NP$
\cite{f-98}. The cardinality constrained problem has been well-studied
from the 70's with an optimal $(1-1/e)$-approximation established via
a simple greedy algorithm when $f$ is monotone \cite{nwf-78}. There
has been extensive theoretical work in the last decade on
approximation algorithms for submodular function maximization. Several
new algorithmic ideas were developed to obtain improved approximation
ratios for various constraints, and to handle non-monotone functions.
One of these new ingredients is the multilinear relaxation approach
\cite{ccpv} that brought powerful continuous optimization techniques
to submodular function maximization. We refer the reader to a recent
survey \cite{bf-survey} for some pointers to the new developments on
greedy and continuous methods, and to \cite{bz-17} on local search
methods.

Recent applications of submodular function maximization to large data
sets, and technological trends, have motivated new directions of
research. These include the study of faster algorithms in the
sequential model of computation
\cite{bv-14,mbkvk-15,cjv-15,mbk-16,fhk-17}, algorithms in distributed
setting \cite{mksk-13,kmvv-15,mz-15,benw-15,benw-16,lv}, and
algorithms in the streaming setting
\cite{bmkk-14,ck-15,cgq-15}. \citet{benw-16} developed a general
technique to obtain a constant round algorithm in the MapReduce model
of computation that gets arbitrarily close to the approximation
achievable in the sequential setting. The MapReduce model captures the
distributed nature of data but allows for a polynomial amount of
sequential work on each machine. In some very recent work
\citet{bs-18} suggested the study of \emph{adaptivity} requirements
for submodular function maximization which is closer in spirit to the
traditional \emph{parallel} computation model such as the PRAM. To a
first order approximation the question is the following. Assuming that
the submodular function $f$ can be evaluated efficiently in parallel,
how fast can constrained submodular function maximization be done in
parallel? To avoid low-level considerations of the precise model of
parallel computation, one can focus on the number of adaptive rounds
needed to solve the constrained optimization problem; this corresponds
to the depth in parallel computation.  The formal definition of the
notion of adaptivity from \cite{bs-18} is the following.  An algorithm
with oracle access to a submodular function
$f: \groundset \rightarrow \reals$ is $r$-adaptive for an integer $r$
if for $i \in [r]$, every query $q$ to $f$ in round $i$ depends only
on the answers to queries in rounds $1$ to $(i-1)$ (and is independent
of all other queries in rounds $i$ and greater). We believe that the
definition is intuitive and use other terms such as depth, rounds and
iterations depending on the context.

\citet{bs-18} considered the basic cardinality constrained problem and
showed that in the value oracle model (where one assumes black box
access to $f$), one needs $\Omega(\log n/\log \log n)$ rounds of
adaptivity for a constant factor approximation. They also developed a
randomized algorithm with an approximation ratio of $1/3 - \eps$.  In
recent work, \citet{brs-18} and \citet{en-18} described randomized
algorithms that achieved a near-optimal approximation ratio of
$(1-1/e-\eps)$ with $O(\log n/\eps^2)$ adaptivity. The algorithm of
\citet{en-18} uses $\apxO{n \poly{1/\eps}}$ function calls, while the
algorithm of \citet{brs-18} uses $\apxO{n k^2 \poly{1/\eps}}$ function
calls\footnote{We use $\apxO{}$ notation to suppress poly-logarithmic
  factors.}.

We refer the reader to \cite{bs-18} for extensive justification for
the study of adaptivity of submodular function maximization. We
believe that the close connections to parallel algorithms is already a
theoretically compelling motivation. For instance, specific problems
such as Set Cover and Maximum $k$-Cover have been well-studied in the
PRAM model (see \cite{bpt-11} and references therein).  Our goals
here are twofold. First, can we obtain parallel algorithms for other
and more general classes of constraints than the cardinality
constraint? Second, is there a unified framework that cleanly
isolates the techniques and ideas that lead to parallelization for
submodular maximization problems?

\paragraph{Our Contribution:} We address our goals by considering the
following general problem. Given a monotone submodular function
$f:\groundsets \rightarrow \mathbb{R}_+$ maximize $f$ subject to a set of
explicitly given packing constraints in the form
$A x \le \ones, x \in \{0,1\}^n$; here $n = |\groundset|$, and
$A \in [0,1]^{m \times n}$ is a non-negative matrix. Packing
constraints in this form capture many constraints of interest
including cardinality, partition and laminar matroids, matchings in
graphs and hypergraphs, independent sets in graphs, multiple knapsack
constraints, and their intersections to name a few.  To solve these in a
unified fashion we consider the problem of solving in parallel the following
multilinear relaxation:
\begin{align*}
  \text{maximize } \mlf{x}                         %
  \text{ s.t\ }  Ax \le \ones    %
  \text{ and } x \in [0,1]^\groundset. %
  \labelthisequation[\textsf{Pack-ML}]{pack-multi} %
\end{align*}
Here $\mlf: \groundcube \rightarrow \mathbb{R}_+$ is the multilinear
extension of $f$ \cite{ccpv}, a continuous extension of $f$ defined
formally in \refsection{background}. We mention that solving a packing
LP of the form
\begin{align*}
  \text{maximize } \rip{c}{x} \text{~s.t~} Ax \le \ones \text{ and } x \in [0,1]^n
  \labelthisequation[\textsf{Pack-LP}]{pack-lp}
\end{align*}
with $c \ge \zeroes$ is a special case of our problem.

The multilinear relaxation is used primarily for the sake of discrete
optimization. For this reason we make the following convenient
assumption: for every element $j$ of the ground set $\groundset$, the
singleton element $\{j\}$ satisfies the packing constraints, that is,
$Ae_j \le \ones$. Any element which does not satisfy the assumption
can be removed from consideration. We make this assumption for the
rest of the paper as it helps the exposition and avoids uninteresting
technicalities.

Our main result is the following theorem.

\begin{theorem}
  \labeltheorem{main-intro} There is a parallel/adaptive algorithm
  that solves the multilinear relaxation of a monotone submodular
  function subject to $m$ packing constraints with the following
  properties. For a given parameter $\eps > 0$:
  \begin{itemize}
  \item It outputs a $(1-1/e-\eps)$-approximation to the multilinear relaxation.
  \item It runs in $\bigO{\frac{1}{\eps^4}\log^2 m \log n}$ adaptive rounds.
  \item The algorithm is deterministic if given value oracle access to
    $F$ and its gradient $F'$. The total number of oracle queries
    to $F$ and $F'$ is $O(n \poly{\log n/\eps})$.
  \item If only given access to a value oracle for $f$ the algorithm
    is randomized and outputs a $(1-1/e-\eps)$-approximate feasible
    solution with high probability, and deterministically finishes in
    the prescribed number of rounds. The total number of oracle
    accesses to $f$ is $O(n^2 \poly{\log n/\eps})$.
  \end{itemize}
\end{theorem}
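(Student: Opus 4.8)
The plan is to run a parallel implementation of the continuous greedy method on the multilinear extension $F$, in which the linear maximization over the packing polytope that would be solved at each point is replaced by an approximate, \emph{width-independent} multiplicative-weights step on the constraints $Ax \le \ones$. We maintain a point $x \in \groundcube$, initially $\zeroes$, together with a ``time'' parameter $t$ that increases from $0$ to $1$; an invariant of the form $x \in t \cdot \{\, z \in \groundcube : Az \le \ones \,\}$ guarantees that the output is feasible for \textsf{Pack-ML}. The analysis rests on the standard fact that, by monotonicity and submodularity, $F$ is concave along nonnegative directions, so for an optimal solution $x^*$ of \textsf{Pack-ML} and any reachable $x$ one has $\rip{\dmlf{x}}{x^*} \ge \mlf{x \vee x^*} - \mlf{x} \ge \mlf{x^*} - \mlf{x}$. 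Hence it suffices, in each step of length $\delta$, to find a feasible increment $z$ with $\rip{\dmlf{x}}{z} \ge (1-\eps)\bigl(\mlf{x^*} - \mlf{x}\bigr)$ and to set $x \gets x + \delta z$: a routine accounting over $t \in [0,1]$ --- with the discretization error controlled by taking $\delta$ small and, if needed, the smoothing idea of \cite{bv-14} --- then gives $\mlf{x} \ge (1 - 1/e - O(\eps))\,\mlf{x^*}$ at the end, and rescaling $\eps$ yields the claimed approximation.

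The inner problem is exactly a packing LP: given the nonnegative gradient $g = \dmlf{x}$, find $z \ge \zeroes$ with $Az \le \ones$ (and $z$ suitably bounded) that approximately maximizes $\rip{g}{z}$, the existence of a good $z$ being certified by $z = x^*$. Rather than solving this LP from scratch we fold it into the outer loop following \cite{cjv-15}: we keep multiplicative weights $y \in \reals_{\ge 0}^{m}$ on the $m$ constraints, with $y_i$ exponential in the current load $(Ax)_i$, and in each round move mass onto the coordinates $j$ with large ratio $g_j / (A^\top y)_j$, using step lengths chosen as in the parallel packing-LP algorithms of \cite{ln-93,young-01} so that no constraint is overshot yet enough progress is made regardless of the width of $A$. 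The potential argument of those algorithms bounds the number of such rounds between two consecutive gradient (re)computations by $\bigO{\log^2 m \cdot \poly{1/\eps}}$, and --- as discussed next --- the gradient needs to be recomputed only $\bigO{\log n \cdot \poly{1/\eps}}$ times over the whole run; multiplying the two accountings, and tracking the $\eps$-dependence, gives the stated $\bigO{\frac{1}{\eps^4}\log^2 m \log n}$ bound on adaptive rounds. When $F$ and $F'$ are available as value oracles, the gradient, the weights, and the step lengths are all computed exactly, so the algorithm is deterministic, and since each round touches each coordinate a bounded number of times the total number of oracle queries is $\bigO{n \poly{\log n/\eps}}$.

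The crux --- and the step I expect to be the main obstacle --- is showing that $F'$ is stable enough to be reused across many multiplicative-weights rounds, so that the number of gradient recomputations is only polylogarithmic. Submodularity gives the diminishing-returns property $\ddmlf[j][k]{x} \le 0$, so each coordinate $g_j = \dmlf[j]{x}$ is nonincreasing as $x$ grows; the plan is to argue that a cached gradient stays $(1 \pm \eps)$-accurate until some coordinate of $x$ has increased by a $(1+\eps)$ factor multiplicatively, and that along the run this can happen only $\bigO{\log n \cdot \poly{1/\eps}}$ times in total. The delicate point is a coordinate's first ``activation'' away from $0$, where multiplicative stability is vacuous; this is handled by treating each coordinate's initial $\delta$-increment separately, using the singleton-feasibility assumption $Ae_j \le \ones$ to guarantee room for it, and it is precisely this bookkeeping over the $n$ coordinates that is responsible for the $\log n$ factor. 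A secondary, more technical obstacle is to reconcile the discretized multiplicative-weights dynamics with the idealized continuous process so that the $1 - 1/e$ factor is not degraded; this is standard for MWU-based packing, but requires extra care here because the objective $g$ itself drifts as the process runs.

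Finally, in the value-oracle-only model we replace $\mlf{\cdot}$ and $F'$ throughout by Monte-Carlo estimates: to estimate $\mlf{x}$, or a coordinate $\dmlf[j]{x} = \mathbb{E}_{R \sim x}\bigl[f(R \cup \{j\}) - f(R \setminus \{j\})\bigr]$, we independently round $x$ to a random set and average, and $\poly{\log n / \eps}$ samples suffice to make each estimate accurate to within a $(1 \pm \eps)$ factor with high probability, after rescaling by the trivially computable bounds $\max_j f(\{j\}) \le \mlf{x^*} \le \sum_j f(\{j\})$. A union bound over the $\poly{\log n, 1/\eps}$ rounds and the $n$ coordinates, times the per-estimate sample count, yields the stated $\bigO{n^2 \poly{\log n/\eps}}$ bound on the number of $f$-queries, while the number of adaptive rounds is unchanged because all samples drawn within a round are nonadaptive. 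Plugging these estimates into the deterministic analysis --- with $\eps$ halved to absorb the estimation error --- gives the randomized guarantee; note that the number of iterations of both the outer and inner loops is fixed in advance and the feasibility of $x$ is maintained using the known matrix $A$ and the current $x$, so the round count and the feasibility of the output hold with certainty, with only the approximation factor being probabilistic.
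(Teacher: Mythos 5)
The step you yourself flag as the crux is exactly where the argument fails. You claim that a cached gradient $F'(x)$ stays $(1\pm\eps)$-accurate until some coordinate of $x$ has grown by a $(1+\eps)$ multiplicative factor, so that only $O(\log n\cdot\poly{1/\eps})$ gradient recomputations are needed, with $O(\log^2 m\cdot\poly{1/\eps})$ width-independent MWU rounds run against each frozen objective. This stability claim is false for multilinear extensions: $F'_j(x)$ depends on \emph{all} other coordinates, and simultaneous small increases of many coordinates can collapse it by an arbitrarily large factor. For a coverage function, $F'_j(x)$ is a sum of products $\prod_i (1-x_i)$; if $n$ coordinates sitting at $x_i=1/2$ each increase by a factor $(1+\eps)$, such a product drops by $e^{-\Omega(\eps n)}$ even though no single coordinate moved by more than a $(1+\eps)$ factor. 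Consequently you cannot certify a gain of $(1-\eps)(\opt-F(x))$ per unit time across many MWU rounds driven by a stale gradient; this drift of the objective is precisely the difficulty the algorithm must confront, not a technicality to absorb.

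The paper resolves it with a different mechanism, which your sketch is missing. The marginal gain $F(x+\delta\gamma(x\wedge S))-F(x)$ is re-examined in every inner iteration, and the step size $\delta$ is chosen greedily as large as possible subject to the gain still being at least $(1-\eps)^4\delta\lambda$ (besides the Young-style cap $\gamma\delta\le\eps^2/4\log m$). The number of iterations in which the gradient condition binds is then bounded not by any stability of $F'$ but by a potential argument: each such step forces $\langle F'(x),\,x\wedge S\rangle$ (in the cardinality case, $|S|$) to drop by a $(1-\Omega(\eps))$ factor, and since this potential has only a $\poly{n/\eps}$ multiplicative range (its increase being controlled by the bounded growth of $x$), there are $O(\log n/\eps)$ such steps per outer phase; the weight-limited steps are bounded separately by the geometric-growth argument, which is the part of your accounting that is essentially right. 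Without a replacement for this potential argument your iteration bound has no proof. A secondary gap: in the $f$-oracle model, $\poly{\log n/\eps}$ samples per estimate do not give $(1\pm\eps)$-relative accuracy for $F(x)$ or $F'_j(x)$; one needs roughly $\tilde{O}(n\cdot\poly{1/\eps})$ samples per estimate (additive error $\eps\opt/n$ converted to relative error via the preprocessing bounds), which is where the $O(n^2\,\poly{\log n/\eps})$ query count actually comes from — your per-estimate count would not even yield your own stated total.
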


Our algorithm solves the continuous relaxation and outputs a
fractional solution $x$. To obtain an integer solution we need to
round $x$. Several powerful and general rounding strategies have been
developed over the years including pipage rounding, swap rounding, and
contention resolution schemes
\cite{ccpv,BansalKNS,cvz-14,cvz-swap-rounding,kst-13,feldman-thesis,bf-survey}.
These establish constant factor integrality gaps for the multilinear
relaxation for many constraints of interest. In particular, for
cardinality constraints and more generally matroid constraints there
is no loss from rounding the multilinear relaxation. Thus solving the
multilinear relaxation in \reftheorem{main-intro} already gives an
estimate of the value of the integer optimum solution. One interesting
aspect of several of these rounding algorithms is the following: with
randomization, they can be made \emph{oblivious} to the objective
function $f$ (especially for monotone submodular functions).  Thus one
can convert the fractional solution into an integer solution without
any additional rounds of adaptivity. Of course, in a fine-grained
parallel model of computation such as the PRAM, it is important to
consider the parallel complexity of the rounding algorithms. This will
depend on the constraint family. We mention that the case of partition
matroids is relatively straight forward and one can derive a
randomized parallel algorithm with an approximation ratio of
$(1-1/e-\eps)$ with poly-logarithmic depth. In \refsection{rounding}
we briefly discuss some rounding schemes that can be easily parallelized.

For the case of cardinality constraint we are able to derive a more
oracle-efficient algorithm with similar parameters as the ones in
\cite{brs-18,en-18}. The efficient version is presented as a
discretization of the continuous algorithm, and we believe it provides
a different perspective from previous work\footnote{\citet[Section
  D]{bs-18} describe very briefly a connection between their
  $1/3$-approximation algorithm and the multilinear relaxation but not
  many details are provided.}. The algorithm can be extended to a
single knapsack constraint while maintaining a depth of
$O(\log {n}/\eps^2)$.

\iffalse
\begin{remark}
  The adaptivity of our algorithm has a dependence of $1/\eps^4$ on
  the error parameter $\eps$.  We note that even for the case of pure
  packing LPs, until recently, this was the best dependence known for
  parallel algorithms \cite{young-01}. Recent improvements lead to a
  $1/\eps^2$ dependence for pure packing LPs
  \cite{mrwz-16,ao-15}. Subsequent to our work, \citet{env-18}
  obtained a $1/\eps^2$ dependence for the problem we considered in
  this paper.
\end{remark}
\fi

\begin{remark}
  Our parallel algorithm for the multilinear relaxation relies only on
  ``monotone concavity'' of the multilinear extension (as defined in
  \refsection{background}). Thus our parallel alogirthm also applies
  to yield a $(1-1/e-\eps)$-approximation for maximizing any monotone
  concave function subjecting to packing constraints. Even for
  non-decreasing concave functions, which can be optimized almost
  exactly in the sequential setting, it is not clear that they can be
  solved efficiently and near optimally in the parallel setting when
  in the oracle model with black box access to the
  the function and its gradient.
\end{remark}

\begin{remark}
  A number of recent papers have addressed adaptive and parallel
  algorithms for submodular function maximization. Our work was
  inspired by \cite{bs-18,brs-18,en-18} which addressed the
  cardinality constraint. Other independent papers optimized the
  adaptivity and query complexity \cite{fmz-18-monotone}, and obtained
  constant factor approximation for nonnegative nonmonotone functions
  under a cardinality constraint \cite{bbs-18,fmz-18-nn}.  Partly
  inspired by our work, \citet{env-18} obtained improved results for
  approximating the multilinear relaxation with packing constraints.
  First, they obtain a $(1-1/e-\eps)$-approximation for the monotone
  case in $O( \log{n/\eps} \log{m} /\eps^2)$ rounds of
  adaptivity. Second, they are able to handle nonnegative functions
  and obtain a $(1/e - \eps)$-approximation.
\end{remark}

\subsection{Technical overview}
We build upon several ingredients that have been developed in the
past. These include the continuous greedy algorithm for approximating
the multilinear relaxation \cite{v-08,ccpv} and its adaptation to the
multiplicative weight update method for packing constraints
\cite{cjv-15}. The parallelization is inspired by fast parallel
approximation schemes for positive LPs pioneered by \citet{ln-93} and
subsequently developed by \citet{young-01}. Here we briefly sketch the
high-level ideas which are in some sense not that complex.

We will first consider the setting of a single constraint ($m=1$),
which corresponds to a knapsack constraint of the form
$\rip{a}{x} \le 1$. For linear objective functions
$f(x) = \rip{c}{x}$, we know that the optimal solution is obtained by
greedily sorting the coordinates in decreasing order of $c_j/a_j$ and
choosing each coordinate in turn to its fullest extent of the upper
bound $1$ until the budget of one unit is exhausted (the last job may
be fractionally chosen).  One way to parallelize the greedy algorithm
(and taking a continuous view point) while losing only a
$\epsless$-factor is the following. We bucket the ratios $c_j/a_j$
into a logarithmic number of classes by some appropriate
discretization. Starting with the highest ratio class, instead of
choosing only one coordinate, we choose all coordinates in the same
bucket and increase them simultaneously in parallel until the budget
is met or all coordinates reach their upper bound. If the budget
remains we move on to the next bucket. It is not hard to to see that
this leads to a parallel algorithm with poly-logarithmic depth; the
approximation loss is essentially due to bucketing.

Consider now the nonlinear case, \refequation{pack-multi} under a
knapsack constraint.  In the sequential setting, the continuous greedy
algorithm \cite{v-08,ccpv} is essentially the following greedy
algorithm presented as a continuous process over time. At any time
$t$, if $x(t)$ is the current solution, we increase only $x_j$ for the
best ``bang-for-buck'' coordinate $j = \argmax_h \dmlf[h]{x}/a_h$;
here $\dmlf[h]{x}$ is the $h$th coordinate of gradient of the $F$ at
$x$. In the special case of the cardinality constraint, this is the
coordinate with the largest partial derivative.  Multilinearity of $F$
implies that we should increase the same coordinate $j$ until it
reaches its upper bound. A natural strategy to parallelize this greedy
approach is to bucket the ratios of the coordinates (by some
appropriate discretization) and simultaneously increase all
coordinates in the best ratio bucket.  This won't quite work because
$F$ is non-linear and the gradient values decrease as $x$
increases\footnote{This tension is also central to the recent works
  \cite{bs-18,brs-18,en-18}. We believe that it is easier to
  understand it in the continuous setting where one can treat the
  matter deterministically.}. Here is a simple but key idea. Let
$\lambda$ be the current highest ratio and let us call any coordinate
$j$ in the highest bucket a good coordinate.  Suppose we increase all
good coordinates by some $\delta$ until the average ratio of the good
coordinates falls, \emph{after} the increase, to $(1-\eps)
\lambda$. During the step we have a good rate of progress, but the
step size $\delta$ may be very small. However, one can argue that
after the step, the \emph{number} of good coordinates for current
gradient level falls by an $\eps$ fraction. Hence we cannot make too
many such steps this bucket empties, and have made ``dual'' progress
in terms of decreasing the $\ell_{\infty}$-norm of the
gradient. % (Note that the gradient is always decreasing by monotonicity
% of $f$ and therefore of $F$.)
This simple scheme suffices to recover a polylogarithmic depth
algorithm for the knapsack constraint. With some additional tricks we
can convert the algorithm into a randomized discrete algorithm that
recovers the parameters of \cite{brs-18,en-18} for the cardinality
constraint. We note that viewing the problem from a continuous point
of view allows for a clean and deterministic algorithm (assuming value
oracles for $F$ and its gradient $F'$).

The more technical aspect of our work is when $m > 1$; that is, when
there are several constraints. Here we rely on a Lagrangean relaxation
approach based on the multiplicative weight update (MWU) method for
positive LPs, which has a long history in theoretical computer science
\cite{ahk-mwu-survey}.  The MWU approach maintains non-negative
weights $w_1,w_2,\ldots,w_m$ on the constraints and solves a sequence
of Lagrangean relaxations of the original problem while updating the
weights. Each relaxed problem is obtained by collapsing the $m$
constraints into a single constraint
$\ripover{A}{\weight}{x} \le \rip{w}{\ones}$ obtained by taking a
weighted linear combination of the original constraints. Note that
this single constraint is basically a knapsack constraint. However,
the weights are updated after each step and hence the knapsack
constraint evolves \emph{dynamically}.  Nevertheless, the basic idea
of updating many variables with the same effective ratio that we
outlined for the single knapsack constraint can be generalized.  One
critical feature is that the weights increase monotonically.  In the
sequential setting, \cite{cjv-15} developed a framework for
\refequation{pack-multi} that allowed a clean combination of two
aspects: (a) an analysis of the continuous greedy algorithm for
proving a $(1-1/e)$-approximation for the multilinear relaxation and
(b) the analysis of the step size and weight updates in MWU which
allows one to argue that the final solution (approximately) satisfies
the constraints.  We borrow the essence of this framework, but in
order to parallelize the algorithm we need both the dual
gradient-decreasing viewpoint discussed above and another idea from
previous work on parallel algorithms for positive LPs
\cite{ln-93,young-01}. Recall that in the setting of a single knapsack
constraint, when we update multiple variables, there are two
bottlenecks for the step size: the total budget and the change in
gradient.  In the MWU setting, the step size is further controlled by
weight update considerations. Accordingly, the step size update rule
is constrained such that if we are increasing along the $j$ coordinate
with a current value of $x_j$, then the updated value is at most
$(1 + \eps^2/\log m)x_j$. This limit is conservative enough to ensure
the weights do not grow too fast, but can only limit the step size a
small number of times before the geometrically increasing coordinates
exceed a certain upper bound.

\paragraph{Organization:} The rest of the paper is organized as
follows. \refsection{background} describes relevant background on
submodular functions and the multilinear extension. In
\refsection{cardinality}, we first describe and analyze an algorithm
for the multilinear relaxation when we have a single cardinality
constraint.  This give an algorithm with $O(\log n/\eps^2)$ depth
assuming oracle access to the multilinear extension $\mlf$ and its
derivative $\dmlf$, which in turn can be implemented via (many more)
oracle calls to $f$ without increasing the adaptivity.  We describe
and analyze our algorithm for general packing constraints in
\refsection{mwu}.  In \refappendix{rpg}, we analyze a randomized
discretization of the continuous algorithm for cardinality constraints
with a better oracle complexity w/r/t $f$.  In \refappendix{knapsack},
we describe and analyze $\bigO{\log n / \eps^2}$-adaptive algorithms
for maximizing a monotone submodular function subject to a single
knapsack constraint.

Note that \refsection{cardinality} is largely included to develop some
intuition ahead of the more complicated constraints in
\refsection{mwu}, but none of the formal observations in
\refsection{cardinality} are invoked explicitly in
\refsection{mwu}. Moreover, the bounds obtained in
\refsection{cardinality} for the cardinality constraint are already
known \citep{brs-18,en-18}. The reader primarily interested in the
main result regarding general packing constraints may prefer to skip
ahead to \refsection{mwu}.

%%% Local Variables:
%%% mode: latex
%%% TeX-master: "submodular-mwu"
%%% End:

\section{Submodular set functions and the Multilinear relaxation}
\labelsection{background} In this section we provide some relevant
background and notation that we use in the rest of the paper.  Let
$f: \groundsets \to \reals$ assign real values to subsets of
$\groundset$. $f$ is \defterm{nonnegative} if $f(S) \geq 0$ for all
$S \subseteq \groundset$.  $f$ is \defterm{monotone} if
$S \subseteq T$ implies $f(S) \le f(T)$. $f$ is \defterm{normalized}
if $f(\emptyset) = 0$.

We have already seen one definition of submodularity in
\refequation{submodularity}. Another useful (and equivalent)
definition is via \emph{marginal} values. For a real-valued set
function $f:\groundsets \rightarrow \mathbb{R}$, the marginal value of
a set $U$ with respect to a set $S$ is defined as
$f(S \cup U) - f(S)$, which we abbreviate by $f_S(U)$. If $U$ is a
singleton $\{i\}$ we write $f_S(i)$ instead of $f_S(\{i\})$.  We also
use the notation $S+i$ and $S+i+j$ as short hand for $S \cup \{i\}$
and $S \cup \{i,j\}$.  A set function $f$ is submodular iff it
satisfies the following property modeling decreasing marginal returns:
\begin{align*}
  f_S(i) \ge f_T(i) \text{ for all } S \subset T \subseteq \groundset
  \text{ and } i \not \in T.
\end{align*}
The following seemingly restricted form of this property also suffices:
$f_S(i) \ge f_{S+j}(i) \quad \forall S, i,j \not \in S$ and we will
see a continuous analogue of this latter property subsequently.  In
this paper we restrict attention to normalized, nonnegative and
monotone submodular set functions.

\subsection{Multilinear extension and relaxation}

In this section, we outline basic properties of a continuous extension
of submodular functions to the fractional values in $\groundcube$
called the \emph{multilinear extension} \cite{ccpv}.

\begin{notation}
  For two vectors $x,y$, let $x \lor y$ be the coordinatewise maximum
  of $x$ and $y$, and let $x \land y$ denote the coordinatewise
  minimum, and let $x \setminus y = x - x \land y$. We identify an
  element $j$ with the coordinate vector $e_j$, and a set of elements
  $S \subseteq \groundset$ with the sum of coordinate vectors,
  \begin{math}
    \sum_{j \in S} e_j.
  \end{math}
  In particular, for a vector
  \begin{math}
    x \in \groundcube
  \end{math}
  and a set of coordinates $S$, $x \land S$ is the vector obtained
  from $x$ by setting all coefficients not indexed by $S$ to 0, and
  \begin{math}
    x \setminus S = x - x \land S
  \end{math}
  is the vector obtained from $x$
  setting all coordinates indexed by $S$ to 0.
\end{notation}

\begin{definition}
  Given a set function $f: \groundsets \to \reals$, the
  \defterm{multilinear extension} of $f$, denoted $\mlf$, extends $f$
  to the product space $[0,1]^{\groundset}$ by interpreting each point
  $x \in [0,1]^{\groundset}$ as an independent sample
  $S \subseteq \groundset$ with sampling probabilities given by $x$,
  and taking the expectation of $f(S)$. Equivalently,
  $$\mlf(x) = \sum_{S \subseteq \groundset} \parof{\prod_{i \in S}x_i
  \prod_{i \not \in S} (1-x_i)}.$$
 % \begin{align*}
 %   \mlf{x} = \evof{f(S)} \text{ where } S \sim x.
%  \end{align*}
  We extend $\mlf$ to the cone $\nnreals^{\groundset}$ by truncation:
  $\mlf{x} = \mlf{x \land \ones}$. where $x \land \ones$ takes the coordinatewise minimum of $x$ and
  the all-ones vector $\ones$.
\end{definition}
We also write
\begin{math}
  \mlf{x}{y} = \mlf{x \lor y} - \mlf{y}
\end{math}
which generalizes the definition of marginal values to the continuous
setting. We let $F'(x)$ denote the gradient of $F$ at $x$ and
$F''(x)$ denote the Hessian of $F$ at $x$. $F'_i(x)$ denotes
the partial derivative of $F$ with respect to $i$, and
$F''_{i,j}(x)$ denotes the second order partial derivative
with respect to $i$ and $j$. The following lemma captures several
submodularity properties of $F$ that it inherits from $f$. The
properties are paraphrased from \cite{v-08,ccpv} and can be
deriveed from the algebraic formula for $F$ and submodularity
of $f$.

\begin{lemma}
  Let $\mlf$ be the multilinear extension of a set function $f$, and
  $x \in \groundcube$.
  \begin{enumerate}
  \item (\defterm{Multilinearity}) For any $i \in \groundset$,
    \begin{math}
      \mlf{x} = \mlf{x \setminus i} + x_i \mlf{i}{x \setminus i}.
    \end{math}
    In particular, $\mlf{x}$ is linear in $x_i$.
  \item (\defterm{Monotonicity}) For any $i \in \groundset$,
    \begin{math}
      \dmlf[i]{x} =             %
      \mlf{i}{x \setminus i}.
    \end{math}
    In particular, if $f$ is monotone, then $\dmlf$ is nonnegative,
    and $\mlf$ is monotone (that is, $F(y) \ge F(x)$ if $y \ge x$).
  \item For any $i \neq j \in \groundset$, for
    $y = x \setminus \setof{i,j}$, we have
    \begin{align*}
      \ddmlf[i][j]{x} = \mlf{y \lor \setof{i,j}} %
      - \mlf{y \lor i} - \mlf{y \lor j} %
      + %
      \mlf{y}.
    \end{align*}
    If $f$ is submodular, then
    \begin{math}
      \ddmlf[i][j]{x} \leq 0.
    \end{math}
  \item (\defterm{Monotone concavity}) For any $d \geq \zeroes$, the function
    $\lambda \mapsto \mlf{x + \lambda d}$ is concave in $\lambda$
    (whenever $F(x + \lambda d)$ is defined).
  \end{enumerate}
\end{lemma}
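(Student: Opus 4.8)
The plan is to read all four statements off the algebraic formula for $\mlf$ together with its probabilistic interpretation $\mlf{x} = \mathbb{E}[f(R)]$, where $R = R(x)\subseteq\groundset$ is the random set containing each $i$ independently with probability $x_i$. For part~(1) I would condition on whether $i\in R$: with probability $x_i$ we get $R = R'+i$, and with probability $1-x_i$ we get $R=R'$, where $R' = R(x\setminus i)$ ignores coordinate $i$; taking expectations gives $\mlf{x} = (1-x_i)\mlf{x\setminus i} + x_i\,\mlf{(x\setminus i)\lor i}$, and regrouping yields $\mlf{x} = \mlf{x\setminus i} + x_i\,\mlf{i}{x\setminus i}$. (Equivalently, split the explicit sum over $S$ according to whether $i\in S$ and factor out $x_i$, resp.\ $1-x_i$.) Since neither $\mlf{x\setminus i}$ nor $\mlf{i}{x\setminus i}$ depends on $x_i$, this exhibits $\mlf{x}$ as affine in $x_i$, and part~(2) is then immediate: the $i$th partial of an affine function of $x_i$ is its slope, so $\dmlf[i]{x}=\mlf{i}{x\setminus i}$. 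Monotonicity of $f$ yields monotonicity of $\mlf$ by coupling $R(x)\subseteq R(y)$ when $x\le y$, which forces $\mlf{i}{x\setminus i} = \mlf{(x\setminus i)\lor i}-\mlf{x\setminus i}\ge 0$, i.e.\ $\dmlf\ge\zeroes$.

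For part~(3) I would apply part~(1) twice, in $x_i$ and then $x_j$: for fixed $y := x\setminus\setof{i,j}$ this writes $\mlf{x}$ as a bilinear polynomial $a + b\,x_i + c\,x_j + e\,x_i x_j$ in $(x_i,x_j)$ with coefficients depending only on $y$, so $\ddmlf[i][j]{x}=e$. Evaluating at the four corners $(x_i,x_j)\in\setof{0,1}^2$ identifies $a=\mlf{y}$, $a+b=\mlf{y\lor i}$, $a+c=\mlf{y\lor j}$, $a+b+c+e=\mlf{y\lor\setof{i,j}}$, giving the stated identity $\ddmlf[i][j]{x} = \mlf{y\lor\setof{i,j}} - \mlf{y\lor i} - \mlf{y\lor j} + \mlf{y}$. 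Writing each of the four terms as an expectation over $R(y)$, a random subset of $\groundset\setminus\setof{i,j}$, the right-hand side equals $\mathbb{E}\bigl[f(R(y)+i+j) - f(R(y)+i) - f(R(y)+j) + f(R(y))\bigr] = \mathbb{E}\bigl[f_{R(y)+j}(i) - f_{R(y)}(i)\bigr]$, which is $\le 0$ termwise by submodularity since $R(y)\subseteq R(y)+j$.

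Finally, for part~(4), set $g(\lambda)=\mlf{x+\lambda d}$ with $d\ge\zeroes$. Where $x+\lambda d\in\groundcube$, $\mlf$ is a multilinear polynomial, so $g$ is smooth and $g''(\lambda)=\sum_{i,j\in\groundset} d_i d_j\,\ddmlf[i][j]{x+\lambda d}$; the diagonal terms vanish by multilinearity, the off-diagonal terms are $\le 0$ by part~(3), and $d_i d_j\ge 0$, so $g''\le 0$ and $g$ is concave on that range. The only step I expect to need genuine care — everything else being bookkeeping from the definition — is extending this across the truncation boundary, where $\mlf{x}=\mlf{x\land\ones}$ is no longer smooth: there I would combine monotonicity of $\mlf$ (part~(2)), concavity of $\mlf$ along nonnegative directions inside $\groundcube$ (the computation just given), and the fact that $\lambda\mapsto(x+\lambda d)\land\ones$ is a coordinatewise-nondecreasing concave path, using the elementary observation that the composition $\mlf\circ p$ of such a path $p$ with a monotone function $\mlf$ that is concave along nonnegative directions is concave in $\lambda$.
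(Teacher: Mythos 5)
Your proposal is correct and follows exactly the route the paper itself points to but does not write out (it only cites \cite{v-08,ccpv} and notes the properties "can be derived from the algebraic formula for $F$ and submodularity of $f$"): multilinearity by conditioning on $i$, the gradient and mixed second derivative as coefficients of the (bi)linear expansion, termwise submodularity under a common random set for the sign of $F''_{i,j}$, and concavity along nonnegative directions from the vanishing diagonal and nonpositive off-diagonal terms. Your extra care at the truncation boundary in part (4) — using monotonicity of $F$ together with the coordinatewise nondecreasing concave path $\lambda \mapsto (x+\lambda d)\land \ones$ — is sound (and indeed monotonicity is genuinely needed there), so the argument is complete.
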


\paragraph{Multilinear relaxation:} The multilinear extension $\mlf$
of a submodular function $f$ has many uses, but a primary motivation
is to extend the relax-and-round framework of approximation algorithms
for linear functions to submodular function maximization. Given a
discrete optimization problem of the form
$\max_{S \in \mathcal{I}} f(S)$ we relax it to the continuous
optimization problem $\max_{x \in P_\mathcal{I}} F(x)$ where
$P_{\mathcal{I}}$ is a polyhedral or convex relaxation for the
feasible solutions of constraint family $\mathcal{I}$. The problem
$\max_{x \in P_\mathcal{I}} F(x)$ is referred to as the multilinear
relaxation. It is useful to assume that linear optimization over
$P_\mathcal{I}$ is feasible in polynomial time in which case it is
referred to as \emph{solvable}.  The multilinear relaxation is not
exactly solvable even for the simple cardinality constraint polytope
$\{ x \in [0,1]^n: \sum_{i} x_i \le k\}$. The continuous greedy
algorithm \cite{v-08} gives an optimal $(1-1/e)$ approximation for
solvable polytopes when $f$ is monotone. Our focus in this paper is
the restricted setting of explicit packing constraints.

\paragraph{Preprocessing:}
Recall that we made an assumption that for all $j \in \groundset$,
$A e_j \le \ones$. With this assumption in place we can do some useful
preprocessing of the given instance. First, we can get lower and
upperbounds on $\opt$, the optimum solution value for the
relaxation. We have $\opt \ge \ell = \max_j f(j)$ and
$\opt \le \sum_j f(j) = u \le n \ell$. Since we are aiming for a
$(1-1/e - \eps)$-approximation we can assume that for all $j$,
$f(j) \ge \frac{\eps}{n} \ell$; any element which does not satisfy
this assumption can be discarded and the total loss is at most
$\eps \opt$. Further, we can see, via sub-additivity of $f$ and $F$
that
$\mlf{\frac{\eps}{n}\ones} \le \frac{\eps}{n} \sum_j f(j) \le \eps
\opt$.  We can also assume that $A_{i,j} = 0$ or $A_{i,j} \ge \eps/n$
for all $i,j$; if $A_{i,j} < \eps/n$ we can round it down to $0$. Let
$A'$ be the modified matrix. If $A'x \le \ones$ then we have that
$Ax \le (1+\eps) \ones$. Therefore $A(1-O(\eps))x \le \ones$.  From
monotone concavity we also see that
$\mlf{(1-O(\eps))x} \ge (1-O(\eps))\mlf{x}$. Thus, solving with
respect to $A'$ does not lose more than a $(1-O(\eps)$ multiplicative
factor when compared to solving with $A$.

\paragraph{Evaluating $\mlf$ and $\dmlf$:} The
formula for $\mlf{x}$ gives a natural random sampling algorithm to
evaluate $\mlf{x}$ in expectation. Often we need to evaluate
$\mlf{x}$ and $\dmlf{x}$ to high accuracy. This issue has been
addressed in prior work via standard Chernoff type concentration
inequalities when $f$ is non-negative.

\begin{lemma}[\citealp{cjv-15}]
  \labellemma{dmlf-sample}
  Suppose $\dmlf[i]{x} \in [0,M']$. Then with $r = O(\frac{1}{\eps^2} p
  \log d)$ parallel evaluations of $f$ one can find an estimate
  $Z$ of $\dmlf[i]{x}$ such that $\probof{ |Z - \dmlf[i]{x}| \ge \eps
    \dmlf[i]{x} + \frac{\eps}{p} M'} \le \frac{1}{d^3}$. Similarly, if
  $\mlf{x} \in [0,M]$, then with $r = O(\frac{1}{\eps^2}p
  \log d)$ parallel evaluations of $f$, one can find an estimate
  $Z$ of $\mlf{x}$ such that $\probof{ |Z - \mlf{x}| \ge \eps
    \mlf{x} + \frac{\eps}{p} M} \le \frac{1}{d^3}$.
\end{lemma}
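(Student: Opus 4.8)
The plan is to estimate $\mlf{x}$ and $\dmlf[i]{x}$ by averaging over independent samples drawn according to the product distribution on subsets of $\groundset$ defined by $x$, and to control the deviation with Chernoff-type tail bounds. The point to keep in mind throughout is that a purely multiplicative guarantee $\probof{|Z-\mlf{x}|\le\eps\mlf{x}}$ cannot be obtained with a sample size independent of the magnitude of the quantity being estimated; the additive slack terms $\frac{\eps}{p}M'$ and $\frac{\eps}{p}M$ are precisely what make a sample size of $O(\frac{1}{\eps^2}p\log d)$ suffice.

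First I would set up the two estimators. For $\mlf{x}$, draw independent sets $S_1,\dots,S_r\subseteq\groundset$, each including element $j$ independently with probability $x_j$, and take $Z=\frac{1}{r}\sum_{\ell=1}^r f(S_\ell)$; by the definition of the multilinear extension, $Z$ is an unbiased estimator of $\mlf{x}$. For $\dmlf[i]{x}$, I would use the Monotonicity property of the preceding lemma, which gives $\dmlf[i]{x}=\mlf{i}{x\setminus i}$; a short coupling argument shows this equals the expectation of $f(S\cup\{i\})-f(S\setminus\{i\})$ over $S$ drawn as above, so I would take $Z=\frac{1}{r}\sum_{\ell=1}^r\bigl(f(S_\ell\cup\{i\})-f(S_\ell\setminus\{i\})\bigr)$, again unbiased. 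Because $f$ is monotone, normalized and nonnegative, I may take $M$ to be an upper bound on $f(S)$ over all $S$ (e.g.\ $f(\groundset)$, which dominates $\mlf{x}$) and $M'$ an upper bound on the marginal $f_S(i)$ over all $S$ (e.g.\ $f(\{i\})$, which dominates $\dmlf[i]{x}$), so that in both cases every summand lies in a known interval of length $M$ (resp.\ $M'$). The $O(r)$ evaluations of $f$ involved are mutually nonadaptive and run in one parallel round.

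Next I would fix $r$ and bound $\probof{|Z-\mu|\ge\eps\mu+\frac{\eps}{p}R}$, where $\mu\in\{\mlf{x},\dmlf[i]{x}\}$ and $R\in\{M,M'\}$. Rescaling the summands into $[0,1]$ by dividing by $R$, it suffices to show $\probof{|\bar X-\nu|\ge\eps\nu+\eps/p}\le 1/d^3$ for the average $\bar X$ of $r$ i.i.d.\ $[0,1]$ variables of mean $\nu=\mu/R$. If $\nu\ge 1/p$, then $\eps\nu$ already dominates the slack and the standard multiplicative Chernoff bound gives $\probof{|\bar X-\nu|\ge\eps\nu}\le 2\exp(-\eps^2\nu r/3)\le 2\exp(-\eps^2 r/(3p))$, which drops below $1/d^3$ once $r=\Theta(\frac{1}{\eps^2}p\log d)$ with a large enough hidden constant. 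If $\nu<1/p$, the slack $\eps/p$ exceeds $\eps\nu$, so it is enough to bound $\probof{|\bar X-\nu|\ge\eps/p}$; here I would invoke the relative Chernoff bound with deviation parameter $\delta=(\eps/p)/\nu>\eps$ — for the upper tail $\probof{\bar X\ge\nu+\eps/p}\le\exp(-\delta^2\nu r/(2+\delta))$, and the lower tail is analogous (and vanishes outright when $\delta\ge 1$) — and check that, whether $\delta\le 2$ or $\delta>2$, the exponent is $\Omega(\eps^2 r/p)=\Omega(\log d)$ for the same $r$. A union bound over the constantly many tail cases, with constants absorbed into $r$, gives the claim.

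The step I expect to be most delicate is the small-mean regime: $\mlf{x}$ or $\dmlf[i]{x}$ may be arbitrarily tiny — possibly exponentially small in $n$ — and then the naive multiplicative bound would require $r=\Omega(R\log d/(\eps^2\mu))$ samples, which is useless. One has to verify that the additive term $\frac{\eps}{p}R$ really does buy enough room, and in the intermediate range where $\delta$ is bounded away from both $0$ and $\infty$ this is exactly where one must use the relative form of the Chernoff bound rather than the additive Hoeffding inequality; Hoeffding would give a deviation bound weaker by a factor of about $p$, which would not close the argument.
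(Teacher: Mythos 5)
Your proof is correct and follows exactly the route the paper intends: the lemma is imported from \citet{cjv-15} without proof, the text noting only that it follows from ``standard Chernoff type concentration inequalities,'' and that is precisely what you carry out --- unbiased sampling estimators for $\mlf{x}$ and $\dmlf[i]{x}$, rescaling to $[0,1]$, and the relative Chernoff bound with the additive $\frac{\eps}{p}M$ (resp.\ $\frac{\eps}{p}M'$) slack handling the small-mean regime. The one point you rightly make explicit, and which is needed for the rescaling step, is that $M'$ must bound the individual marginals $f_S(i)$ (e.g.\ $M' = f(\{i\})$ by submodularity) rather than merely the value $\dmlf[i]{x}$; this matches how the paper actually uses the lemma (with $M' \le \opt$ and $M \le n\,\opt$).
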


Choosing $d = n$ and $p = n$ we can estimate $\dmlf[i]{x}$ and
$\mlf(x)$ to within a $(1\pm \eps)$ multiplicative error, and an
additive error of $\frac{\eps}{n}M'$ and $\frac{\eps}{n}M$
respectively. Via the preprocessing that we already discussed, we can
assume that $M \le n \opt$ and $M' \le \opt$.  For any $x$ such that
$\mlf{x} \ge \frac{\eps}{n} \opt$ we can set $p = n^2/\eps$ to obtain
a $(1+\eps)$-relative approximation. Similarly if $\dmlf[i]{x} \ge
\frac{\eps}{n} \opt$ we can obtain a $(1+\eps)$-relative approximation
by setting $p = n/\eps$.

In some cases an explicit and simple formula for $F$ exists from which
one can evaluate it deterministically and efficiently. A prominent
example is the coverage function of a set system. Let $f$ be defined
via a set system on $n$ sets $A_1,A_2,\ldots,A_n$ over a universe
$\mathcal{U}$ of size $r$ as follows. For $S \subseteq [n]$ we
let $f(S) = \cup_{i \in S} A_i$, the total number of elements covered
by the sets in $S$. It is then easy to see that
$$\mlf{x} = \sum_{e \in \mathcal{U}} \parof{1 - \prod_{i: e \in A_i} (1-x_i)}.$$
Thus, given an explicit representation of the set system,
$\mlf{x}$ and $\dmlf{x}$ can be evaluated efficiently and
deterministically\footnote{We ignore the numerical issues involved in
  the computation. One can approximate the quantities of interest with a
  small additive and multiplicative error via standard tricks.}.

Throughout the paper we assume that $\eps > 0$ is sufficiently
small. We also assume that $\eps > \poly{1/n}$, since otherwise
sequential algorithms already achieve $1/\eps$-adaptivity.

%%% Local Variables:
%%% mode: latex
%%% TeX-master: "submodular-mwu"
%%% End:

\section{Parallel maximization with a cardinality constraint}
\labelsection{cardinality}

We first consider the canonical setting of maximizing the multilinear
extension of a submodular function subject to a cardinality constraint
specified by an integer $k$. The mathematical formulation is below.
\begin{align*}
  \text{maximize } \mlf{x} \text{ over } x \in \nnreals^{\groundset}
  \text{ s.t.\ } \rip{\ones}{x} \leq k.
\end{align*}
This problem was already considered and solved to satisfaction by
\citet{brs-18} and \citet{en-18}. The approach given here is different
(and simple enough), and is based on the \algo{continuous-greedy}
algorithm of \citet{ccpv}, specialized to the cardinality constraint
polytope. Establishing this connection lays the foundation for general
constraints in \refsection{mwu}. That said, there is no formal
dependence between \refsection{mwu} and this section. As the bounds
presented in this section have been obtained in previous work
\cite{brs-18,en-18}, the reader primarily interested in new results
may want to skip ahead to \refsection{mwu}.

\begin{figure}[tb]
  \centering
  \begin{minipage}{.5\paperwidth}
    \begin{framed}
      \ttfamily\raggedright
      \underline{parallel-greedy($f$,$\groundset$,$k$,$\eps$)}
      \begin{steps}
      \item $x \gets \zeroes$
      \item $\lambda \gets \opt$ %
        \commentcode{or any upper bound for $\opt$}
      \item while $\rip{x}{\ones} \leq k$ and
        $\lambda \geq e^{-1} \opt$
        \begin{steps}
        \item \labelstep{pg-good-coordinates} let
          \begin{math}
            S = \setof{ %
              j \in \groundset %
              \where %
              F'_j(x) \geq \frac{\epsless \lambda}{k} %
            }%
          \end{math}
        \item \labelstep{pg-inc-loop} while $S$ is not empty and
          $\rip{x}{\ones} \leq k$
          \begin{steps}
          \item chose $\delta$ maximal s.t.\
            \begin{steps}
            \item \labelstep{pg-apx-condition}
              \begin{math}
                \mlf{x + \delta S}{x} \geq \epsless^2 \lambda
                \frac{\delta \sizeof{S}}{k}
              \end{math}
            \item $\rip{x + \delta S}{\ones} \leq k$
            \end{steps}
          \item \labelstep{pg-inc} $x \gets x + \delta S$
          \item \labelstep{pg-update-S} update $S$
            % \item  \begin{math}
            %     S = \setof{ %
            %     j \in \groundset %
            %     \where %
            %     F'_j(x) \geq \frac{\epsless \lambda}{k} %
            %   }%
            %   \end{math}
          \end{steps}
        \item \labelstep{pg-threshold}
          \begin{math}
            \lambda \gets \epsless \lambda
          \end{math}
        \end{steps}
      \item return $x$
      \end{steps}
      \caption{A parallel implementation of the
        \algo{continuous-greedy} algorithm specialized to the
        cardinality polytope.\labelfigure{parallel-greedy}}
    \end{framed}
  \end{minipage}
\end{figure}

We propose the algorithm \algo{parallel-greedy}, given in
\reffigure{parallel-greedy}. It is a straightforward parallelization
of the original \algo{continuous-greedy} algorithm due to
\citet{ccpv}, specialized to the cardinality polytope.
\algo{continuous-greedy} is an iterative and monotonic algorithm that,
in each iteration, computes the gradient $\dmlf{x}$ and finds the
point $v$ in the constraint polytope that maximizes
$\rip{\dmlf{x}}{v}$.  In the case of the cardinality polytope, $v$ is
$e_j$ for the coordinate $j = \argmax_h \dmlf[h]{x}$ with the largest
gradient.  \algo{continuous-greedy} then adds $\delta e_j$ to $x$ for
a fixed and conservative step size $\delta > 0$. The new algorithm
\algo{parallel-greedy} makes two changes to this algorithm. First,
rather than increase $x$ along the single best coordinate, we identify
all ``good'' coordinates with gradient values nearly as large as the
best coordinate, and increase along all of these coordinates
uniformly.  Second, rather than increase $x$ along these coordinates
by a fixed increment, we choose $\delta$ dynamically. In particularly,
we greedily choose $\delta$ as large as possible such that, after
updating $x$ and thereby decreasing the gradient coordinatewise,
the set of good coordinates is still nearly as good on average.

The dynamic choice of $\delta$ accounts for the fact that increasing
multiple coordinates simultaneously can affect their gradients.  The
importance of greedily choosing the step size is to geometrically
decrease the number of good coordinates. It is shown below (in
\reflemma{pg-depth}) that, when the many good coordinates are no
longer nearly-good on average, a substantial fraction of these
coordinates are no longer good.  When there are no nearly-good
coordinates remaining, the threshold for ``good'' decreases. The
threshold can decrease only so much before we can conclude that the
current solution $x$ cannot be improved substantially and obtains the
desired approximation ratio. Thus \algo{parallel-greedy} takes a
primal-dual approach equally concerned with maximizing the objective
as driving down the gradient.

We first assume oracle access to values $\mlf{x}$ and gradients
$\dmlf{x}$. The algorithm and analysis immediate extends to
approximate oracles that return relative approximation to these
quantities. Such oracles do exist (and are readily parallelizable) for
many real submodular functions of interest. Given oracle access to
$f$, one can implement sufficiently accurate oracles to $\mlf{x}$ and
$\dmlf{x}$ without increasing the depth but with many more oracle
calls to $f$. In \refsection{cardinality-oracle}, we present a
randomized discretization of \algo{parallel-greedy} that improves the
oracle compliexity w/r/t $f$. Note that the algorithms in
\citep{brs-18,en-18} call $f$ directly and do not assume oracle access
to $\mlf$ or $\dmlf$.

\subsection{Approximation ratio}
We first analyze the approximation ratio of the output solution
$\xout$. The main observation is that $\lambda$ is an upper bound on
the gap $\opt - \mlf{x}$.
\begin{lemma}
  \labellemma{pg-threshold}
  At any point, we have $\lambda \geq \opt - \mlf{x}$.
\end{lemma}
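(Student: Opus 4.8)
The plan is to track the quantity $\lambda$ through the algorithm and show it never drops below $\opt - \mlf{x}$, by an induction on the steps of \algo{parallel-greedy}. Initially $\lambda = \opt$ (or any upper bound on $\opt$) and $\mlf{x} = \mlf{\zeroes} \geq 0$, so the claim $\lambda \geq \opt - \mlf{x}$ holds at the start. There are two kinds of events that change either side of the inequality: (i) an increment $x \gets x + \delta S$ in \refstep{pg-inc}, which keeps $\lambda$ fixed but increases $\mlf{x}$, hence only makes the inequality easier; and (ii) a threshold update $\lambda \gets \epsless \lambda$ in \refstep{pg-threshold}, which decreases the left side while $\mlf{x}$ is unchanged. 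So the entire content of the lemma is in case (ii): I must show that when the inner loop exits and we are about to divide $\lambda$ by a $(1-\eps)^{-1}$ factor (i.e.\ multiply by $\epsless$), we still have $\epsless \lambda \geq \opt - \mlf{x}$.

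The key step is the following. When the inner \refstep{pg-inc-loop} terminates with $\rip{x}{\ones} \leq k$, it must be because $S$ became empty, i.e.\ every coordinate $j$ has $\dmlf[j]{x} < \frac{\epsless\lambda}{k}$ (this is exactly the defining condition of $S$ in \refstep{pg-good-coordinates}, which is maintained by \refstep{pg-update-S}). In other words, the current gradient satisfies $\infnorm{\dmlf{x}} < \frac{\epsless\lambda}{k}$. Now I invoke monotone concavity of $F$ (part of the \reflemma{} on multilinear extensions): let $x^* $ be an optimal solution, so $\rip{x^*}{\ones} \leq k$ and $\mlf{x^*} = \opt$. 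Concavity of $t \mapsto \mlf{x + t(x^* - x)}$ along the segment from $x$ to $x^*$ gives
\begin{align*}
  \opt - \mlf{x} = \mlf{x^*} - \mlf{x} \leq \rip{\dmlf{x}}{x^* - x} \leq \rip{\dmlf{x}}{x^* \lor x} \leq \rip{\dmlf{x}}{\ones} \cdot \max_j x^*_j \le \ldots
\end{align*}
— more carefully, since $\dmlf{x} \geq \zeroes$ by monotonicity and $x^* \geq \zeroes$, we have $\rip{\dmlf{x}}{x^* - x} \leq \rip{\dmlf{x}}{x^*} \leq \infnorm{\dmlf{x}} \cdot \rip{\ones}{x^*} \leq \infnorm{\dmlf{x}} \cdot k < \epsless \lambda$. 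Hence $\opt - \mlf{x} < \epsless \lambda$, which is precisely what is needed before the update $\lambda \gets \epsless\lambda$ takes effect, so the invariant is restored.

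The one subtlety — and the step I expect to need the most care — is the termination of the inner loop via the \emph{other} exit condition, namely $\rip{x}{\ones} \leq k$ failing (the budget being exhausted rather than $S$ emptying). In that case we cannot conclude $\infnorm{\dmlf{x}}$ is small, but then the outer \textbf{while} loop condition $\rip{x}{\ones} \leq k$ also fails, so the algorithm terminates \emph{without} ever executing \refstep{pg-threshold} again, and $\lambda$ keeps its current value; since that value already satisfied $\lambda \geq \opt - \mlf{x}$ by the induction hypothesis (and $\mlf{x}$ only grew during the increments), the invariant still holds at the return. I would also note that the step-size choice in \refstep{pg-apx-condition}–(b) guarantees each increment stays within the budget, so $\mlf{x}$ is only ever increasing and the argument for case (i) is unconditional. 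Putting the two cases together closes the induction and proves $\lambda \geq \opt - \mlf{x}$ at every point of the execution.
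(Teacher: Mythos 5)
Your overall strategy is the same as the paper's: maintain the invariant by induction, note that increments of $x$ only help because $\mlf$ is monotone, and argue that a threshold update $\lambda \gets \epsless \lambda$ happens only when $S$ is empty, at which point $\infnorm{\dmlf{x}} < \epsless \lambda / k$ and hence $\rip{\dmlf{x}}{x^*} \leq \epsless\lambda$ for an optimum $x^*$ with $\rip{\ones}{x^*} \leq k$. Your treatment of the budget-exhaustion exit (no further threshold update, so the old invariant persists) is also exactly the paper's parenthetical remark.

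There is, however, one genuinely flawed step: you justify $\mlf{x^*} - \mlf{x} \leq \rip{\dmlf{x}}{x^* - x}$ by ``concavity of $t \mapsto \mlf{x + t(x^* - x)}$ along the segment from $x$ to $x^*$.'' The multilinear extension is not concave; it is only concave along \emph{nonnegative} directions $d \geq \zeroes$ (the ``monotone concavity'' property in the paper), and the direction $x^* - x$ generally has negative coordinates once the algorithm has increased some coordinates beyond their values in $x^*$. The first-order inequality you invoke can fail outright: for $f(S) = \min\parof{\sizeof{S},1}$ on two elements, $\mlf{x} = x_1 + x_2 - x_1 x_2$, and at $x = (1,0)$, $x^* = (0,1)$ one has $\mlf{x^*} - \mlf{x} = 0$ while $\rip{\dmlf{x}}{x^* - x} = -1$. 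The repair is the detour the paper takes: by monotonicity, $\opt - \mlf{x} \leq \mlf{x^* \lor x} - \mlf{x}$; then monotone concavity applied along the nonnegative direction $x^* \lor x - x \geq \zeroes$ gives $\mlf{x^* \lor x} - \mlf{x} \leq \rip{\dmlf{x}}{x^* \lor x - x} \leq \rip{\dmlf{x}}{x^*}$, using $\dmlf{x} \geq \zeroes$ and $x^* \lor x - x \leq x^*$. With that substitution, the rest of your chain ($\leq \infnorm{\dmlf{x}}\, \rip{\ones}{x^*} \leq \epsless\lambda$ when $S$ is empty) is correct and coincides with the paper's argument.
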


\begin{proof}
  The claim holds initially. Whenever $x$ is increased,
  $\opt - \mlf{x}$ decreases since $\mlf{x}$ is monotone, and hence
  the claim continues to hold. Whenever $\lambda$ is about to be
  decreased in \refstep{pg-threshold}, we have $S$ empty (or the
  algorithm terminates since $\rip{x}{\ones} = k$) with respect
  to the current value of $\lambda$. Thus, if $z$ is an optimal
  solution
  then we have
  \begin{align*}
    \opt - \mlf{x}               %
    &\tago{\leq}                     %
      \mlf{z}{x}              %
      \tago{\leq}                    %
      \rip{\dmlf{x}}{z \lor x - x}
      \tago{\leq}
      \rip{\dmlf{x}}{z}
      \tago{\leq}                    %
      \epsless \frac{\lambda}{k} \rip{z}{\ones}
      \tago{\leq}                    %
      \epsless \lambda
  \end{align*}
  by \tagr monotonicity of $F$, \tagr monotonic concavity of $F$,
  \tagr monotonicity of $F$ (implying $\dmlf{x} \geq \zeroes$) and
  $z \lor x - x \leq z$, \tagr emptiness of $S$ w/r/t $\lambda$, and
  \tagr the fact that
  \begin{math}
    \rip{z}{\ones} \leq k.
  \end{math}
\end{proof}
The connection between $\lambda$ and $\opt - \mlf{x}$ allows us to
reinterpret \refstep{pg-apx-condition} as saying that we are closing
the objective gap at a good rate in proportion to the increase in the
(fractional) cardinality of $x$. This is the basic invariant in
standard analyses of the greedy algorithm that implies that greedy
achieves a (near) $\parof{1 - e^{-1}}$-approximation, as follows.
\begin{lemma}
  The output $\xout$ satisfies
  \begin{math}
    \mlf{\xout} \geq \apxless \parof{1 - e^{-1}} \opt.
  \end{math}
\end{lemma}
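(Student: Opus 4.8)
The plan is to run the standard continuous-greedy potential argument, using the fractional cardinality $\tau := \rip{x}{\ones}$ (which increases from $0$ to at most $k$ over the run) as the progress parameter, and tracking how the optimality gap $\opt - \mlf{x}$ shrinks as $\tau$ grows.

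The one substantive step is the per-increment inequality: each time \refstep{pg-inc} replaces $x$ by $x+\delta S$, writing $\Delta\tau := \delta\sizeof{S} = \rip{\delta S}{\ones}$ for the corresponding increase of $\tau$, one has $\mlf{x+\delta S} - \mlf{x} \ge \epsless^2\,\lambda\,\Delta\tau/k$. This resembles \refstep{pg-apx-condition}(i), but it must be checked that the $\delta$ the algorithm commits to actually satisfies (i) rather than merely being constrained by it. I would argue this from monotone concavity of $\mlf$: the map $\delta \mapsto \mlf{x+\delta S} - \mlf{x}$ is concave, vanishes at $\delta = 0$ with initial slope $\rip{\dmlf{x}}{S} \ge \sizeof{S}\,\epsless\lambda/k$ (each $j \in S$ has $\dmlf[j]{x} \ge \epsless\lambda/k$ by \refstep{pg-good-coordinates}), whereas the right-hand side of (i) is linear in $\delta$ with the strictly smaller slope $\epsless^2\lambda\sizeof{S}/k$. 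Hence (i) holds for all small $\delta > 0$ and can fail only beyond a unique crossing point $\delta^\ast > 0$; the $\delta$ chosen equals $\min(\delta^\ast, \delta_{\mathrm{bud}})$, where $\delta_{\mathrm{bud}}$ is the largest increment permitted by the budget constraint (ii), and at that value (i) holds (with equality if $\delta = \delta^\ast$, with slack if (ii) binds). In particular $\delta > 0$, so every pass of the inner loop makes genuine progress.

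Next I would invoke \reflemma{pg-threshold}, which gives $\lambda \ge \opt - \mlf{x}$ at all times, to upgrade the per-increment bound to $\opt - \mlf{x+\delta S} \le (\opt - \mlf{x})(1 - \epsless^2\Delta\tau/k)$. Multiplying over all increments, using $1 - z \le e^{-z}$ and $\mlf{\zeroes} = 0$, gives the invariant $\opt - \mlf{x} \le \opt \cdot e^{-\epsless^2\,\rip{x}{\ones}/k}$ throughout, in particular for the returned $\xout$. Finally I would split on why the outer loop halted: if $\rip{\xout}{\ones} = k$, the invariant gives $\mlf{\xout} \ge (1 - e^{-\epsless^2})\opt \ge \apxless(1 - e^{-1})\opt$ for small $\eps$; otherwise the loop exited because $\lambda$ dropped below $e^{-1}\opt$ in \refstep{pg-threshold}, and since \reflemma{pg-threshold} still holds immediately after that decrement, $\opt - \mlf{\xout} \le \lambda < e^{-1}\opt$, which is even stronger. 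Either way $\mlf{\xout} \ge \apxless(1 - e^{-1})\opt$.

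The main obstacle is the per-increment inequality — certifying that the greedily chosen $\delta$ really satisfies \refstep{pg-apx-condition}(i), including the boundary iteration where the cardinality budget (ii) is the binding constraint. Once that is established the remainder is the textbook greedy recurrence.
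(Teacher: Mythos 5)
Your proposal is correct and follows essentially the same route as the paper: the per-step condition of \refstep{pg-apx-condition} combined with \reflemma{pg-threshold} yields the greedy recurrence in the progress variable $t=\rip{x}{\ones}$ (the paper phrases it as the differential inequality $\frac{d\mlf{x}}{dt}\ge \frac{\epsless^2}{k}(\opt-\mlf{x})$, you as an equivalent telescoping product with $1-z\le e^{-z}$), followed by the same case split on termination ($t=k$ versus $\lambda\le e^{-1}\opt$). Your extra verification that the committed $\delta$ satisfies condition (i) is sound but not really needed, since the algorithm by definition only commits to $\delta$ satisfying \refstep{pg-apx-condition}; what your concavity argument additionally shows is that such a $\delta>0$ exists, which is relevant to the iteration-count analysis rather than to this lemma.
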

\begin{proof}
  Let $t = \sum_i x_i$ be the total sum of the coordinates.
  From the preceding lemma and the choice of $\delta$ in the
  algorithm, we have
  \begin{math}
    \mlf{x + \delta S}{x} \geq \epsless^2 \frac{\delta
      \sizeof{S}}{k} \parof{\opt - \mlf{x}}
  \end{math}
  in \refstep{pg-apx-condition}, hence
  \begin{align*}
    \frac{d \mlf{x}}{d t} \geq \frac{\epsless^2}{k}
    \parof{\opt - \mlf{x}},
  \end{align*}
  hence
  \begin{align*}
    \mlf{x} \geq \parof{1 - \exp{-\epsless^2 t / k}}\opt.
  \end{align*}
  In particular, if $t = \rip{\ones}{x} = k$ at the end of the algorithm,
  we have
  \begin{align*}
    \mlf{x} \geq \apxless \parof{1 - e^{-1}} \opt.
  \end{align*}
  If $\lambda \leq e^{-1} \opt$, then
  $\mlf{x} \geq \parof{1 - e^{-1}} \opt$.
  In either case, the output $\xout$ satisfies
  \begin{math}
    \mlf{\xout} \geq \apxless \parof{1 - e^{-1}} \opt.
  \end{math}
\end{proof}

\subsection{Iteration count}
We now analyze the iteration count of \algo{parallel-greedy}. The key
observation lies in line \refstep{pg-apx-condition}. If $\delta$ is
determined by line \refstep{pg-apx-condition}, then the margin of
taking $S$ uniformly has dropped significantly. In this case, as the
next lemma shows, a significant fraction of the coordinates in $S$
must have had their marginal returns decrease enough to force them to
drop out of $S$. The iteration can then be charged to the geometric
decrease in $\sizeof{S}$.

\begin{lemma}
  \labellemma{pg-depth}
  If
  \begin{math}
    \mlf{x + \delta S}{x} = \epsless^2 \lambda \frac{\delta \sizeof{S}}{k},
  \end{math}
  then the step \refstep{pg-update-S} decreases $\sizeof{S}$ by at least a
  $\epsless$-multiplicative factor. This implies that, for fixed
  $\lambda$, the loop at \refstep{pg-inc-loop} iterates at most
  $\bigO{\frac{\log n}{\eps}}$ times, and at most
  $\bigO{\frac{\log n}{\eps^2}}$ times total. That is, each step in
  $\algo{greedy}$ iterates at most $\bigO{\frac{\log n}{\eps^2}}$
  times.
\end{lemma}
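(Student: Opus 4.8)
The plan is to derive the per‑step shrinkage of $S$ directly from the monotone concavity of $\mlf$, and then to read off both iteration bounds from a geometric sequence in $\sizeof{S}$ followed by a geometric sequence in $\lambda$. The lemma is short; the only places that need care are the single displayed inequality below and some bookkeeping about the passes of \refstep{pg-inc-loop} that are \emph{not} governed by \refstep{pg-apx-condition}.

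\textbf{Per-step shrinkage.} I would fix an iteration of \refstep{pg-inc-loop} in which $\delta$ is set by \refstep{pg-apx-condition}, i.e.\ $\mlf{x + \delta S}{x} = \epsless^2 \lambda \frac{\delta \sizeof{S}}{k}$, write $y = x + \delta S$ for the point of \refstep{pg-inc}, and write $\Sout$ for the set after \refstep{pg-update-S}. First, $\Sout \subseteq S$: since $\dmlf$ is coordinatewise non-increasing as $x$ grows (submodularity of $\mlf$), any coordinate below threshold at $x$ is still below threshold at $y$. Next I would upper bound $\sum_{j\in S}\dmlf[j]{y}$. By monotone concavity of $\mlf$, the derivative of $t\mapsto\mlf{x+tS}$, namely $\sum_{j\in S}\dmlf[j]{x+tS}$, is non-increasing in $t$, so
\begin{align*}
  \epsless^2 \lambda \frac{\delta \sizeof{S}}{k}
  = \mlf{x + \delta S}{x}
  &= \int_0^{\delta}\sum_{j\in S}\dmlf[j]{x + t S}\,dt \\
  &\ge \delta\sum_{j\in S}\dmlf[j]{y},
\end{align*}
and cancelling $\delta$ gives $\sum_{j\in S}\dmlf[j]{y}\le\epsless^2\lambda\sizeof{S}/k$. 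On the other hand, by the definition of $S$ in \refstep{pg-good-coordinates} every $j\in\Sout$ has $\dmlf[j]{y}\ge\epsless\lambda/k$, while the remaining $j\in S$ have $\dmlf[j]{y}\ge 0$ (monotonicity), so
\begin{align*}
  \sizeof{\Sout}\cdot\frac{\epsless\lambda}{k}
  &\le \sum_{j\in\Sout}\dmlf[j]{y}
  \le \sum_{j\in S}\dmlf[j]{y} \\
  &\le \epsless^2\lambda\frac{\sizeof{S}}{k}.
\end{align*}
Dividing yields $\sizeof{\Sout}\le\epsless\sizeof{S}$, the first assertion of the lemma.

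\textbf{Iteration counts.} For a fixed $\lambda$, in each pass of \refstep{pg-inc-loop} the step size $\delta$ is limited either by \refstep{pg-apx-condition} or by the budget $\rip{x+\delta S}{\ones}\le k$; in the latter case $\rip{x}{\ones}$ has reached $k$ and no coordinate can be increased further, so it affects at most the final pass. In every other pass the previous paragraph applies and $\sizeof{S}$ drops by a factor $\le\epsless$. Since $1\le\sizeof{S}\le n$ while the loop runs, this can happen at most $\log_{1/\epsless}n=\bigO{\frac{\log n}{\eps}}$ times before $S$ empties, so \refstep{pg-inc-loop} runs $\bigO{\frac{\log n}{\eps}}$ times per value of $\lambda$. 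For the total, \refstep{pg-threshold} scales $\lambda$ by $\epsless$ per outer iteration and the algorithm stops once $\lambda<e^{-1}\opt$, so, starting from $\lambda$ within a constant factor of $\opt$, there are $\bigO{\frac{1}{\eps}}$ outer iterations, and multiplying gives $\bigO{\frac{\log n}{\eps^2}}$ passes of \refstep{pg-inc-loop} overall.

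\textbf{Where care is needed.} The displayed inequality is essentially the whole proof, and the one subtlety in it is differentiability of the truncated extension $\mlf$ along $x+tS$: a coordinate reaching $1$ creates a kink. This is harmless, since $\mlf$ is concave along every non-negative ray and its one-sided derivative equals $\sum_{j\in S}\dmlf[j]{\cdot}$ under the convention $\dmlf[j]{x}=\mlf{j}{x\setminus j}$, which only drops further at a kink; one treats a coordinate that has reached $1$ as having left $S$, consistently with the threshold rule. Two smaller points finish the count: the budget-limited pass is a single lower-order step, and the clean $\bigO{\frac{1}{\eps}}$ bound on the number of $\lambda$-phases uses $\lambda$ initialized within a constant factor of $\opt$ (rather than the crude bound $\sum_j f(j)$), which is the standard device of guessing $\opt$ up to a factor of two and running the $\bigO{\log n}$ guesses in parallel.
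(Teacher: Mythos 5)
Your proposal is correct and follows essentially the same route as the paper's proof: the chain $\epsless^2\lambda\,\delta\sizeof{S}/k = \mlf{x+\delta S}{x} \ge \delta\sum_{j\in S}\dmlf[j]{x+\delta S} \ge \delta\sum_{j\in \Sout}\dmlf[j]{x+\delta S} \ge \epsless\lambda\,\delta\sizeof{\Sout}/k$ is exactly the paper's derivation, with your integral merely spelling out the monotone-concavity step the paper invokes directly. The iteration-count bookkeeping (one budget-limited pass, $\bigO{1/\eps}$ values of $\lambda$ given a constant-factor estimate of $\opt$) also matches the paper's argument and its remarks on initializing $\lambda$.
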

\begin{proof}
  Let $x'$ and $S'$ denote the values of $x$ and $S$ before
  updating, and let $x''$ and $S''$ denote the values of $x$ and $S$
  after. We want to show that
  \begin{math}
    \sizeof{S''} \leq \epsless \sizeof{S'}.
  \end{math}
  We have
  \begin{align*}
    \epsless^2 \lambda \frac{\delta \sizeof{S'}}{k}
    &\tago{=}                         %
      \mlf{x''}{x}
      \geq                      %
      \rip{\dmlf{x''}}{\delta S'}
      \tago{\geq}
      \rip{\dmlf{x''}}{\delta S''}
      \tago{\geq} \epsless \lambda \frac{\delta \sizeof{S''}}{k}.
      \labelthisequation{pg-depth-derivation} %
  \end{align*}
  by \tagr choice of $\delta$, \tagr monotonicity, and \tagr
  definition of $S''$.  Dividing both sides by $\epsless \lambda$, we
  have
  \begin{math}
    \sizeof{S''} \leq \epsless \sizeof{S'}.
  \end{math}
\end{proof}

One implementation detail is finding $\delta$ in the inner loop.  We
can assume that $\poly{\eps / n} \leq \delta \leq k$ (since below
$\poly{\eps/n}$, the gradient $\dmlf{x}$ does not change
substantially). It is easy to see that a (say)
$\parof{1 + \eps / 2}$-multiplicative approximation of the exact value
of $\delta$ suffices. (A more detailed discussion of approximating
$\delta$ is in the more subtle setting of generic packing constraints
is given later in \refsection{pmwu-work}). Hence we can try all
$\bigO{\log{n} / \eps}$ powers of $\epsmore$ between $\poly{\eps/n}$
and 1 to find a sufficiently good approximation of $\delta$. A second
implementation detail regards to initial value of $\lambda$ for upper
bounding $\opt$. Standard tricks allow us to obtain a constant factor
without increasing the depth; see the related discussing w/r/t general
packing constraints in \refsection{pmwu-work}.

% \begin{remark}
%   One might try to extend \algo{parallel-greedy} to non-monotone (but
%   still nonnegative) submodular functions. Here the difficulty is not
%   so much in adapting \reflemma{pg-threshold}, as similar
%   technicalities have been overcome in the sequential setting. The
%   difficulty lies in \reflemma{pg-depth}; specifically, we invoke
%   monotonicity in step (2) of the derivation. The basic packing
%   argument is that if the gradient is bad on average, and all the
%   coordinates are nonnegative, then many of the coordinates should be
%   pretty bad. If we remove monotonicity, then the gradient may be bad
%   just because a few coordinates become negative.
% \end{remark}

\subsection{Oracle complexity w/r/t $f$}

\labelsection{cardinality-oracle}

The preceding algorithm and analysis were presented under the
assumption that gradients of the multilinear extension $\mlf$ were
easy to compute (at least, approximately). This assumption holds for
many applications of interest. In this section, we consider a model
where we only have oracle access to the underlying set function $f$.

We first note that $\mlf{x}$ and $\dmlf{x}$ can still be approximated
(to sufficient accuracy) by taking the average of $f(Q)$ for many
random samples $Q \sim x$. To obtain $\epspm$-accuracy with high
probability for either $\mlf{x}$ or a single coordinate of $\dmlf{x}$,
one requires about $\bigO{\frac{n \log n}{\eps^2}}$ samples, each of
which may be computed in parallel (see \reflemma{dmlf-sample}). Thus
\algo{parallel-greedy} still has $\apxO{\frac{1}{\eps^2}}$ depth in
this model. However, the total number of queries to $f$ increases to
$\apxO{n^2 \poly{1/\eps}}$, because computing an entire gradient to
assemble $S$ in line \refstep{pg-good-coordinates} requires
$\apxO{\frac{n^2}{\eps^2}}$ queries to $f$.
\begin{figure}[t]
  \centering
  \begin{minipage}{.5\paperwidth}
    \begin{framed}
      \ttfamily\raggedright
      \underline{randomized-parallel-greedy($f$, $\groundset$, $k$, $\eps$)}
      \begin{steps}
      \item $Q \gets \emptyset$, $t \gets 0$, $\lambda \gets \opt$ %
        \commentcode{or any $\lambda \geq \opt$}
      \item while $t \leq (1 - 2 \eps) k$ and $\lambda \geq e^{-1} \opt$
        \begin{steps}
        \item let
          \begin{math}
            S = \setof{ %
              j \in \groundset %
              \where %
              f_Q(j) \geq \frac{\epsless \lambda}{k} %
            }%
          \end{math}
        \item \labelstep{rpg-inc-loop} while $S$ is not empty and
          $t \leq (1 - 2 \eps) k$
          \begin{steps}
          \item chose $\delta$ maximal s.t.\
            \begin{steps}
            \item \labelstep{pg-apx-condition}
              \begin{math}
                \mlf{Q + \delta S}{Q} \geq \epsless^2 \lambda
                \frac{\delta \sizeof{S}}{k}
              \end{math}
            \item $t + \delta \sizeof{S} \leq (1 - 2\eps) k$
            \end{steps}
          \item \labelstep{rpg-sample} sample $R \sim \delta S$
          \item \labelstep{rpg-inc} $Q \gets Q \cup R$,
            $t \gets t + \delta \sizeof{S}$
          \item update $S$
          \end{steps}
        \item \labelstep{pg-threshold}
          \begin{math}
            \lambda \gets \epsless \lambda
          \end{math}
        \end{steps}
      \item return $Q$
      \end{steps}
      \caption{A randomized, combinatorial variant of the previous
        \algo{parallel-greedy} algorithm for cardinality
        constraints.\labelfigure{randomized-parallel-greedy}}
    \end{framed}
  \end{minipage}
\end{figure}

To reduce the oracle complexity w/r/t $f$, we propose the alternative
algorithm \algo{randomized-parallel-greedy} in
\reffigure{randomized-parallel-greedy}, which is guided by the
previous \algo{parallel-greedy} algorithm, but maintains a discrete
set $Q \subseteq \groundset$ rather than a fractional solution
$x$. The primary difference is in steps \refstep{rpg-sample} and
\refstep{rpg-inc}, where rather than add the fractional solution
$\delta S$ to our solution, we first sample a set $R \sim \delta S$
(where each coordinate in $S$ is drawn independently with probability
$\delta$), and then we add $R$ to the running solution. The primary
benefit to this rounding step is that computing the gradient
$\dmlf{x}$ is replaced by computing the margins $f_Q$, which requires
only a constant number of oracle calls per element.

We defer the analysis of \algo{randomized-parallel-greedy} to
\refappendix{rpg}. At a high level, one can see that the key points to
the analysis of \algo{parallel-greedy} now hold in
expectation. Further techniques from randomized analysis adapt the
essential invariants from \algo{parallel-greedy} to the additional
randomization to obtain the following bounds.

\begin{theorem}
  \labeltheorem{rpg}
  Let $\eps > 0$ be given, let $f: \groundsets \to \nnreals$ be a
  normalized, monotone submodular function in the oracle model, and
  let $k \in \naturalnumbers$. Then with high probability,
  \algo{randomized-parallel-greedy} computes a $\epsless\parof{1 -
    e^{-1}}$ multiplicative approximation to the maximum value set of
  cardinality $k$ with
  \begin{math}
    \bigO{\frac{\log n}{\eps^2}}
  \end{math}
  expected adaptivity and
  \begin{math}
    \apxO{\frac{n}{\eps^4}}
  \end{math}
  expected oracle calls to $f$.
\end{theorem}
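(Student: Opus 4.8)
The plan is to replay the three-part analysis of \algo{parallel-greedy}---approximation, iteration count, and oracle cost---but with each deterministic quantity replaced by its conditional expectation over the random sampling in \refstep{rpg-sample}, and then to upgrade the in-expectation guarantees to the high-probability statements claimed. The one identity that makes this possible is that, since every $j \in S$ has positive marginal $f_Q(j)$ and hence $S \cap Q = \emptyset$, drawing $R \sim \delta S$ gives $\mathbb{E}[f(Q \cup R)] = \mlf{Q + \delta S}$ by definition of the multilinear extension; consequently $\mathbb{E}[f_Q(R) \mid Q,S,\delta] = \mlf{Q + \delta S}{Q}$, so \refstep{pg-apx-condition} in \reffigure{randomized-parallel-greedy} says precisely that, conditioned on the current state, the expected one-step gain is at least $\epsless^2 \lambda \tfrac{\delta \sizeof{S}}{k}$, while the counter $t$ increases deterministically by $\delta \sizeof{S}$.

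First I would establish the analogue of \reflemma{pg-threshold}: at every point $\lambda \ge \opt - f(Q)$, and this holds pathwise, exactly as before---$f(Q)$ is nondecreasing along any execution, and whenever $\lambda$ is multiplied by $\epsless$ the set $S$ is empty with respect to the current $\lambda$, so submodularity gives $\opt - f(Q) \le f_Q(z) \le \sum_{j \in z} f_Q(j) < \epsless \lambda$ for an optimal set (or fractional point) $z$ with $\rip{z}{\ones} \le k$. Substituting $\opt - f(Q)$ for $\lambda$ in the one-step bound yields $\mathbb{E}[\opt - f(Q') \mid \mathcal{F}] \le \parof{1 - \epsless^2 \tfrac{\delta \sizeof{S}}{k}}\parof{\opt - f(Q)}$ for $Q' = Q \cup R$. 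Since $\delta$ and $\sizeof{S}$ are fixed before $R$ is drawn, using $1 - x \le e^{-x}$ together with $t \mapsto t + \delta\sizeof{S}$, the quantity $\Psi = \parof{\opt - f(Q)}\exp\parof{\epsless^2 t / k}$ satisfies $\mathbb{E}[\Psi' \mid \mathcal{F}] \le \Psi$; telescoping over the inner iterations gives $\mathbb{E}[\opt - f(\Qout)] \le \opt\exp\parof{-\epsless^2 t_{\mathrm{final}}/k}$. If the algorithm stops because $t_{\mathrm{final}} = (1-2\eps)k$ this is $\opt\,e^{-1+O(\eps)}$, so $\mathbb{E}[f(\Qout)] \ge \apxless\parof{1 - e^{-1}}\opt$; if instead it stops because $\lambda < e^{-1}\opt$, the pathwise invariant already gives $f(\Qout) > \parof{1 - e^{-1}}\opt$.

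It then remains to make the guarantees hold with high probability rather than in expectation. For feasibility this is routine: $\sizeof{\Qout} = \sum_i \sizeof{R_i}$ is a sum of indicators that are independent conditioned on the history, with conditional mean $\sum_i \delta_i\sizeof{S_i} = t_{\mathrm{final}} \le (1-2\eps)k$, so a Chernoff/Azuma bound gives $\sizeof{\Qout} \le k$ with high probability---the $2\eps$ slack in the loop guard is exactly the room this needs, and the regime of tiny $k$ is excluded by the standing assumption $\eps > \poly{1/n}$. \emph{Upgrading the approximation ratio to a high-probability bound is the main obstacle.} Writing $f(\Qout) = \sum_i f_{Q_{i-1}}(R_i)$ as a sum of nonnegative terms whose conditional means sum to $\approx \parof{1 - e^{-1}}\opt$, a single term $f_{Q_{i-1}}(R_i)$ can be as large as a single marginal---possibly $\Theta(\opt)$---so the sum is heavy-tailed and a direct bounded-differences inequality loses far too much. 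The approach I would take is a Bernstein-type (self-bounding) concentration in which the deviation is controlled by (largest per-step gain)$\,\times\,$(expected total gain), together with a separate, coarser accounting of the few steps that realize an unusually large gain, or, equivalently, a tailored exponential supermartingale on the shrinkage factors $X_i = \tfrac{\opt - f(Q_i)}{\opt - f(Q_{i-1})} \in [0,1]$; one also has to confirm that the sampling errors incurred in selecting $\delta$ degrade the ratio by only $O(\eps)$. Making this go through is the crux of the theorem.

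Finally, the resource bounds (claimed in expectation). For the iteration count I would follow \reflemma{pg-depth}: when $\delta$ is limited by \refstep{pg-apx-condition} we have $\mlf{Q'}{Q} = \epsless^2 \lambda \tfrac{\delta \sizeof{S}}{k}$, and monotone concavity gives $\mlf{Q'}{Q} \ge \rip{\dmlf{Q + \delta S}}{\delta S} = \delta\sum_{j \in S}\dmlf[j]{Q + \delta S}$; since $\dmlf[j]{Q + \delta S} = \mathbb{E}_R\bigl[\dmlf[j]{Q \cup R}\bigr] = \mathbb{E}_R[f_{Q \cup R}(j)]$ by multilinearity (the $j$th partial derivative is independent of the $j$th coordinate) and every $j$ in the updated $S''$ has $f_{Q \cup R}(j) \ge \epsless\lambda/k$, we get $\mathbb{E}[\sizeof{S''} \mid \mathcal{F}] \le \epsless\sizeof{S}$. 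By Jensen this gives $\mathbb{E}[\log\sizeof{S''} \mid \mathcal{F}] \le \log\sizeof{S} - \Omega(\eps)$, so an optional-stopping argument bounds the expected number of such iterations per value of $\lambda$ by $\bigO{\log n/\eps}$, and over the $\bigO{1/\eps}$ values of $\lambda$ this is $\bigO{\log n/\eps^2}$ (the remaining iterations---where $\delta$ is capped---either exhaust the budget or drive all of $S$ into $Q$ and are charged accordingly). For the oracle cost, each inner iteration recomputes the relevant marginals $f_Q(j)$ with $O(n)$ exact calls to $f$ to form $S$, and selects $\delta$ by binary search over the $\bigO{\log(n/\eps)}$ relevant scales, each trial requiring an $\epspm$-accurate sampled estimate of $\mlf{Q + \delta S}{Q}$ costing $\apxO{n\poly{1/\eps}}$ calls by \reflemma{dmlf-sample}; multiplying by the $\apxO{1/\eps^2}$ expected iterations yields $\apxO{n/\eps^4}$ expected oracle calls.
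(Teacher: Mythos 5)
Most of your outline coincides with the paper's analysis: the identity $\evof{f_Q(R) \given Q, S, \delta} = \mlf{Q + \delta S}{Q}$, the pathwise invariant $\lambda \geq \opt - f(Q)$, the in-expectation approximation bound (the paper phrases it as a differential inequality in $t$ plus Fubini where you use a discrete supermartingale on $(\opt - f(Q))e^{\epsless^2 t/k}$ --- same content), the online-Chernoff argument giving $\sizeof{Q} \leq \epsmore t \leq k$ with high probability, the conditional decay $\evof{\sizeof{S''}} \leq \epsless \sizeof{S}$ leading to $\bigO{\log n/\eps^2}$ expected iterations (the paper routes the last step through a decay-process concentration lemma, which is essentially your Jensen-on-the-logarithm argument made quantitative), and the accounting of $\apxO{n/\eps^4}$ oracle calls.

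The genuine gap is exactly the step you yourself flag as ``the crux'': the theorem claims the $\epsless\parof{1-e^{-1}}$ approximation holds \emph{with high probability}, and your proposal stops at naming candidate techniques (Bernstein/self-bounding concentration, an exponential supermartingale on the shrinkage factors) without executing any of them, so as written you only deliver the guarantee in expectation. The paper does not use heavy-tailed concentration here at all. Its argument is a short averaging step: by \reflemma{rpg-cardinality}, except with probability $1/\poly{n}$ the output satisfies $\sizeof{Q} \leq k$ and hence $f(Q) \leq \opt$ deterministically, and the negligible failure event contributes at most $n\mu/\poly{n}$ to the mean; combining this almost-sure cap with $\mu = \evof{f(Q)} \geq \apxless\parof{1-e^{-1}}\opt$ in a conditional-expectation (reverse-Markov) decomposition, it concludes $\probof{f(Q) \leq \epsless \mu} \leq 1/\poly{n}$. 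So what is missing from your write-up is not martingale machinery but the use of the deterministic upper bound on $f(Q)$ conditioned on feasibility, which you never invoke. (Your instinct that plain averaging is delicate is not baseless: the paper's displayed decomposition bounds the middle conditional expectation by $\mu$ while only controlling $\probof{f(Q) \geq \opt}$, i.e.\ it implicitly needs the cap at roughly $\mu$ rather than $\opt$, and with the cap at $\opt$ the inequality alone gives only a constant-probability bound. But in any case that averaging argument is the route the paper takes, and your proposal, by its own admission, leaves the high-probability claim of \reftheorem{rpg} unproven.)
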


%%% Local Variables:
%%% mode: latex
%%% TeX-master: "submodular-mwu"
%%% End:

\section{Parallel maximization with packing constraints} %
\labelsection{mwu} %
We now consider the general setting of maximizing the multilinear
relaxation in the setting of explicit packing constraints in the form
below.
\begin{align*}
  \text{maximize } \mlf{x} \text{ over } x \in \groundcube
  \text{ s.t.\ } A x \leq \ones.
\end{align*}

\begin{figure}[tb!]
  \begin{framed}
    \ttfamily\raggedright
    \underline{parallel-mwu-greedy}
    \begin{steps}
    \item for each $j \in [n]$
      \begin{steps}
      \item choose $x_j$ maximal in $[0,1]$ s.t.\
        $A_{ij} x_j \leq \frac{\eps}{n}$ for all $i \in [m]$
      \end{steps}
    \item $t \gets \zeroes$, $\lambda \gets \opt$ \commentcode{or any
        upper bound for $\opt$}
    \item define $\weight = \weight(x)$ by
      \begin{math}
        \weight{i} = \exp{\frac{\log m}{\eps} \parof{A x}_i}
      \end{math}
      for $i \in [m]$
    \item \labelstep{pmwu-loop} while $t < 1$ and
      $\lambda \geq e^{-1} \opt$
      \begin{steps}
      \item $W \gets \rip{\weight}{\ones}$
      \item let
        \begin{math}
          S = \setof{j \in \groundset \where
            \frac{\dmlf[j]{x}}{\ripover{A}{\weight}{e_j}} \geq
            \epsless^3 \frac{\lambda}{W} %
            \andcomma %
            \dmlf[j]{x} \geq \frac{\eps \epsless \lambda}{n} }
        \end{math}
      \item
        \labelstep{pmwu-inc-loop}
        while $S$ is not empty, $\epsless \rip{\weight}{\ones}
        \leq W$, and $t \leq 1$
        \begin{steps}
        \item
          \begin{math}
            \gamma \gets \frac{W}{\pripover{A}{\weight}{x \land S}}
          \end{math}
        \item
          \labelstep{pmwu-greedy-step-size}
          choose $\delta > 0$ large as possible s.t.\ for
          $x' = x + \delta \gamma \parof{x \land S}$
          \begin{steps}
          \item \labelstep{pmwu-gradient-condition}
            \begin{math}
              \mlf{x'}{x} \geq \epsless^4 \delta \lambda.
            \end{math}
          \item \labelstep{pmwu-weight-condition}
            \begin{math}
              \gamma \delta \leq \frac{\eps^2}{4 \log m}
            \end{math}
          \item
            \begin{math}
              t + \delta \leq 1.
            \end{math}
          \end{steps}
        \item \labelstep{pmwu-inc}
          $x \gets x + \delta \gamma (x \land S)$ \commentcode{update $x$}
        \item $t \gets t + \delta$ \commentcode{update time}
        \item \labelstep{pmwu-update} update $\weight$ and $S$
          % \item Update weights:  \begin{math}
          %     \weight{i} = \exp{\frac{\log m}{\eps} \parof{A x}_i}
          %   \end{math}
          %   for $i \in [m]$
          % \item Update good coordinates:
          %   \begin{math}
          %     S = \setof{j \in \groundset \where
          %     \frac{\dmlf[j]{x}}{\ripover{A}{\weight}{e_j}} \geq
          %     \epsless^3 \frac{\lambda}{W} %
          %     \andcomma %
          %     \dmlf[j]{x} \geq \frac{\eps \epsless \lambda}{n} }
          %   \end{math}
        \end{steps}
      \item if $\epsless \rip{\weight}{\ones} \leq W$ and $S$ is empty
        \begin{steps}
        \item \labelstep{pmwu-threshold}
          \begin{math}
            \lambda \gets \epsless \lambda.
          \end{math}
        \end{steps}
      \end{steps}
    \item return $x$
    \end{steps}
    \caption{A parallel implementation of the
      MWU/\algo{continuous-greedy} algorithm of
      \citet{cjv-15}.\labelfigure{pmwu}}
  \end{framed}
\end{figure}

We refer the reader to some preprocessing steps outlined in
\refsection{background}.  In \reffigure{pmwu}, we give a parallel
algorithm that combines the many-coordinate update and greedy step
size of \algo{parallel-greedy} with multiplicative weight update
techniques that navigates the packing constraints. The high-level MWU
framework follows the one from \cite{cjv-15}.

We briefly explain the algorithm. The framework from \cite{cjv-15} has
a notion of time, maintained in the variable $t$, that goes from $0$
to $1$. The algorithm maintains non-negative weights $w_i$ for each
constraint $i$ that reflect how tight is each constraint. In the
sequential setting, the algorithm in \cite{cjv-15} combines
\algo{continuous-greedy} and MWU as follows. In each iteration, given
the current vector $x$, it finds a solution to the following linear
optimization problem with a single non-trival constraint obtained via
a weighted linear combination of the $m$ packing constraints:
\begin{align*}
  \max \rip{\dmlf{x}}{y}        %
  \text{ s.t.\ } \ripover{A}{\weight}{x} \le \rip{\weight}{\ones}          %
  \text{ and } y \ge \zeroes.
\end{align*}
The optimum solution to this relaxation is a single
coordinate solution $\gamma e_j$ where $j$ maximizes the ratio
$\frac{\dmlf[h]{x}}{\rip{w}{Ae_h}}$. The algorithm then updates $x$ by
adding $\delta e_j$ for some appropriately small step size $\delta$
and then updates the weights. The weights are maintained according to
the MWU rule and (approximately) satisfy the invariant that
$w_i = \exp{\eta (Ax)_i)}$ for $\eta = \Theta(\log m/\eps)$.

The parallel version differs from the sequential version as follows.
When solving the Lagrangean relaxation it considers all good
coordinates (the set $S$) whose ratios are close to
$\lambda = \max_h \frac{\dmlf[h]{x}}{\rip{w}{Ae_h}}$ and
\emph{simultaneously} updates them. The step size has to be adjusted
to account for this, and the adjusted step size is a primary
difference from the algorithm in \cite{cjv-15}. The sequential
algorithm takes a greedy step for the sake of obtaining width
independence. In the parallel setting, two different considerations
come in to play. First, the simultaneous update to many coordinates
means that the step size needs to be small enough such that the
gradient does not change too much, but that it does change
sufficiently so that we can use an averaging argument to limit the
number of iterations. Second, if the gradient is not the bottleneck,
then the bottleneck comes from limiting the change in $x$ to ensure
the weights do not grow too rapidly. In this case, we ensure that each
coordinate $j \in S$ increases by at least $(1 + \eps^2/\log m)$
multiplicative factor, which can only happen a limited number of times
due to the starting value of $x$.

We organize the formal analysis into four parts.  The first part,
\refsection{pmwu-packing}, concerns the packing constraints, and shows
that the output $\xout$ satisfies $A \xout \leq \apxmore \ones$.  The
second part, \refsection{pmwu-apx}, concerns the approximation ratio,
and shows that the output $\xout$ has an approximation factor of
$\mlf{\xout} \geq \parof{(1 - e^{-1})-O(\eps)}\opt$. The third part,
\refsection{pmwu-iterations}, analyzes the number of iterations and
shows that each step in \reffigure{pmwu} is executed at most
$\apxO{\frac{1}{\eps^4}}$ times. The last part,
\refsection{pmwu-work}, addresses the total number of oracle
calls. The lemmas in these parts together prove
\reftheorem{main-intro}.

We first note the monotonicity of the various variables at play.
\begin{observation}
  Over the course of the algorithm, $x$ is increasing, $\weight$ is
  increasing, $t$ is increasing, $\mlf{x}$ is increasing, $W$ is
  increasing, $\dmlf{x}$ is decreasing, and $\lambda$ is
  decreasing. Within the loop at \refsubsteps{pmwu-inc-loop}, $S$ is
  decreasing.
\end{observation}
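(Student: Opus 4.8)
The plan is to verify each monotonicity assertion by isolating the unique place in \reffigure{pmwu} where the relevant quantity is modified, and to invoke the structural properties of the multilinear extension from \refsection{background} whenever a quantity changes only indirectly. I would start with the three quantities that the algorithm updates directly. The vector $x$ is set in the initialization loop (step~1) and thereafter modified only in \refstep{pmwu-inc}, where it becomes $x + \delta\gamma(x\land S)$; since $\delta > 0$ by the choice in \refstep{pmwu-greedy-step-size}, $\gamma > 0$, and $x\land S \geq \zeroes$, this update is coordinatewise nondecreasing, so $x$ is increasing. The time $t$ changes only via $t \gets t + \delta$ with $\delta > 0$, so $t$ is increasing. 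The threshold $\lambda$ changes only in \refstep{pmwu-threshold}, where it is scaled by $\epsless < 1$, so $\lambda$ is decreasing.

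Next I would handle the quantities that move only as a consequence of the above. Since $A \geq \zeroes$ and $x$ is increasing, each $(Ax)_i$ is increasing, hence $\weight_i = \exp{\frac{\log m}{\eps}(Ax)_i}$ is increasing, and so $\rip{\weight}{\ones}$ is increasing over time. Because $W$ is, at each (re)assignment, set to the current value of $\rip{\weight}{\ones}$, the successive values taken by $W$ form a nondecreasing sequence, so $W$ is increasing. Monotonicity of $F$ (item~2 of the lemma in \refsection{background}), together with $x$ increasing, gives $\mlf{x}$ increasing. For the gradient, multilinearity (item~1) makes $\dmlf[i]{x}$ independent of $x_i$, while item~3 gives $\ddmlf[i][j]{x}\leq 0$, so $\dmlf[i]{x}$ is nonincreasing in every coordinate; since $x$ is increasing, $\dmlf{x}$ is decreasing.

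Finally, for the claim about $S$, I would fix a single execution of the inner loop at \refsubsteps{pmwu-inc-loop}. Throughout that loop $\lambda$ and $W$ are untouched, while $x$ and $\weight$ increase by the above. Hence for each $j$ the ratio $\dmlf[j]{x}/\ripover{A}{\weight}{e_j}$ has a nonincreasing numerator and a nondecreasing denominator, so it is nonincreasing, and $\dmlf[j]{x}$ alone is nonincreasing while $\frac{\eps\epsless\lambda}{n}$ stays fixed. Thus both inequalities defining membership in $S$ have a nonincreasing left side and a constant right side, so a coordinate that leaves $S$ never re-enters; hence $S$ only shrinks during the loop. I do not expect any genuine obstacle here, since the statement is entirely a bookkeeping consequence of the update rules; the only points that need a little care are the appeal to submodularity of $F$ (item~3) together with multilinearity for the coordinatewise monotonicity of $\dmlf{x}$, and the remark that $W$ is monotone \emph{across} its reassignments precisely because $\rip{\weight}{\ones}$ is monotone over time.
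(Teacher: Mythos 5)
Your proof is correct, and it is essentially the argument the paper intends: the observation is stated without proof precisely because it follows from the bookkeeping you carry out — direct inspection of the update rules for $x$, $t$, $\lambda$, the monotonicity of $\weight$ and $W$ as consequences of $A \geq \zeroes$ and $x$ increasing, monotonicity and submodularity of $\mlf$ (nonpositive mixed second partials plus multilinearity) for $\mlf{x}$ and $\dmlf{x}$, and the fixed $\lambda, W$ within the inner loop for the monotone shrinkage of $S$. No gaps; your care about why $\dmlf[i]{x}$ is nonincreasing coordinatewise and why $W$ is monotone across reassignments covers the only points that need any argument at all.
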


\subsection{Feasibility of the packing constraints}
\labelsection{pmwu-packing} We first show that the algorithm satisfies
the packing constraints to within a $(1+O(\eps))$-factor. The first
fact shows that the weights grow at a controlled rate as $x$
increases. The basic proof idea, which appears in \citet{young-01},
combines the fact that we increase (some coordinates) of $x$ by a
small geometric factor, and the fact that $x$ is recursively
near-feasible. This implies that the increase in load of any
constraint is by at most a small additive factor, hence the weights
(which exponentiate the loads) increase by at most a small geometric
factor.

\begin{lemma}
  \labellemma{pmwu-weight-step} At the beginning of each iteration of
  step \refstep{pmwu-inc}, if $A x \leq 2 \ones$, then
  \begin{align*}
    \weight(x + \delta \gamma (x \land S)) %
    \leq                                   %
    \epsmore \rip{\weight}{\ones}.
  \end{align*}
\end{lemma}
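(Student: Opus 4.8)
The plan is to track the total weight across a single update $x\mapsto x':=x+\delta\gamma(x\land S)$ (reading the left side of the statement as $\rip{\weight(x')}{\ones}$). Since $\weight_i=\exp{\frac{\log m}{\eps}(Ax)_i}$ and this is coordinatewise multiplicative in $(Ax)_i$, we have $\weight_i(x')=\weight_i\exp{z_i}$ where $z_i:=\frac{\log m}{\eps}\,\delta\gamma\,(A(x\land S))_i\ge 0$. Thus the lemma reduces to bounding $\sum_i\weight_i\exp{z_i}$, which I would do by linearizing the exponential; this needs a pointwise upper bound on each $z_i$ and a bound on the weighted sum $\sum_i\weight_i z_i$.

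Both of these come from the step-size cap $\gamma\delta\le\eps^2/(4\log m)$ imposed in step \refstep{pmwu-weight-condition}, combined with the near-feasibility hypothesis $Ax\le 2\ones$. Pointwise: $x\land S\le x$ and $A\ge\zeroes$ give $(A(x\land S))_i\le(Ax)_i\le 2$, so $z_i\le\frac{\log m}{\eps}\cdot\frac{\eps^2}{4\log m}\cdot 2=\eps/2\le 1$. For the weighted sum: $\sum_i\weight_i z_i=\frac{\log m}{\eps}\,\delta\gamma\,\rip{\weight}{A(x\land S)}\le\frac{\log m}{\eps}\,\delta\gamma\,\rip{\weight}{Ax}\le\frac{\log m}{\eps}\,\delta\gamma\cdot 2\rip{\weight}{\ones}\le\frac{\eps}{2}\rip{\weight}{\ones}$, again using $Ax\le 2\ones$ and $\weight\ge\zeroes$. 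Now $e^z\le 1+z+z^2$ for $z\le 1$ together with $z_i^2\le(\eps/2)z_i$ (valid since $z_i\le\eps/2$) gives $\weight_i\exp{z_i}\le\weight_i(1+(1+\eps/2)z_i)$; summing over $i$ and substituting the weighted-sum bound yields $\rip{\weight(x')}{\ones}\le\rip{\weight}{\ones}+(1+\eps/2)\frac{\eps}{2}\rip{\weight}{\ones}\le\epsmore\rip{\weight}{\ones}$ once $\eps$ is small.

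The only conceptual content is seeing why step \refstep{pmwu-weight-condition} is exactly the right quantity to cap: it forces the additive increase $(A(x'-x))_i=\gamma\delta(A(x\land S))_i$ in each constraint's load to be $O(\eps/\log m)$, which is precisely what prevents the weights --- obtained by exponentiating loads at rate $\frac{\log m}{\eps}$ --- from growing by more than a $(1+O(\eps))$ factor in one step; recursive near-feasibility $Ax\le 2\ones$ is what converts $\gamma\delta(A(x\land S))_i$ into an honest multiple of $\gamma\delta$, and is needed both pointwise and inside the weighted sum. I expect no real obstacle beyond tracking constants. Note that $Ax\le 2\ones$ is itself maintained inductively over the whole run, but that is established separately and simply assumed here; and if the implementation keeps the weights only within a $(1\pm\eps)$ factor of the closed form rather than exactly, that error is absorbed into the $O(\eps)$ slack in the conclusion.
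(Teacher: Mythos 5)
Your argument is correct, and it rests on the same two ingredients as the paper's proof: the cap $\gamma\delta \le \eps^2/(4\log m)$ imposed in step \refstep{pmwu-weight-condition} and the hypothesis $Ax \le 2\ones$ (via $A(x\land S)\le Ax$), which together bound each constraint's load increase, and hence each exponent $z_i$, by $\eps/2$. The difference is in what is concluded. The paper stays coordinatewise: for every $i$, $\weight{i}(x+\delta\gamma(x\land S)) = e^{z_i}\,\weight{i}(x) \le e^{\eps/2}\,\weight{i}(x) \le \epsmore\,\weight{i}(x)$, a per-constraint multiplicative bound that is stronger than the displayed inequality under either reading. You instead read the left-hand side as the total weight and prove, for $x'=x+\delta\gamma(x\land S)$, that $\rip{\weight(x')}{\ones} \le \epsmore \rip{\weight}{\ones}$, using the extra weighted-sum estimate $\sum_i \weight{i}\, z_i \le (\eps/2)\rip{\weight}{\ones}$ together with $e^z\le 1+z+z^2$. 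This is fine for the statement as written, but note that \reflemma{pmwu-weight-growth} invokes the present lemma under the integral sign to replace the evolving weights $\weight(x+\tau\gamma(x\land S))$ by $\epsmore\,\weight(x)$ inside the inner product with $A(x\land S)$; for that step the coordinatewise bound is what is actually used, and your aggregate statement alone would not deliver it. There is no real gap, since your pointwise estimate $z_i\le\eps/2$ yields the coordinatewise bound in one line; but once you have it, the weighted-sum step is redundant for this lemma --- it essentially re-derives, in discrete form, the growth computation the paper carries out separately in \reflemma{pmwu-weight-growth}.
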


\begin{proof}
  For each constraint $i \in [m]$, we have
  \begin{align*}
    \weight{i}\parof{x + \delta \gamma (x \land S)} %
    &=                                         %
      \weight{i}\parof{{\delta \gamma (x \land S)}} \weight{i}(x)
      \tago{\leq}                        %
      \weight{i}\parof{\frac{\eps^2}{4 \log m} (x \land
      S)}\weight{i}(x)
    \\
    &                                                    %
      =                           %
      e^{\frac{\eps}{4} (A (x \land S))_i} \weight{i}(x) %
      \tago{\leq}                                               %
      e^{\eps / 2} \weight{i}(x)                         %
      \tago{\leq}                                               %
      \epsmore \weight{i}(x)
  \end{align*}
  by \tagr choice of $\delta$ per \refstep{pmwu-weight-condition},
  \tagr $\parof{A (x \land S)}_i \leq 2$, and \tagr upper bounding the
  Taylor expansion of $e^{\eps / 2}$.
\end{proof}

\begin{lemma}
  \labellemma{pmwu-weight-growth}
  At the beginning of each iteration of step
  \refstep{pmwu-inc}, if $A x \leq 2 \ones$, then
  \begin{align*}
    \rip{\weight(x + \delta \gamma (x \land S))}{\ones} %
    \leq                                                %
    \parof{1 + \delta \epsmore \frac{\log m}{\eps}} \rip{\weight}{\ones}. %
    \labelthisequation{pmwu-packing}
  \end{align*}
\end{lemma}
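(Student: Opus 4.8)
The plan is to bound each weight $\weight{i}(x')$ individually, where $x' = x + \delta \gamma (x \land S)$, and then sum over $i \in [m]$; this is the same bookkeeping as in the proof of \reflemma{pmwu-weight-step}, except that we track the aggregate weight $\rip{\weight}{\ones}$ rather than each weight separately. Writing $\weight{i}(x) = \exp{\frac{\log m}{\eps}\parof{A x}_i}$, we have $\weight{i}(x') = \weight{i}(x)\exp{y_i}$ with $y_i := \frac{\log m}{\eps}\,\delta \gamma\,\parof{A(x\land S)}_i$. Using $\parof{A(x\land S)}_i \le \parof{A x}_i \le 2$ (the hypothesis $A x \le 2 \ones$) together with $\gamma \delta \le \frac{\eps^2}{4\log m}$ from \refstep{pmwu-weight-condition}, this gives $y_i \le \frac{\log m}{\eps}\cdot\frac{\eps^2}{4\log m}\cdot 2 = \frac{\eps}{2}$, so $y_i \in [0,\eps/2]$ for every $i$. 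This is exactly the exponent estimate already established inside \reflemma{pmwu-weight-step}.

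The next step is to linearize the exponential on the short interval $[0,\eps/2]$. By convexity of $t \mapsto e^t$, the chord from $0$ to $\eps/2$ lies above the graph, so $\exp{y} \le 1 + \frac{e^{\eps/2}-1}{\eps/2}\,y$ for all $y \in [0,\eps/2]$; and since $\eps$ is sufficiently small, $\frac{e^{\eps/2}-1}{\eps/2} = 1 + \eps/4 + O(\eps^2) \le \epsmore$ by the same Taylor estimate used in \reflemma{pmwu-weight-step}. Hence $\weight{i}(x') \le \weight{i}(x) + \epsmore\,\weight{i}(x)\,y_i$ for every $i \in [m]$.

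Finally I would sum over $i$ and invoke the definition of $\gamma$. Summing the previous inequality and substituting the definition of $y_i$,
\begin{align*}
  \rip{\weight(x')}{\ones} \le \rip{\weight}{\ones} + \epsmore\,\frac{\log m}{\eps}\,\delta \gamma \sum_{i \in [m]} \weight{i}(x)\,\parof{A(x\land S)}_i .
\end{align*}
But $\gamma$ is chosen in \refstep{pmwu-greedy-step-size} precisely so that $\gamma = \rip{\weight}{\ones}/\pripover{A}{\weight}{x\land S}$, i.e.\ $\gamma \sum_{i} \weight{i}(x)\parof{A(x\land S)}_i = \rip{\weight}{\ones}$. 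Thus the weighted load on the right-hand side equals $\rip{\weight}{\ones}/\gamma$, the factor $\gamma$ cancels, and the bound collapses to $\parof{1 + \delta\,\epsmore\,\frac{\log m}{\eps}}\rip{\weight}{\ones}$, which is the claimed inequality.

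I do not expect a genuine obstacle: the whole argument is a one-shot computation reusing ingredients from the preceding lemma. The only point worth emphasizing is the role of $\gamma$ — it is defined exactly so that the weighted load $\pripover{A}{\weight}{x\land S}$ equals $\rip{\weight}{\ones}/\gamma$, which is what makes $\gamma$ disappear from the final estimate and leaves weight growth that is cleanly linear in the time increment $\delta$. This is the analogue in our setting of the weight-growth bound of \citet{young-01}, and \refequation{pmwu-packing} will in turn be the workhorse for showing $A\xout \le \apxmore\ones$ in \refsection{pmwu-packing}.
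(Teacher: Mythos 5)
Your proposal is correct and in substance follows the same route as the paper: the paper packages the linearization of each $\weight{i}$ as an integral of the derivative of the total weight, bounding the intermediate weights via \reflemma{pmwu-weight-step}, while you obtain the same estimate $e^{y_i} \leq 1 + \epsmore y_i$ directly from convexity on $[0,\eps/2]$ — the ingredients (the cap $\gamma\delta \leq \frac{\eps^2}{4\log m}$ from \refstep{pmwu-weight-condition}, the hypothesis $Ax \leq 2\ones$, and the choice of $\gamma$) are identical. One small correction: by the choice of $\gamma$ in the algorithm, $\gamma \pripover{A}{\weight}{x \land S}$ equals $W$, the snapshot of $\rip{\weight}{\ones}$ taken at the start of the current outer-loop iteration, not the current value of $\rip{\weight}{\ones}$; since the weights only increase we have $W \leq \rip{\weight}{\ones}$, so your final step goes through with ``$\leq$'' in place of the equality you claimed, and the stated bound follows.
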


\begin{proof}
  The following is a standard proof from the MWU framework, where the
  important invariant is preserved by choice of $\delta$ w/r/t
  \refstep{pmwu-weight-condition}.  Let
  $x' = x + \delta \gamma \parof{x \land S}$.  Define
  \begin{math}
    \varweight{\tau} %
    = %
    \weight(x + \tau \gamma \parof{x \land S}),
  \end{math}
  where we recall that
  \begin{align*}
    \weight{i}(x + \tau \gamma \parof{x \land S}) = %
    \weight{i}(x) \exp{\frac{\tau \gamma \log m}{\eps} \parof{A \parof{
    x \land S}}_i}.
  \end{align*}
  We have
  \begin{align*}
    \rip{\weight(x')}{\ones}  - \rip{\weight(x)}{\ones}
    &=                         %
      \rip{\varweight{\tau}}{\ones} - \rip{\varweight{0}}{\ones}
      =                         %
      \int_0^{\delta} \frac{d}{d \tau} \rip{ \varweight{\tau}}{\ones} \, d\tau
    \\
    &
      =                        %
      \gamma \frac{\log m}{\eps} %
      \int_0^{\delta} \pripover{A}{\varweight{\tau}}{x
      \land S} \, d\tau
      \tago{\leq}                        %
      \epsmore \gamma \frac{\log m}{\eps} %
      \int_0^{\delta} \pripover{A}{\weight}{x \land S} \, d\tau
      \\
      &
          \tago{\leq}
          \epsmore \frac{\log m}{\eps} \int_0^{\delta} \rip{\weight}{\ones} \, d\tau
    % \\
    % &
      =                         %
      \delta \epsmore \frac{\log m}{\eps} \rip{\weight}{\ones}
  \end{align*}
  by \tagr monotonicity of $\varweight$ and
  \reflemma{pmwu-weight-step} and \tagr choice of $\gamma$.
\end{proof}

\begin{lemma}
  \labellemma{pmwu-packing}
  The output of the algorithm $\hat{x}$ satisfies
  $A \hat{x} \le (1+3\eps) \ones$.
\end{lemma}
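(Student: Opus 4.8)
The plan is to bound the total weight $\rip{\weight}{\ones}$ at the end of the algorithm and then translate that bound into a bound on $\infnorm{Ax}$ via the exponential definition of the weights. First I would maintain the inductive hypothesis that $A x \leq 2\ones$ throughout the algorithm (to be discharged at the end from the final weight bound, so the argument is not circular — one argues that as long as $Ax \le 2\ones$ the weight bound forces $Ax$ to stay below $(1+3\eps)\ones < 2\ones$). Under this hypothesis, \reflemma{pmwu-weight-growth} gives, at each execution of \refstep{pmwu-inc}, the multiplicative increase $\rip{\weight}{\ones} \mapsto \leq (1 + \delta\,\epsmore\,\tfrac{\log m}{\eps})\,\rip{\weight}{\ones}$. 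Since each inner step advances $t$ by $\delta$ and $t$ only runs from $0$ to $1$, the total of the $\delta$'s across the whole algorithm is at most $1$.

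Next I would chain these per-step factors. Using $1 + a \le e^a$ on each factor, the product over all inner iterations is at most $\exp\bigl(\epsmore\,\tfrac{\log m}{\eps}\sum \delta\bigr) \le \exp\bigl(\epsmore\,\tfrac{\log m}{\eps}\bigr)$. The initial weight is $\rip{\weight(x^{(0)})}{\ones} = \sum_i \exp\bigl(\tfrac{\log m}{\eps}(Ax^{(0)})_i\bigr)$, and by the preprocessing step (line 1 of \reffigure{pmwu}, choosing $x_j$ so that $A_{ij}x_j \le \eps/n$) we have $(Ax^{(0)})_i \le \eps$, so the initial weight is at most $m\cdot e^{\log m} = m^2$. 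Hence the final total weight $W_{\mathrm{end}}$ satisfies $W_{\mathrm{end}} \le m^2 \exp\bigl((1+O(\eps))\tfrac{\log m}{\eps}\bigr) = \exp\bigl((1+O(\eps))\tfrac{\log m}{\eps} + 2\log m\bigr) \le \exp\bigl((1+O(\eps))\tfrac{\log m}{\eps}\bigr)$, absorbing the $2\log m$ into the $O(\eps)$ slack since $\eps < 1$.

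Finally I would extract the per-constraint load bound. For any fixed $i$, $\weight(\hat x)_i = \exp\bigl(\tfrac{\log m}{\eps}(A\hat x)_i\bigr) \le W_{\mathrm{end}} \le \exp\bigl((1+O(\eps))\tfrac{\log m}{\eps}\bigr)$, so taking logarithms and dividing by $\tfrac{\log m}{\eps}$ yields $(A\hat x)_i \le 1 + O(\eps)$, and tracking the constant gives the claimed $1 + 3\eps$ (the $O(\eps)$ here comes from the $\epsmore$ factors in \reflemma{pmwu-weight-growth} and the $e^{\eps/2}$-type slack in \reflemma{pmwu-weight-step}, all of which are at most $3\eps$ once $\eps$ is small, noting $\lvert x' - x\rvert$ only ever shrinks things). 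This last step also retroactively justifies the $Ax \le 2\ones$ hypothesis, since $1+3\eps < 2$.

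The main obstacle is making the $Ax \le 2\ones$ assumption rigorous rather than circular: the per-step lemmas \reflemma{pmwu-weight-step} and \reflemma{pmwu-weight-growth} are only valid when $Ax \le 2\ones$ holds \emph{at that step}, so I would argue by induction on the iterations — at the start $Ax^{(0)} \le \eps\ones \le 2\ones$, and if $Ax \le 2\ones$ at the start of some inner iteration, the weight chaining done \emph{up to that point} already certifies $(Ax)_i \le 1 + 3\eps < 2$, so the hypothesis is maintained and the lemmas apply at the next step as well. The only other point requiring a little care is confirming $\sum\delta \le 1$ across \emph{all} inner loops and outer iterations (not just within one), which follows because $t$ is monotone increasing and bounded by $1$, and every increment to $t$ is exactly one of the $\delta$'s used in \reflemma{pmwu-weight-growth}.
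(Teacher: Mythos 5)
Your proposal is correct and follows essentially the same route as the paper: bound the initial weight by $m^2$, use \reflemma{pmwu-weight-growth} to show $\rip{\weight}{\ones}$ grows to at most $\exp\parof{\frac{\log m}{\eps}\parof{\epsmore t + 2\eps}}$, extract the per-constraint bound from $\weight_i \le \rip{\weight}{\ones}$, and discharge the $Ax \le 2\ones$ hypothesis by induction. The only cosmetic difference is that you chain the discrete per-step factors via $1+a \le e^a$ and $\sum\delta \le 1$, whereas the paper phrases the same bound as solving a differential inequality in $t$; your explicit induction over iterations is, if anything, a slightly more careful rendering of the paper's ``by induction on $t$.''
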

\begin{proof}
  We prove a slightly stronger claim; namely, that at each time $t$,
  one has $A x \leq \parof{\epsmore t + 2 \eps \ones}
  \ones$.

  Consider \reflemma{pmwu-weight-growth}. So long as
  $A x \leq 2 \ones$, by interpolating (the upper bound on)
  $\rip{\weight}{\ones}$ as a continuous function of $t$, we have
  \begin{align}
    \frac{d}{d t} \rip{\weight}{\ones} %
    \leq                               %
    \epsmore \frac{\log m}{\eps} \rip{\weight}{\ones}.
  \end{align}
  Initially, when $t = 0$, we have
  \begin{math}
    A x %
    \leq %
    \eps \ones
  \end{math}
  by choice of $x$.%  Thus, $\weight{i} \leq e^{\log m} = m$ for each
  % $i \in [m]$, and in sum we have $\rip{\weight}{\ones} \leq
  % m^2$. Since
  % $\weight{i}(x) = \exp{\frac{\log m}{\eps} (A x)_i} \leq
  % \rip{\weight}{\ones} \leq m^2$, this also implies that
  % $(A x)_i \leq 2 \eps \leq 2$ initially.

  Solving the differential inequality \reflastequation with initial
  value $m^2$, we have
  \begin{align*}
    \rip{\weight}{\ones} \leq m^2 \exp{\epsmore \frac{\log m}{\eps} t}
    =                           %
    \exp{\frac{\log m}{\eps} \parof{\epsmore t + 2 \eps}}.
  \end{align*}
  for all $t \in [0,1]$ as long as $A x \leq 2$. In particular, since
  \begin{math}
    \weight{i} =
    \exp{\frac{\log m}{\eps} (A x)_i} \leq \rip{\weight}{\ones}
  \end{math}
  for each $i$, we have
  \begin{align*}
    A x \leq \parof{\epsmore t + 2 \eps} \ones \leq \parof{1 + 3 \eps}
    \ones \leq 2 \ones.
  \end{align*}
  By induction on $t$, we have $A x \leq \parof{1 + 3 \eps} \ones$ for
  all $t \in [0,1]$.
\end{proof}

\subsection{Approximation ratio}
\labelsection{pmwu-apx}

We now analyze the approximation ratio of the output solution $\xout$.
The main observation, similar to \reflemma{pg-threshold} for
\algo{parallel-greedy}, is that $\lambda$ is an upper bound
on the gap $\opt - F(x)$.
\begin{lemma}
  At all times, $\lambda \geq \opt - F(x)$.
\end{lemma}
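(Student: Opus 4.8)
The plan is to mirror the proof of \reflemma{pg-threshold} from the cardinality setting, but now working with the dynamically updated weighted knapsack constraint instead of the fixed cardinality bound. The claim holds initially since $\lambda = \opt$ and $F(x) \geq 0$. As the algorithm proceeds, $\lambda$ only changes when it is decreased in \refstep{pmwu-threshold}, and $F(x)$ only increases (by the observed monotonicity of $F(x)$), so it suffices to verify that the inequality is restored each time $\lambda$ is about to be multiplied by $\epsless$. At such a moment the guard for \refstep{pmwu-threshold} tells us that $S$ is empty and $\epsless \rip{\weight}{\ones} \leq W$, where $W$ is the value of $\rip{\weight}{\ones}$ recorded at the top of the current iteration of \refstep{pmwu-loop}.

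First I would fix an optimal solution $z$ (so $Az \leq \ones$ and $F(z) = \opt$) and chain the same four inequalities as before: by monotonicity of $F$, $\opt - F(x) \leq \mlf{z}{x}$; by monotone concavity of $F$, $\mlf{z}{x} \leq \rip{\dmlf{x}}{z \lor x - x}$; by nonnegativity of $\dmlf{x}$ (from monotonicity of $f$) together with $z \lor x - x \leq z$, this is at most $\rip{\dmlf{x}}{z}$. The new ingredient is the final step: I need to bound $\rip{\dmlf{x}}{z}$ using the emptiness of $S$. Emptiness of $S$ means that for every coordinate $j$, either $\dmlf[j]{x} < \epsless^3 \frac{\lambda}{W} \ripover{A}{\weight}{e_j}$ or $\dmlf[j]{x} < \frac{\eps \epsless \lambda}{n}$. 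Splitting the sum $\rip{\dmlf{x}}{z}$ over these two cases: the coordinates of the first type contribute at most $\epsless^3 \frac{\lambda}{W} \sum_j z_j \ripover{A}{\weight}{e_j} = \epsless^3 \frac{\lambda}{W} \ripover{A}{\weight}{z} \leq \epsless^3 \frac{\lambda}{W} \rip{\weight}{\ones}$, using $Az \leq \ones$; the coordinates of the second type contribute at most $\frac{\eps \epsless \lambda}{n} \sum_j z_j \leq \frac{\eps \epsless \lambda}{n} \cdot n = \eps \epsless \lambda$ since $z \in [0,1]^{\groundset}$. Here I would invoke $\epsless \rip{\weight}{\ones} \leq W$, equivalently $\rip{\weight}{\ones}/W \leq 1/\epsless$, so the first contribution is at most $\epsless^3 \lambda / \epsless = \epsless^2 \lambda$. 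Adding, $\rip{\dmlf{x}}{z} \leq \epsless^2 \lambda + \eps \epsless \lambda \leq \epsless \lambda$ for $\eps$ small enough (e.g.\ since $\epsless^2 + \eps\epsless = \epsless(\epsless + \eps) = \epsless$). Hence $\opt - F(x) \leq \epsless \lambda$, which is exactly the value $\lambda$ will have after the update, so the invariant is preserved.

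I expect the main subtlety — rather than a true obstacle — to be bookkeeping around the two-part definition of $S$ and the role of $W$ versus the current $\rip{\weight}{\ones}$: one must be careful that the gradient $\dmlf{x}$ appearing in the bound is the one at the current $x$ (which is $\geq$ its value at any later point, by monotonicity of the Hessian), and that $W$ is the snapshot from the start of the outer iteration while $\rip{\weight}{\ones}$ may have grown during \refstep{pmwu-inc-loop}; the guard $\epsless \rip{\weight}{\ones} \leq W$ is precisely what controls this ratio. A second point to be careful about is that $z \lor x - x$ is the correct argument of $\dmlf{x}$ coming out of monotone concavity — this is the same device used in \reflemma{pg-threshold} and carries over verbatim. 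Everything else is the standard greedy-invariant argument, and I would present it as a short chain of tagged inequalities in the same style as the earlier lemma.
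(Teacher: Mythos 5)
Your proof is correct and follows essentially the same route as the paper's: the same chain of monotonicity, monotone concavity, and emptiness of $S$, with the two-threshold definition of $S$ handled by the same case split (the paper just writes the resulting bound as a single sum of the two contributions), followed by $Az \le \ones$ and the guard $\epsless \langle w, \ones\rangle \le W$ to conclude $\opt - F(x) \le \epsless \lambda$. No gaps; the bookkeeping concerns you raise (snapshot $W$ versus current weights, gradient evaluated at the current $x$) are resolved exactly as you describe.
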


\begin{proof}
  The claim holds initially with $\lambda \geq \opt$ and
  $\mlf{x} \geq 0$. Whenever $x$ is increased and $\lambda$ is
  unchanged, $\mlf{x}$ increases due to monotonicity of $\mlf$, hence
  the claim continuous to hold.  Whenever $\lambda$ is about to be
  decreased in \refstep{pmwu-threshold}, we have $S$ empty with
  respect to the current value of $\lambda$.  Thus, letting $z$ be an
  optimal solution, we have
  \begin{align*}
    \opt - F(x)                 %
    &\tago{\leq}                        %
      \mlf{z}{x}                  %
      \tago{\leq}                        %
      \rip{\dmlf{x}}{z}           %
      \tago{\leq}                        %
      \epsless^3 \lambda \frac{\ripover{A}{\weight}{z}}{W}
      +                           %
      \eps \epsless \lambda
    \\
    &\tago{\leq}                        %
      \epsless^3 \lambda \frac{\rip{\weight}{\ones}}{W}
      + \eps \epsless \lambda
      \tago{\leq}                        %
      \epsless^2 \lambda + \eps \epsless \lambda
      \leq                        %
      \epsless \lambda
  \end{align*}
  by \tagr monotonicity of $\mlf$, \tagr nonnegative concavity, \tagr
  $S = \emptyset$, \tagr $A z \leq \ones$, and \tagr
  $\epsless \rip{\weight}{\ones} \leq W$. Thus, after replacing $\lambda$
  with $\epsless \lambda$, we still have $\lambda \geq \opt - F(x)$.
\end{proof}

\begin{lemma}
  \labellemma{pmwu-apx}
  The output $\xout$ of the algorithm satisfies
  \begin{math}
    \mlf{\xout} \geq \apxless \parof{1 - e^{-1}} \opt.
  \end{math}
\end{lemma}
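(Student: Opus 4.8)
The plan is to reprise the argument given for \algo{parallel-greedy} in \refsection{cardinality}, now using the preceding lemma --- that $\lambda \geq \opt - F(x)$ at all times --- as the link between the algorithm's progress guarantee and the objective gap. Whenever $x$ is updated in \refstep{pmwu-inc}, with $x' = x + \delta\gamma(x \land S)$, the step size $\delta$ was chosen so that \refstep{pmwu-gradient-condition} holds, i.e.\ $\mlf{x'}{x} \geq \epsless^4 \delta \lambda$, while $t$ increases by exactly $\delta$ and (since $x' \geq x$) $\mlf{x'}{x} = \mlf{x'} - \mlf{x}$. Combining with $\lambda \geq \opt - \mlf{x}$ gives, at every update,
\begin{align*}
  \opt - \mlf{x'} \leq \parof{1 - \epsless^4 \delta}\parof{\opt - \mlf{x}}.
\end{align*}
Telescoping over all updates --- equivalently, integrating $\frac{d\mlf{x}}{dt} \geq \epsless^4\parof{\opt - \mlf{x}}$ along the monotone process $t \mapsto \mlf{x}$, which is valid because $\mlf{x}$ is monotone and $x$ changes only at updates --- starting from $\mlf{x} \geq 0$ at $t = 0$, yields $\mlf{\xout} \geq \parof{1 - \exp{-\epsless^4 \hat{t}}}\opt$, where $\hat{t} \in [0,1]$ is the value of the time variable at termination.

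Next I would case on how the outer loop at \refstep{pmwu-loop} terminates. If it stops because $t \geq 1$, then $\hat{t} = 1$ (the cap $t + \delta \leq 1$ in \refstep{pmwu-greedy-step-size} prevents overshoot), so $\mlf{\xout} \geq \parof{1 - \exp{-\epsless^4}}\opt$; since $\epsless^4 = (1-\eps)^4 \geq 1 - 4\eps$ we get $1 - \exp{-\epsless^4} \geq 1 - e^{-1} - O(\eps) \geq \apxless\parof{1 - e^{-1}}$, which suffices. If instead the loop stops because $\lambda < e^{-1}\opt$, then by the preceding lemma $\opt - \mlf{\xout} \leq \lambda < e^{-1}\opt$, so $\mlf{\xout} > \parof{1 - e^{-1}}\opt$ directly. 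Either way the claimed bound holds.

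The one point requiring care --- and essentially the only place where this differs from the cardinality case --- is that the $\delta$ selected in \refstep{pmwu-greedy-step-size} may be limited not by the gradient condition \refstep{pmwu-gradient-condition} but by the weight cap \refstep{pmwu-weight-condition} or by the time cap $t + \delta \leq 1$, and the inner loop at \refstep{pmwu-inc-loop} may exit for reasons ($S = \emptyset$ or $\epsless\rip{\weight}{\ones} > W$) that do not decrease $\lambda$. None of this obstructs the argument: \refstep{pmwu-gradient-condition} is one of the conjuncts defining the admissible $\delta$, so $\mlf{x'}{x} \geq \epsless^4 \delta\lambda$ holds (possibly with slack, which only helps) at \emph{every} update regardless of which constraint is binding, and $\lambda$ is constant within an inner-loop pass and only ever decreases, so $\lambda \geq \opt - \mlf{x}$ at the instant before each update is all we invoke. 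Thus I expect no genuine obstacle; the only thing to watch is bookkeeping the $(1\pm O(\eps))$ factors so that the $\epsless^4$ in the exponent degrades $1 - e^{-1}$ by merely an additive $O(\eps)$, and note that feasibility of $\xout$ --- which carries a separate $(1+3\eps)$ slack handled in \refsection{pmwu-packing} --- plays no role in this value bound.
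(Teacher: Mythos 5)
Your proof is correct and follows essentially the same route as the paper's: combine the preceding lemma $\lambda \geq \opt - \mlf{x}$ with the step-size condition \refstep{pmwu-gradient-condition} to get a differential inequality for $\mlf{x}$ as a function of $t$, integrate it, and then case on whether the algorithm halts with $t \geq 1$ or with $\lambda \leq e^{-1}\opt$. The extra care you take about which constraint binds $\delta$ and how the inner loop exits is sound but not needed beyond what the paper's terser argument already covers.
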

\begin{proof}
  From the preceding lemma and line
  \refstep{pmwu-gradient-condition} of the algorithm,
  we have the following. Suppose $x$ changes to $x'$ with step size
  $\delta$. We have
  \begin{align*}
    \mlf{x'} - \mlf{x} \ge (1-\eps)^4 \delta \lambda
    \ge (1-\eps)^4 \delta \parof{\opt - \mlf{x}},
  \end{align*}
  and $t$ increases by $\delta$.  Therefore, $\mlf{x}$, as a function
  of $t$, increases at a rate such that
  \begin{math}
    \mlf{x} \geq \parof{1 - e^{-\apxless t}} \opt.
  \end{math}
  In particular, since the algorithm terminates with either $\lambda
  \leq e^{-1} \opt$ or $t \geq 1$, the output $\xout$ satisfies
  \begin{math}
    \mlf{\xout} \geq \apxless \parof{1 - e^{-1}} \opt.
  \end{math}
\end{proof}

\subsection{Iteration count}
\labelsection{pmwu-iterations}

In this section, we analyze the total number of iterations in
\algo{parallel-mwu}. \algo{parallel-mwu} consists of two nested loops:
an outer loop \refsubsteps{pmwu-loop}, where $W$ and $\lambda$ are
adjusted, and an inner loop \refsubsteps{pmwu-inc-loop}, which
increases $x$ uniformly along ``good'' coordinates w/r/t fixed values
of $W$ and $\lambda$.  We first analyze the number of iterations of
the outer loop.
\begin{lemma}
  In each iteration of the outer loop \refsubsteps{pmwu-loop} except
  for the last, either $\lambda$ decreases by a
  $\epsless$-multiplicative factor, or $\rip{\weight}{\ones}$
  increases by a $\frac{1}{1-\eps}$-multiplicative factor. This
  implies that, since $\lambda$ ranges from $\opt$ to $e^{-1} \opt$,
  and $\rip{\weight}{\ones}$ ranges from $m$ to $m^{\bigO{1/\eps}}$,
  there are at most $\bigO{\frac{\log m}{\eps^2}}$ iterations of the
  outer loop.
\end{lemma}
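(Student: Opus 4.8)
The plan is to show that in every non-terminal iteration of the outer loop \refsubsteps{pmwu-loop}, the algorithm performs exactly one of the two "measures of progress'' — it either shrinks $\lambda$ or it grows $\rip{\weight}{\ones}$ — and that neither quantity can change too many times before the loop halts. First I would look at how an iteration of the outer loop can end. The inner loop \refsubsteps{pmwu-inc-loop} runs while $S$ is nonempty, $\epsless\rip{\weight}{\ones} \le W$, and $t \le 1$. So when control leaves the inner loop (and the iteration is not the last, meaning $t < 1$ and $\lambda \ge e^{-1}\opt$ still hold, so in particular $t$ did not reach $1$), there are two cases: either $S$ became empty, or $\rip{\weight}{\ones}$ grew enough that $\epsless\rip{\weight}{\ones} > W$.

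In the first case, the guard at \refstep{pmwu-threshold} is satisfied — $S$ is empty and (since we did not exit via the weight condition) $\epsless\rip{\weight}{\ones} \le W$ still holds — so $\lambda$ is multiplied by $\epsless$. That is the first type of progress. In the second case, $S$ did not empty, so the test at \refstep{pmwu-threshold} fails and $\lambda$ is left unchanged; but we exited the inner loop precisely because $\rip{\weight}{\ones} > W/\epsless = \frac{1}{1-\eps}W$, and since $W$ was set to the value of $\rip{\weight}{\ones}$ at the top of this iteration, $\rip{\weight}{\ones}$ has increased by a factor of at least $\frac{1}{1-\eps}$ over the course of the iteration. That is the second type of progress. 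So every non-terminal outer iteration makes progress of exactly one of the two kinds.

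Next I would bound how often each can happen. The variable $\lambda$ starts at (an upper bound for) $\opt$ and the loop runs only while $\lambda \ge e^{-1}\opt$; each decrease is by a factor $\epsless$, so the number of $\lambda$-decreasing iterations is $O\!\big(\frac{\log(1/ (1-e^{-1}))}{\log(1/\epsless)}\big) = O\!\big(\frac{1}{\eps}\big)$. For the weight, I would invoke the feasibility analysis: by \reflemma{pmwu-packing} (and the stronger invariant in its proof) $A x \le 2\ones$ throughout, so $\rip{\weight}{\ones} \le \exp\!\big(\frac{\log m}{\eps}\big(\epsmore t + 2\eps\big)\big) \le m^{O(1/\eps)}$, while initially $\rip{\weight}{\ones}$ is at least $m$ (indeed the differential inequality in that proof is solved with initial value $m^2$, which is fine — the point is it starts as $m^{\Theta(1)}$). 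Each weight-increasing iteration multiplies $\rip{\weight}{\ones}$ by at least $\frac{1}{1-\eps} = 1+\Theta(\eps)$, so there are at most $O\!\big(\frac{(1/\eps)\log m - \log m}{\log(1/(1-\eps))}\big) = O\!\big(\frac{\log m}{\eps^2}\big)$ of them. Adding the two counts and the single terminal iteration gives the claimed $O\!\big(\frac{\log m}{\eps^2}\big)$ total.

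The only delicate point — the main obstacle — is correctly matching the exit condition of the inner loop to the guard at \refstep{pmwu-threshold}, i.e.\ being careful that "exited because $S$ emptied'' and "exited because the weight grew'' are genuinely the only two non-$t$ reasons, and that in the first case the weight inequality $\epsless\rip{\weight}{\ones}\le W$ is still in force so that the $\lambda$-update actually fires. One subtlety is that a single inner-loop iteration could in principle both empty $S$ and push $\rip{\weight}{\ones}$ past $W/\epsless$; then the guard at \refstep{pmwu-threshold} fails and the iteration counts as weight-progress rather than $\lambda$-progress, which is harmless for the bound (we only need \emph{at least one} kind of progress per iteration, and the two counts are summed). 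Everything else — the geometric-range arguments for $\lambda$ and for $\rip{\weight}{\ones}$ — is routine given \reflemma{pmwu-packing}.
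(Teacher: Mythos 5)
Your proof is correct and takes essentially the same route as the paper's: a case analysis on which inner-loop exit condition fired (either $S$ empties, giving a $\lambda$-decrease, or $\rip{\weight}{\ones}$ exceeds $W/(1-\eps)$, giving weight growth, or $t$ reaches $1$ in the final iteration), followed by geometric counting of $\lambda$ over the range $[e^{-1}\opt,\opt]$ and of $\rip{\weight}{\ones}$ over the range from $m$ to $m^{O(1/\eps)}$ guaranteed by the feasibility analysis. Your additional observation that an iteration which simultaneously empties $S$ and overshoots the weight threshold fails the guard at \refstep{pmwu-threshold} but still counts as weight-progress is a small point the paper passes over silently, and it does not change the bound.
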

\begin{proof}
  The inner loop \refsubsteps{pmwu-inc-loop} terminates when either
  $S = \emptyset$, $\epsless \rip{\weight}{\ones} = W$, or $t = 1$. If
  $S = \emptyset$, then $\lambda$ decreases in line
  \refstep{pmwu-threshold}. If $\epsless \rip{\weight}{\ones} = W$,
  then since $W$ was the value of $\rip{\weight}{\ones}$ at the
  beginning of the iteration, we have that $\rip{\weight}{\ones}$
  increased by a $\prac{1}{1 - \eps}$-multiplicative factor. If
  $t = 1$, then this is the last iteration of
  \refsubsteps{pmwu-loop}.
\end{proof}

We now analyze the number of iterations of the inner loop
\refsubsteps{pmwu-inc-loop} for each fixed iteration of the outer loop
\refsubsteps{pmwu-loop}. Each iteration of the inner loop, except for
possibly the last, fixes $\delta$ based on either
\refstep{pmwu-gradient-condition} or
\refstep{pmwu-weight-condition}. We first bound the number of times
$\delta$ can be chosen based on \refstep{pmwu-weight-condition}. The
key idea to the analysis (due to \cite{young-01}) is that one can only
geometrically increase the coordinates in $S$ a small number of times
before violating the upper bounds on the coordinates implied by
$A x \leq \apxmore \ones$.

\begin{lemma}
  \labellemma{pmwu-weight-step-count}
  In each iteration of the outer loop \refsubsteps{pmwu-loop},
  $\delta$ is determined by \refstep{pmwu-weight-condition} at most
  $\bigO{\frac{\log n \log m}{\eps^2}}$ times.
\end{lemma}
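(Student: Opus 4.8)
The plan is to adapt the width-reduction argument of \citet{young-01}: show that each time $\delta$ is determined by \refstep{pmwu-weight-condition}, \emph{every} coordinate currently in $S$ is scaled up by the fixed factor $\beta := 1 + \eps^2/(4\log m)$; then, within one pass of the outer loop, charge all such steps to a single coordinate that remains in $S$ the longest; and finally bound how often a coordinate can be scaled by $\beta$ using a priori bounds on its value.

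First I would record two coordinatewise bounds. Since $x$ is nondecreasing throughout, and each $x_j$ is initialized (by the initialization in \reffigure{pmwu}) to be maximal in $[0,1]$ with $A_{ij}x_j \le \eps/n$ for all $i$, while the preprocessing assumption $A e_j \le \ones$ forces $A_{ij}\le 1$, we have $x_j \ge \eps/n$ at all times. For an upper bound, any $j$ that ever lies in $S$ has $\ripover{A}{\weight}{e_j}>0$ (a zero column gives a zero denominator; such a coordinate can simply be set to $1$ up front and never enters $S$), so column $j$ of $A$ is nonzero, and the preprocessing step that zeroes out entries below $\eps/n$ then forces $A_{ij}\ge \eps/n$ for some $i$; combined with the feasibility invariant $A x \le 2\ones$ established inside the proof of \reflemma{pmwu-packing}, this gives $x_j \le 2n/\eps$ at all times. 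Finally, whenever $\delta$ is determined by \refstep{pmwu-weight-condition} we have $\gamma\delta = \eps^2/(4\log m)$, so the update $x \gets x + \delta\gamma(x\land S)$ in \refstep{pmwu-inc} multiplies $x_j$ by exactly $\beta$ for every $j\in S$, and by a factor $\ge 1$ for every other coordinate.

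Now fix one iteration of the outer loop \refsubsteps{pmwu-loop} and let $I$ be the set of inner-loop iterations during it in which $\delta$ is determined by \refstep{pmwu-weight-condition}; we want $\sizeof{I} = \bigO{\log n \log m/\eps^2}$. If $I=\emptyset$ there is nothing to prove; otherwise let $\tau^* = \max I$. Since inner iteration $\tau^*$ is actually executed, $S\neq\emptyset$ at its start, so we may fix some coordinate $j^*$ lying in $S$ at the start of iteration $\tau^*$. Because $S$ is nonincreasing over the inner loop \refsubsteps{pmwu-inc-loop} (as observed above: $x$ and $\weight$ increase while $\lambda$ and $W$ are held fixed within the inner loop), $j^*$ lies in $S$ at the start of every inner iteration up to and including $\tau^*$, hence at every iteration in $I$. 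Consequently $x_{j^*}$ is multiplied by $\beta$ at each of the $\sizeof{I}$ iterations in $I$ and is never decreased, so $x_{j^*}$ grows by a factor of at least $\beta^{\sizeof{I}}$. Together with $\eps/n \le x_{j^*} \le 2n/\eps$ this gives $\beta^{\sizeof{I}} \le 2n^2/\eps^2$, hence
\begin{align*}
  \sizeof{I} \le \frac{\ln(2n^2/\eps^2)}{\ln\!\left(1 + \eps^2/(4\log m)\right)} \le \frac{8 \log m}{\eps^2}\, \ln(2n^2/\eps^2) = \bigO{\frac{\log n \log m}{\eps^2}},
\end{align*}
using $\ln(1+u)\ge u/2$ for $u\in(0,1]$ and the standing assumption $\eps > \poly{1/n}$ (so $\ln(1/\eps) = \bigO{\log n}$).

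The one delicate point is the charging step, and it rests entirely on $S$ shrinking monotonically within a single pass of the inner loop: that is what ensures the coordinate still alive at the last \refstep{pmwu-weight-condition}-binding step has witnessed \emph{all} such steps, so that a single coordinate's budget of $\beta$-scalings bounds $\sizeof{I}$; the remaining ingredients are elementary bounds on the range of $x_j$. This also explains why the statement is phrased per outer iteration rather than globally --- across outer iterations the surviving coordinate $j^*$ changes --- with the global count recovered by multiplying by the $\bigO{\log m/\eps^2}$ bound on the number of outer iterations.
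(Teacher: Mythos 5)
Your proof is correct and follows essentially the same route as the paper's: both charge every \refstep{pmwu-weight-condition}-binding step within one outer iteration to a single coordinate that survives in the (monotonically shrinking) set $S$, note that each such step scales that coordinate by a $\parof{1 + \bigTheta{\eps^2/\log m}}$ factor, and bound the number of scalings by an a priori multiplicative range of order $\poly{n/\eps}$ (the paper tracks $A_{ij}x_j \in [\eps/n,\, 1+\bigO{\eps}]$, you track $x_j \in [\eps/n,\, 2n/\eps]$, which is the same idea). Your explicit handling of zero columns and the constant-tracking are fine extra details; no gap.
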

\begin{proof}
  If $\delta$ is determined by \refstep{pmwu-weight-condition} more
  than $\bigO{\frac{\log n \log m}{\eps^2}}$ times, consider the
  coordinate $j$ that survives in $S$ throughout all of these many
  iterations. Such a coordinate exists because the set $S$ is
  decreasing throughout the inner loop.  The initial value of $x_j$
  sets $A_{ij} x_j \geq \eps / n$ for some $i \in [m]$, and by
  \reflemma{pmwu-packing}, $A_{ij} x_j \leq (A x)_i$ cannot exceed
  $1 + \bigO{\eps}$. Each iteration where $\delta$ is determined by
  \refstep{pmwu-weight-condition} increases $x_j$ (hence $A_{ij} x_j$)
  by a $\parof{1 + \bigOmega{\frac{\eps^2}{\log m}}}$-multiplicative
  factor.  Applying this multiplicative increase more than
  $\bigO{\frac{\log n \log m}{\eps^2}}$ times would violate the upper
  bound on $A_{ij} x_j$.
\end{proof}

We now analyze the number of times $\delta$ can be chosen based on
\refstep{pmwu-gradient-condition}. The following lemma is analogous to
\reflemma{pg-depth}, but the analysis is more subtle because of (a)
the general complexity added by the weights and (b) the fact that the
underlying potential function is not monotone.
\begin{lemma}
  \labellemma{pmwu-gradient-step} For a fixed iteration of the outer
  loop \refsubsteps{pmwu-loop}, $\delta$ is determined by
  \refstep{pmwu-gradient-condition} at most
  $\bigO{\frac{\log n}{\eps}}$ times.
\end{lemma}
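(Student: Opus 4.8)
We need to show that for a fixed iteration of the outer loop, $\delta$ is determined by Step \ref{step:pmwu-gradient-condition} at most $\bigO{\log n / \eps}$ times. Let me think about what happens when $\delta$ is set by the gradient condition.

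In the cardinality case (Lemma \ref{lemma:pg-depth}), the argument was: if $\delta$ is chosen so that $\mlf{x'}{x} = (1-\eps)^2 \lambda \delta |S|/k$ exactly (i.e., the approximation condition is tight), then by monotone concavity the gradient at the endpoint $x'$ along direction $S$ has dropped, and comparing $\mlf{x'}{x} \geq \langle F'(x'), \delta S\rangle \geq \langle F'(x'), \delta S''\rangle \geq (1-\eps)\lambda \delta |S''|/k$ forces $|S''| \leq (1-\eps)|S'|$. So $|S|$ shrinks geometrically, giving $\bigO{\log n / \eps}$ steps.

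Here the plan is the same, but with the weighted ratios. The direction of increase is $d = \gamma(x \land S)$ where $\gamma = W/\langle A^T w, x\land S\rangle$, chosen precisely so that $\langle A^T w, d\rangle = W$. Write $x' = x + \delta d$ for the step where \ref{step:pmwu-gradient-condition} is tight, i.e. $\mlf{x'}{x} = (1-\eps)^4 \delta \lambda$. By monotone concavity of $F$ along $d$, we have $\mlf{x'}{x} \geq \langle F'(x'), \delta d\rangle = \delta \gamma \langle F'(x'), x\land S\rangle = \delta \gamma \sum_{j \in S} x_j F'_j(x')$. Let $S''$ be the coordinates surviving in $S$ after the update. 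Each surviving $j$ still satisfies (w.r.t. the possibly-updated $\lambda, W, w$) the defining inequality $F'_j/\langle A^T w, e_j\rangle \geq (1-\eps)^3 \lambda / W$. The idea is to bound $\sum_{j\in S} x_j F'_j(x')$ from below by the contribution of $S''$, i.e. $\sum_{j\in S''} x_j F'_j(x') \geq (1-\eps)^3 (\lambda/W) \sum_{j\in S''} x_j \langle A^T w, e_j\rangle = (1-\eps)^3 (\lambda/W)\langle A^T w, x\land S''\rangle$. Combining, $(1-\eps)^4 \delta \lambda \geq \delta\gamma (1-\eps)^3 (\lambda/W) \langle A^T w, x\land S''\rangle$, so $\langle A^T w, x \land S''\rangle \leq (1-\eps)(W/\gamma) = (1-\eps)\langle A^T w, x\land S'\rangle$. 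Thus the weighted quantity $\phi(S) := \langle A^T w, x\land S\rangle$ drops by a $(1-\eps)$-factor each time \ref{step:pmwu-gradient-condition} is tight.

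The main obstacle is the non-monotonicity of this potential $\phi(S) = \langle A^T w, x \land S\rangle$: within the inner loop, $w$ is increasing (weights only grow) and $x$ is increasing, so even though $|S|$ shrinks, $\phi$ could in principle grow between gradient-tight steps, ruining a naive geometric-decrease count. I would handle this by interleaving with the weight-condition steps: each weight-condition step (Step \ref{step:pmwu-weight-condition}) respects $\gamma\delta \leq \eps^2/(4\log m)$, so by the same estimate as in Lemma \ref{lemma:pmwu-weight-step} the weights grow by at most a $(1+\eps/2)$-factor \emph{per inner iteration} (and $A x \leq 2\ones$ throughout by Lemma \ref{lemma:pmwu-packing}), so $\phi$ grows by at most $(1+\bigO{\eps})$ per inner step as well. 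But there are at most $\bigO{\log n \log m / \eps^2}$ weight-condition steps by Lemma \ref{lemma:pmwu-weight-step-count}, which would give $\phi$ a total growth factor of $m^{\bigO{\log n / \eps}}$ — too lossy. The cleaner route, which I would pursue, is to track $\phi(S)/W$ rather than $\phi(S)$ alone, or to observe directly that $\langle A^T w, e_j\rangle \leq W$ for each $j \in S$ (since $A x \leq 2\ones$ forces each column load to be small relative to $W = \langle w, \ones\rangle$ after accounting for preprocessing $A_{ij} \geq \eps/n$), so that $\phi(S) \leq |S| \cdot W$; meanwhile $\phi(S) \geq \max_{j\in S} x_j \langle A^T w, e_j\rangle \geq (\eps/n)\max_j w_{i(j)} \geq (\eps/n) W / m \cdot$ (something). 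Combining an upper bound $\phi(S) \leq |S| W$ with a per-step multiplicative-decrease of $\phi(S)/W$ at the gradient-tight steps (noting $W$ is fixed within one outer iteration! — it was set at the start of the loop and only $\rip{\weight}{\ones}$ changes), and a lower bound $\phi(S)/W \geq \poly{\eps/n}$, yields that the number of gradient-tight steps in one outer iteration is $\bigO{\log(n/\eps)/\eps} = \bigO{\log n/\eps}$ (using $\eps > \poly{1/n}$). The key realization that makes this work is that $W$ is a constant within a single outer-loop iteration, so $\phi(S)$ itself (not a ratio) genuinely decreases geometrically at gradient-tight steps, modulo the slow growth of $w$ which is absorbed into the constant since $\epsless \rip{w}{\ones} \leq W$ caps the total weight growth within the outer iteration at a $(1+\bigO{\eps})$-factor.
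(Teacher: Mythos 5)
Your per-step calculation is essentially sound and parallels the paper's: at a step where \refstep{pmwu-gradient-condition} is tight, monotone concavity, the membership condition for the surviving coordinates, and the choice of $\gamma$ do give $\rip{A^{\top} w}{x \land S''} \leq (1-\eps)\rip{A^{\top} w}{x \land S}$ (with $x,w$ the old values). The gap is in the range and growth accounting for your potential $\phi = \rip{A^{\top} w}{x \land S}$. Membership in $S$ only \emph{upper} bounds $\rip{A^{\top}w}{e_j}$ relative to $W$; it gives no lower bound, since the weight mass can sit entirely on constraints whose coefficients on the coordinates of $S$ are zero. With $w_i \geq 1$ and the preprocessing $A_{ij} \geq \eps/n$, the only generic floor is $\phi \geq (\eps/n)^2$, whereas at the start of the inner loop $\phi$ can be $\Theta(W)$, and $W$ can be as large as $m^{\Theta(1/\eps)}$ in later outer iterations; so the claimed bound $\phi/W \geq \poly{\eps/n}$ is false in general, and normalizing by the (indeed fixed) $W$ does not shrink the range. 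Likewise, the cap $(1-\eps)\rip{w}{\ones} \leq W$ controls the growth of the \emph{sum} of the weights by a $(1+\bigO{\eps})$ factor, which translates only into an \emph{additive} increase of $\bigO{\eps} W$ in $\phi$ (a per-step multiplicative accounting gives a $(1+\eps)$ factor per step, which cancels the $(1-\eps)$ decrease); once $\phi$ has dropped well below $W$ this is not a $(1+\bigO{\eps})$ multiplicative perturbation. Altogether your argument yields $\bigO{(\log W + \log n)/\eps} = \bigO{\log m/\eps^2 + \log n/\eps}$ gradient-determined steps per outer iteration, not the claimed $\bigO{\log n/\eps}$. (That weaker bound would still leave the final $\bigO{\log^2 m \log n/\eps^4}$ count intact, since the weight-determined steps already contribute $\bigO{\log n \log m/\eps^2}$ per outer iteration, but it does not prove the lemma as stated.)

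The paper runs essentially your inequality chain but with the potential $\rip{\dmlf{x}}{x \land S}$ instead: at a tight step, $(1-\eps)^4\delta\lambda = (1-\eps)^4\delta\gamma\rip{A^{\top}w}{x\land S}/W \leq (1-\eps)\delta\gamma\rip{\dmlf{x}}{x\land S}$ by the definition of $S$, while monotone concavity bounds $\mlf{x'}{x}$ from below by $\delta\gamma\rip{\dmlf{x'}}{x'\land S'}$ up to a $(1+\eps^2/\log m)$ factor, giving a $(1-\bigOmega{\eps})$ decrease of the new potential value relative to the old one. The point of this choice is that the range of $\rip{\dmlf{x}}{x\land S}$ is between $\poly{\eps/n}\opt$ and $\poly{n}\opt$, independent of $m$ and of the weights---this is exactly what the second condition $\dmlf[j]{x} \geq \eps(1-\eps)\lambda/n$ in the definition of $S$ is for---and its only source of growth is the increase of $x$, which is bounded coordinatewise by an $\bigO{n/\eps}$ factor; this yields the stated $\bigO{\log n/\eps}$ bound. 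If you want to salvage your approach, you would need a weight-independent floor on $\rip{A^{\top}w}{x\land S}/W$ over surviving coordinates, which the algorithm does not provide.
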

\begin{proof}
  The overall proof is based on the potential function
  $\rip{\dmlf{x}}{x \land S}$, which is always in the range
  $\poly{\frac{\eps}{n}} \opt \leq \rip{\dmlf{x}}{x \land S} \leq
  \poly{n} \opt$ for nonempty $S$. An important distinction from the
  potential function of \reflemma{pg-depth} is that
  $\rip{\dmlf{x}}{x \land S}$ is not monotone: $\dmlf{x}$ and $S$ are
  both decreasing, but \emph{$x$ is increasing}. On one hand, the
  total growth by $x$ is bounded above by an
  $\bigO{n / \eps}$-multiplicative factor coordinatewise by our
  initial choice of $x_j$, as discussed in
  \reflemma{pmwu-weight-step-count}. On the other hand, we claim that
  whenever $\delta$ is determined by
  \refstep{pmwu-gradient-condition}, $\rip{\dmlf{x}}{x \land S}$
  decreases by a $\parof{1 - \bigOmega{\eps}}$-multiplicative
  factor. We prove the claim below, but suppose for the moment that
  this claim holds. Then we have a $\poly{n / \eps}$-multiplicative
  range for $\rip{\dmlf{x}}{x \land S}$ with non-empty $S$, and that
  the total increase of $\rip{\dmlf{x}}{x \land S}$ is bounded above
  (via the bound on the growth of $x$) by a
  $\bigO{\frac{n}{\eps}}$-multiplicative factor. It follows that
  $\delta$ is determined by \refstep{pmwu-gradient-condition} at most
  $\bigO{\frac{\log n}{\eps}}$ times until $\rip{\dmlf{x}}{x \land S}$
  falls below the lower bound $\poly{\eps/n}$, and $S$ is empty.

  Now we prove the claim. Let
  $x' = x + \delta \gamma \parof{x \land S}$, $w' = w(x')$, and $S'$
  denote the values of $x$, $w$ and $S$ after the update in step
  \refstep{pmwu-update}.  We want to show that
  $\rip{\dmlf{x'}}{x' \land S} \leq \parof{1 - c \eps}
  \rip{\dmlf{x}}{x \land S}$ for some constant $c > 0$.  We have
  \begin{align*}
    \delta \gamma \rip{\dmlf{x'}}{x' \land S'}
    &\tago{\leq}                        %
      \parof{1 + \frac{\eps^2}{\log m}} %
      \delta \gamma \rip{\dmlf{x'}}{x \land S'}
    \\
    &\tago{\leq}
      \parof{1 + \frac{\eps^2}{\log m}}
      \delta \gamma \rip{\dmlf{x'}}{x \land S}
      % \\
      % &
    \\
    &\tago{\leq}
      \parof{1 + \frac{\eps^2}{\log m}}
      \mlf{x'}{x}
      % \\
      % &
          \tago{=}                      %
          \parof{1 + \frac{\eps^2}{\log m}}
          \epsless^4 \delta \lambda    %
    \\
    &
      \tago{=}                   %
      \parof{1 + \frac{\eps^2}{\log m}}
      \epsless^4 \delta \gamma \frac{\pripover{A}{\weight}{x \land
      S}}{W}
    \\
    &
      \tago{\leq}
      \parof{1 + \frac{\eps^2}{\log m}} \epsless \delta \gamma
      \rip{\dmlf{x}}{x \land S}
    \\
    &\leq                %
      \parof{1 - \bigOmega{\eps}} \delta \gamma \rip{\dmlf{x}}{x \land S}
  \end{align*}
  by \tagr $x' \leq \parof{1 + \frac{\eps^2}{\log m}} x$ by choice of
  $\delta \gamma$, \tagr nonnegativity of
  $\dmlf$ and $S' \subseteq S$, \tagr monotonic concavity of $\mlf$,
  \tagr choice of $\delta$, \tagr choice of $\gamma$, and \tagr
  definition of $S$. Dividing both sides by $\delta \gamma$ gives the
  inequality we seek.
\end{proof}

\begin{lemma}
  Each step of \algo{parallel-mwu} has at most
  $\bigO{\frac{\log^2 m \log n}{\eps^4}}$ iterations.
\end{lemma}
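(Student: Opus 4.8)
The plan is to assemble the bound from the three lemmas already proved in this subsection, after noting that the inner loop \refsubsteps{pmwu-inc-loop} is the only bottleneck. Every step that lies outside both while-loops is executed $\bigO{1}$ times, and every step in the body of the outer loop \refsubsteps{pmwu-loop} but outside the inner loop is executed once per outer iteration, hence $\bigO{\frac{\log m}{\eps^2}}$ times by the bound on the number of outer iterations established above. So it suffices to bound the \emph{total} number of iterations of the inner loop \refsubsteps{pmwu-inc-loop}: this dominates the number of executions of every step in its body and equals (number of outer iterations) $\times$ (maximum number of inner iterations within a single outer iteration).

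For the per-outer-iteration count I would first verify that in every inner iteration the maximal step size $\delta$ chosen in \refstep{pmwu-greedy-step-size} is finite and strictly positive, and that the set of feasible $\delta$ is an interval $[0,\delta^*]$ with $\delta^*$ equal to the smallest of the three upper bounds in \refstep{pmwu-greedy-step-size}. Finiteness holds because $\mlf{x'}{x} \le \mlf{\ones}$ is bounded above while the right-hand side $\epsless^4 \delta \lambda$ of \refstep{pmwu-gradient-condition} grows without bound in $\delta$; the interval structure and positivity follow from monotone concavity of $\mlf$ together with the lower bound built into the definition of $S$. Granting this, each inner iteration can be charged to whichever of \refstep{pmwu-gradient-condition}, \refstep{pmwu-weight-condition}, or the time cap $t + \delta \le 1$ is tight at $\delta^*$. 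The time cap is tight at most once (once $t + \delta = 1$ we have $t = 1$ and the algorithm terminates). By \reflemma{pmwu-weight-step-count}, \refstep{pmwu-weight-condition} is tight at most $\bigO{\frac{\log n \log m}{\eps^2}}$ times per outer iteration, and by \reflemma{pmwu-gradient-step}, \refstep{pmwu-gradient-condition} is tight at most $\bigO{\frac{\log n}{\eps}}$ times per outer iteration. Summing, the inner loop runs at most $\bigO{\frac{\log n \log m}{\eps^2}}$ times within any single outer iteration.

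Multiplying the $\bigO{\frac{\log m}{\eps^2}}$ bound on the number of outer iterations by the $\bigO{\frac{\log n \log m}{\eps^2}}$ bound on inner iterations per outer iteration gives $\bigO{\frac{\log^2 m \log n}{\eps^4}}$ total inner iterations, so every step of the algorithm is executed at most $\bigO{\frac{\log^2 m \log n}{\eps^4}}$ times, as claimed. All of the substantive work lives in \reflemma{pmwu-weight-step-count} and \reflemma{pmwu-gradient-step}; the only thing requiring care here is the routine check that each inner iteration is chargeable to a constraint on $\delta$ that is tight, and I expect this bookkeeping — rather than any real difficulty — to be the only point worth spelling out.
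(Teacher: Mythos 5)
Your argument is correct and is essentially the paper's own proof: the paper likewise multiplies the $O(\log m/\eps^2)$ bound on outer-loop iterations by the $O(\log m \log n/\eps^2)$ bound on inner-loop iterations per outer iteration, the latter obtained exactly as you do by charging each inner iteration to whichever of \refstep{pmwu-gradient-condition}, \refstep{pmwu-weight-condition}, or the time cap determines $\delta$ and invoking \reflemma{pmwu-weight-step-count} and \reflemma{pmwu-gradient-step}. The additional bookkeeping you spell out (positivity and finiteness of $\delta$, the single charge to the time cap) is left implicit in the paper but does not change the substance.
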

\begin{proof}
  The preceding lemmas show that we have
  $\bigO{\frac{\log m}{\eps^2}}$ iterations of the outer loop and
  $\bigO{\frac{\log m \log n}{\eps^2}}$ iterations of the inner loop per
  iteration of the outer loop.
\end{proof}

\subsection{Number of oracle calls and additional implementation details}
\labelsection{pmwu-work}

In this section, we briefly account for the total work and number of
oracle calls of the algorithm.  The bottlenecks are step
\refstep{pmwu-greedy-step-size}, where we search for a value of
$\delta$ satisfying certain constraints, and \refstep{pmwu-update},
where one updates $\weight$ and $S$.

\paragraph{Estimating $\opt$:} The algorithm requires a value
$\lambda$ that is an upper bound on $\opt$. Preprocessing allows us to
choose $\lambda = \sum_j f(j)$ and we have $\opt \le \lambda \le n
\opt$.  It is also useful to have an estimate that is within a
constant factor. This can be done (a standard idea) by running the
algorithm in parallel with $O(\log n)$ geometrically increasing values
of $\lambda$ and picking the best solution from the parallel runs.

\paragraph{Step size:}
We first note that the greedy step size $\delta$ does not have to be
computed exactly, and that a $\epsmore$-multiplicative factor
approximation suffices. Indeed, suppose the algorithm is at step
\refstep{pmwu-greedy-step-size}. Let $\delta$ be the exact maximum
value satisfying the conditions \refsubsteps{pmwu-greedy-step-size},
and let $\apxdelta$ be a value such that
$\delta \leq \apxdelta \leq \min{\epsmore \delta, \frac{\eps^2}{\log
    m}}$. We want to show that $\apxdelta$ approximately satisfies
\refsubsteps{pmwu-greedy-step-size}. Indeed, we have
\begin{align*}
  \mlf{x + \apxdelta \gamma (x \land S)}{x} %
  \tago{\geq}                                    %
  \mlf{x + \delta \gamma (x \land S)}{x}  %
  \tago{\geq}                                    %
  \epsmore^4 \delta \lambda
  \geq                          %
  \frac{\epsless^4}{1 + \eps} \apxdelta \lambda.
\end{align*}
by \tagr monotonicity, \tagr choice of $\delta$, and \tagr choice of
$\apxdelta$. The \emph{inequality} \refstep{pmwu-gradient-condition}
is invoked only in \reflemma{pmwu-apx}. It is easy to see that the
slightly weaker inequality w/r/t $\apxdelta$ is enough to prove
\reflemma{pmwu-apx} with only a change in the hidden constants. The
other point where \refstep{pmwu-gradient-condition} is invoked in the
proof is when $\delta$ is determined by
\refstep{pmwu-gradient-condition}. Here we need only observe that
increasing $x$ further along $x \land S$ with a larger step size
$\apxdelta > \delta$ only decreases the coordinate values $\mlf{x}$
and thereby $\rip{\dmlf{x}}{S}$.

The second claim is that one needs only guess $\bigO{\log{n} / \eps}$
values of $\delta$. Indeed, we know that
$\gamma \delta \leq \eps^2 / \log m$. On the other hand, if
$\gamma \delta \leq \poly{\eps / n}$, then it is easy to show that
\refstep{pmwu-gradient-condition} is still satisfied. Thus one only
needs to check try $\apxdelta$ for $\bigO{\log{n} / \eps}$ powers of
$\epsmore$ between between $\poly{\eps/n}$ and $\eps^2 / \log m$.

\paragraph{Oracle calls to $\mlf$ and $\dmlf$:}
Now, to execute step \refstep{pmwu-greedy-step-size}, we must evaluate
$\mlf{x'}$ for $\bigO{\log{n} / \eps}$ different possible choices of
$\delta$. To execute step \refstep{pmwu-update}, we need $\apxO{N}$
work to update $\weight$, and then obtain a partial derivative
$\dmlf[j]{x}$ for each coordinate $j \in [n]$. Since the depth of
either step is $\bigO{\frac{\log^2 m \log n}{\eps^2}}$, we see that
the algorithm requires $\apxO{N / \eps^2}$ total work (where $N$ is
the total number of nonzeroes in $A$), $\bigO{\frac{\log^2 m \log^2
    n}{\eps^2}}$ calls to evaluate $\mlf{x}$, and $\bigO{n
  \frac{\log^2 m \log^2 n}{\eps^2}}$ to individual coordinates of
$\dmlf{x}$. It is not hard to see that if we have a $(1+\eps)$
multiplicative approximation for these quantities then the whole
analysis still goes through.

\paragraph{Oracle calls to $f$:}
If we assume only oracle access to the underlying submodular function
$f$, then we need to estimate $\mlf{x}$ and $\dmlf[j]{x}$ based on
random sampling. Each sample constitutes a query to $f$ and the
samples are done in parallel followed by aggregation. We note that the
starting value of $x$ in the algorithm satisfies the property that
$\mlf{x} \ge \frac{\eps}{n} \opt$.  Moreover, we can assume at any
point that $\dmlf[j]{x} \ge \frac{\eps}{n}\opt$; for if it is smaller
then even taking all such coordinates to $1$ would contribute at most
$\eps \opt$. Following the discussion after \reflemma{dmlf-sample} we
see that one can obtain an estimate for $\dmlf[j]{x}$ that is, with
high probability, a $(1+\eps)$ multiplicative approximation using
$\apxO{n/\eps^3}$ samples.  Similarly with $\apxO{n/\eps^3}$ samples
one can get a $(1+\eps)$-approximation for $\mlf{x}$ with high
probability. From the preceding analysis the algorithm makes
$\apxO{n/\eps^2}$ calls to individual coordinates of $\dmlf$ and
$\apxO{1/\eps^2}$ calles to $\mlf$. Thus the total number of oracle
calls to $f$ is $\apxO{n^2/\eps^5}$. Since the correctness
probability of each estimate is inversely polynomial in $n$, we can
take a union bound over the $\apxO{n/\eps^2}$ estimates that the algorithm
requires.

%%% Local Variables:
%%% mode: latex
%%% TeX-master: "submodular-mwu"
%%% End:

\section{Rounding the fractional solution in parallel}
\labelsection{rounding} In this section we briefly discuss, at a
high-level, a few settings in which one can round the fractional
solution to the multilinear relaxation in parallel. We assume some
familiarity with prior work on rounding the multilinear relaxation in
the sequential setting, and we also restrict our attention to monotone
functions. Formal details are outside the scope of this paper.  Let
$x$ be a feasible fractional solution to constraints of the form $Ax
\le b, x \in [0,1]^n$ where $A,b$ have non-negative coefficients.

\paragraph{Cardinality constraint:} First consider the case where we have
a simple cardinality constraint of the form $\sum_i x_i \le k$.
If $k = O(\log n/\eps^2)$ we can do sequential greedy, hence we
assume $k = \Omega(\log n/\eps^2)$. In this case we can
pick each element $i$ independently with probability $(1-\eps)x_i$.
Then with high probability we will not violate the constraint due
to Chernoff bounds. Moreover independent rounding also preserves $F$.
In other words if $R$ is a random set obtained via the rounding
we have $\evof{f(R)} \ge (1-\eps) F(x)$; further $f(R)$ is
concentrated around $(1-\eps)F(x)$ \cite{v-10}. This allows us
to obtain a $(1-1/e-\eps)$ approximation via rounding.

\paragraph{Packing constraints:} For general packing constraints one
can round via a contention resolution scheme (CRS) to obtain an
approximation ratio of $\Omega(1/\Delta)$ where $\Delta$ is the
maximum number of non-zeroes in any column; we refer the reader to
\cite{BansalKNS,cvz-14}. The scheme is very simple and consists of
independently rounding each $i$ with probability $c x_i/\Delta$ for
some constant $c < 1$ and then doing an alteration step on the
resulting random set $R$ to make it feasible. The alteration step is
composed of independent alteration steps for each constraint.  A
cursory glance at the details of the alteration step would suffice to
convince oneself that it can be easily parallelized using sorting. We
note that $\Delta = O(1)$ for several simple constraints of
interest. As examples, if $A$ corresponds to the constraints of a
partition matroid we have $\Delta = 1$, for matchings and
$b$-matchings $\Delta = 2$. We also note that the approximation ratio
improves as the width of $A$ increases \cite{BansalKNS,cvz-14}.

\paragraph{Partition matroid constraint:} The CRS scheme for general
packing constraints gives a constant factor for partition matroid
constraints since $\Delta = 1$. However, it is known that any
fractional point $x$ in a matroid polytope can be rounded to an
integral solution without any loss. The two known techniques to
achieve this are pipage rounding \cite{ccpv} and swap rounding
\cite{cvz-swap-rounding}. It is not clear how to parallelize these
schemes for general matroids but partition matroid constraints are
simple. One can implement, with some tedious technical work,
swap-rounding in poly-logarithmic depth for partition matroids.  Here
we give another approach which is simple to describe. Let the
partition matroid over $\groundset$ be defined by the partition
$\groundset_1,\groundset_2,\ldots,\groundset_h$, with $k_j$ indicating
the number of elements that can be chosen from $\groundset_j$.  If
$k_j = 1$ for $1 \le j \le h$ then we have a simple partition
matroid. For simple partition matroids the rounding is easy. We have a
constraint for each $\groundset_j$ of the form $\sum_{i \in
  \groundset_j} x_j \le k_j = 1$. In this case, independently and in
parallel for each $\groundset_j$, we pick at most one element where
the probability of picking $i \in \groundset_j$ is precisely equal to
$x_i$. The random set $R$ output by this algorithm satisfies the
property that $\evof{f(R)} = F(x)$, and clearly satisfies the
constraints. One can reduce the problem of maximizing a submodular
function over a partition matroid to maximizing over a simple
partition matroid via the following lifting trick that is well-known.
It is easier to first explain the reduction for the case of a
cardinality constraint. Suppose $f: \groundsets \to \reals$ and we
wish to solve the problem $\max_{|S| \le k} f(S)$. This is a special
case of a partition matroid.  We create a new ground set $\groundset'
= \groundset \times\setof{1,2,\ldots,k}$ which corresponds to creating
$k$ copies of each element $e \in \groundset$.  Let $\groundset'_j =
\{(e,j) \mid e \in \groundset\}$.  Consider a derived submodular
function $g: 2^{\groundset'} \to \reals$ defined as follows. For $A
\subseteq \groundset'$ we define its projection to $\groundset$,
denoted by $A_{\groundset}$, as the set $\{e \in \groundset \mid
\exists i, (e,i) \in A\}$; that is, we collapse all the copies of an
element $e$ to $e$.  We then define $g$ by settings $g(A) =
f(A_{\groundset})$ for each $A \subseteq \groundset'$. It is
relatively easy to verify that $g$ is monotone submodular if $f$ is
monotone submodular. Maximizing $f$ subject to a cardinality
constraint is equivalent to maximizing $g$ with a simple partition
matroid constraint over $\groundset'$ where the partition is
$\groundset'_1,\ldots,\groundset'_k$. A value oracle for $g$ is
straight forward given a value oracle for $f$. Thus we have reduced
the cardinality constrained problem into a simple partition matroid
constraint. One can apply this lifting trick to each partition of a
general partition matroid to reduce the problem to a simple partition
matroid constraint. Note that one only needs to lift parts with
capacity at most $\bigO{\log{n} / \eps^2}$, since otherwise randomized
rounding suffices (as in the cardinality case above).

We believe that one can, with some technical work, also round a
fractional solution for a laminar matroid constraint in
poly-logarithmic depth by suitably adapting swap rounding.

%%% Local Variables:
%%% mode: latex
%%% TeX-master: "submodular-mwu"
%%% End:

\bibliographystyle{plainnat}
\bibliography{submodular-mwu}
\appendix

\section{Analysis of \algo{randomized-parallel-greedy}}

\labelappendix{rpg}

In this section, we analyze the \algo{randomized-parallel-greedy}
algorithm given in \reffigure{randomized-parallel-greedy} and obtain
the bounds of \reftheorem{rpg}. As mentioned when introducing the
algorithm earlier, the basic idea is that
\algo{randomized-parallel-greedy} preserves the most important
invariants of \algo{parallel-greedy} in expectation.

The variable $t$ is used to track the expected cardinality of
$\sizeof{Q}$. One can simplify the algorithm by replacing the variable
$t$ with $\sizeof{Q}$ wherever it appears, and the analysis would
generally still hold. (In this case, a variable similar to $t$ could
be introduced in the analysis.) The inclusion of $t$ makes the
analysis more straightforward, as we can use $t$ to track progress
throughout the algorithm.

Intuitively, the key points to the analysis of \algo{parallel-greedy}
now hold in expectation.  The only source of randomization is in
\refstep{rpg-sample}, where we sample a set $R \sim \delta Q$ to add
to $S$. Let us assume that $k \geq C \frac{\log n}{\eps^2}$ for some
constant $C > 1$, since otherwise one can simply run the sequential
greedy algorithm and obtain the desired depth. Then the cardinality of
the computed solution $Q$ is randomized, but is concentrated at
$t \leq \epsless k$ because each independent coin toss adjusts the
cardinality by at most 1, and $t$ is at least
$\epsless C \frac{\log n}{\eps^2}$.

\begin{lemma}
  \labellemma{rpg-cardinality} %
  With high probability, we have $\sizeof{S} \leq \epsmore t \leq k$
  throughout the algorithm.
\end{lemma}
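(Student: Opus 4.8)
Only the right-hand inequality $\epsmore t\le k$ is deterministic: inside the inner loop \refstep{rpg-inc-loop} the step size $\delta$ is chosen subject to $t+\delta\sizeof{S}\le(1-2\eps)k$, and $t$ is then incremented by exactly $\delta\sizeof{S}$ in \refstep{rpg-inc}, so since both loop guards test $t\le(1-2\eps)k$ the variable $t$ never exceeds $(1-2\eps)k<k$, with no appeal to randomness. The substance of the lemma is therefore the bound on the cardinality of the running solution; I will write $\sizeof{Q}$ for this quantity, the point being that $\evof{\sizeof{Q}}=t$ by design, so the statement is really a concentration claim for $\sizeof{Q}$ around $t$.

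The plan is a martingale concentration argument. The only randomness is in \refstep{rpg-sample}, which draws $R\sim\delta S$ (each element of the current good set $S$ included independently with probability $\delta$); $S$ is always disjoint from $Q$ because an element of $Q$ has zero marginal and so fails the positive threshold $\epsless\lambda/k$ (using $\lambda\ge e^{-1}\opt>0$ throughout \reffigure{randomized-parallel-greedy}), hence conditioned on the past $\sizeof{R}$ is a sum of independent $\mathrm{Bernoulli}(\delta)$ variables with mean $\delta\sizeof{S}$, which is exactly the increment added to $t$ in \refstep{rpg-inc}. Thus $M_r:=\sizeof{Q_r}-t_r$, indexed by the iteration count $r$, is a martingale with $M_0=0$, whose increment $M_r-M_{r-1}=\sizeof{R_r}-\delta_r\sizeof{S_r}$ has conditional mean $0$ and conditional variance $\delta_r\sizeof{S_r}(1-\delta_r)\le\delta_r\sizeof{S_r}=t_r-t_{r-1}$; in particular the accumulated conditional variance is at most the final value of $t$, which is at most $(1-2\eps)k$.

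From here I would apply a Freedman/Bernstein-type tail bound, supplemented by a per-step binomial tail to keep individual increments under control: a $\delta_r\sizeof{S_r}$-mean binomial exceeds its mean by more than $\max\{\eps\,\delta_r\sizeof{S_r},\ \bigO{\log n}\}$ only with probability $n^{-\bigOmega{1}}$, and together with the variance bound this yields $\probof{\sizeof{Q}>\epsmore t}\le n^{-\bigOmega{1}}$ once $t$ has grown to $\bigOmega{\log n/\eps^2}$. We may assume $k\ge C\log n/\eps^2$ (otherwise the sequential greedy algorithm already meets the depth bound of \reftheorem{rpg}), and the algorithm performs only polynomially many iterations, so a union bound over iterations gives $\sizeof{Q}\le\epsmore t$ at all times with high probability; combined with $t\le(1-2\eps)k$ this also gives $\sizeof{Q}\le\epsmore(1-2\eps)k\le k$. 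The short initial stretch in which $t$ is still below the $\bigOmega{\log n/\eps^2}$ threshold is dispatched separately by the crude step-by-step estimate $\sizeof{Q}\le t+\bigO{(\text{number of steps so far})\cdot\log n}$, which already keeps $\sizeof{Q}$ well below $k$.

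The step I expect to require the most care is making the concentration tight enough to preserve the slim multiplicative $(1+\eps)$-slack rather than a larger constant: the budget $t$ can itself be as large as $(1-2\eps)k$, so the permitted overshoot $\eps t$ is only an $\eps$-fraction of it, and a single iteration may have a very large good set $\sizeof{S_r}$, so a plain Azuma bound phrased in the number of coin tosses or in $\max_r\sizeof{S_r}$ is far too weak — one must genuinely use the variance bound (per-step variance at most the per-step mean, totalling at most $t$) together with the per-step binomial tails, and it is precisely here that the hypothesis $k=\bigOmega{\log n/\eps^2}$ is used. The remaining bookkeeping hazard is arranging the union bound so that no iteration is charged the martingale estimate while $t$ is still too small for it to bite; the two-regime split above is designed to avoid that.
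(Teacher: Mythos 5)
Your deterministic reduction ($\epsmore t \leq k$ via the loop guards and the step-size constraint, and reading the statement as a bound on $\sizeof{Q}$) matches the paper, and your overall plan---a martingale Chernoff argument over the sampling in \refstep{rpg-sample}, using $k \geq C \log{n}/\eps^2$ and the deterministic identity $\sum_r \delta_r \sizeof{S_r} = t \leq (1-2\eps)k$---is the same idea the paper executes via its online Chernoff bound (\reflemma{online-chernoff}). The one genuine soft spot is the granularity at which you run the concentration step. You define the martingale over whole iterations, with increments $\sizeof{R_r} - \delta_r \sizeof{S_r}$, and propose to control these increments by per-step binomial tails so that Freedman applies. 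But a single iteration can have $\delta_r \sizeof{S_r} = \Theta(k)$ (e.g.\ the first step, when nearly every coordinate is good), so the per-step truncation only caps that increment at roughly $\eps\, \delta_r \sizeof{S_r} = \Theta(\eps k)$; plugging increment bound $M = \Theta(\eps k)$, predictable variance at most $t$, and deviation $a = \eps t$ into Freedman gives exponent $\frac{\eps^2 t^2}{2\parof{t + M \eps t / 3}} = O(t/k) = O(1)$, which is not a high-probability bound. Truncating instead at $O(\log n)$ is not legitimate, since a binomial with mean much larger than $\log^2 n$ overshoots its mean by $O(\log n)$ with constant probability; and summing the per-step tails alone forces $k$ of order $\log^2{n}/\eps^4$, stronger than assumed. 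The fix is exactly the paper's move: apply the martingale bound at the granularity of individual coin tosses. Each indicator $X_{\ell,j} \in [0,1]$ has conditional mean $Y_{\ell,j} \in \setof{0, \delta_\ell}$, the totals satisfy $\sum_{\ell,j} Y_{\ell,j} = t \leq (1-2\eps)k$ deterministically, and \reflemma{online-chernoff} with additive slack $\eps k = \Omega(\log{n}/\eps)$ yields $\probof{\sizeof{Q} \geq k} \leq \poly{1/n}$ in one shot, with no increment truncation.

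A second, smaller point of comparison: you aim for the pointwise-in-time bound $\sizeof{Q} \leq \epsmore t$, which is what forces your two-regime split and a union bound over iterations. The paper's formal argument only establishes (and its use in \reftheorem{rpg} only needs) the endpoint bound $\probof{\sizeof{Q} \geq k} \leq \poly{1/n}$: since $\sizeof{Q}$ is nondecreasing and the slack is the additive $\eps k$ rather than the multiplicative $\eps t$, a single application of the coin-toss-level bound covers all times, and the small-$t$ regime you worry about never arises.
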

\begin{proof}
  We assume that $k \geq c \frac{\log n}{\eps^2}$ for a large constant
  $c > 0$, since otherwise one can simply run the sequential greedy
  algorithm instead. In this case we also have
  $t \geq c \frac{\log n}{\eps^2}$ throughout the algorithm. As
  discussed above, $t$ tracks the expected cardinality of $Q$, summing
  the expected increase $\sizeof{R} = \delta \sizeof{S}$ over each
  iteration of \refstep{rpg-sample}. On the other hand, each random
  coin toss from sampling $R \sim \delta S$ affects the cardinality of
  $Q$ by at most $1$. Since the expected cardinality of $Q$ is
  $t \geq c \frac{\log n}{\eps^2}$, this is at most a
  $\frac{\eps^2 }{c \log n}$-fraction of the expected total. It
  follows from (online extensions of) the multiplicative Chernoff
  inequality that $\sizeof{Q} \leq \epsmore t$ all throughout the
  algorithm.

  To make this argument formal, for $\ell \in \naturalnumbers$ and
  $j \in \groundset$, let $X_{\ell,j} \in \setof{0,1}$ indicate
  whether or not $j$ is sampled by $R$ in the $\ell$th iteration of
  \refstep{rpg-sample}. Each $X_{\ell,j}$ depends on $X_{\ell',j'}$
  for previous iterations $\ell' < \ell$ and $j \in [n]$, but once
  these outcomes are fixed, each $X_{\ell,j}$ is independent of the
  other indicator variables in the $\ell$th iteration.  Let
  \begin{math}
    Y_{\ell,j} = \evof{X_{\ell,j} \given
      X_{\ell',j'} \text{ for } \ell' < \ell, j \in j'}
  \end{math}
  by the expected value of $X_{\ell,j}$ \emph{going in to the $\ell$th
    round}.  If we let $S_{\ell}$ and $\delta_{\ell}$ denote the
  values of the set $S$ and step size $\delta$ during the $\ell$th
  iteration of \refstep{rpg-sample}, then we have
  $Y_{\ell,j} = \delta$ if $j \in S_{\ell}$, and 0 otherwise. In
  particular, we have
  \begin{math}
    \sum_{\ell\in \naturalnumbers} \sum_{j \in [n]} Y_{\ell,j} =
    \sum_{\ell \in \naturalnumbers} \delta_{\ell} \sizeof{S_{\ell}} =
    t \leq \parof{1 - 2 \eps} k
  \end{math}
  deterministically. By online extensions of the Chernoff inequality
  (\reflemma{online-chernoff}), we have
  \begin{align*}
    \probof{\sum_{\ell,e} X_{\ell,e} \geq k} %
    \leq                                     %
    \probof{\epsmore \sum_{\ell,e} Y_{\ell,e} + \eps k}
    \leq                        %
    \probof{\epsmore \sum_{\ell,e} Y_{\ell,e} + \bigO{\frac{\log n}{\eps}}}
    \leq                        %
    \poly{1/n},
  \end{align*}
  as desired.
\end{proof}

\begin{lemma}
  \labellemma{rpg-apx} At each time $t$, we have
  \begin{math}
    \evof{f(Q)} \geq \apxless \parof{1 - e^{-t/k}} \opt.
  \end{math}
  In particular, since the algorithm exits with either
  $t = (1 - 2 \eps) k$ or $\opt - f(\Qout) \leq e^{-1} \opt$, the final
  set $\Qout$ satisfies
  \begin{math}
    \evof{f(\Qout)} \geq \apxless \parof{1 - e^{-1}} \opt.
  \end{math}
\end{lemma}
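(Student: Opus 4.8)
The plan is to mirror the approximation analysis of \algo{parallel-greedy}, carrying out each estimate in expectation over the coin tosses in \refstep{rpg-sample}. Two ingredients are needed: a threshold invariant and a per-step expected progress bound.

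\emph{Threshold invariant.} First I would show that $\lambda \geq \opt - f(Q)$ holds at every point of the execution (deterministically, conditioned on the history of coin tosses). It holds initially since $\lambda \geq \opt$ and $f(\emptyset) = 0$. It is preserved whenever $Q$ grows, since $f$ is monotone. When $\lambda$ is about to be replaced by $\epsless\lambda$ in \refstep{pg-threshold}, the set $S$ is empty, i.e.\ $f_Q(j) < \epsless\lambda/k$ for all $j$; so for an optimal set $z$ with $\sizeof{z}\le k$, monotonicity and submodularity give $\opt - f(Q) \le f_Q(z) \le \sum_{j \in z} f_Q(j) < \epsless\lambda$, and the invariant survives the update.

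\emph{Per-step progress.} Fix an iteration of \refstep{rpg-sample} and condition on the current $Q$ (together with $\lambda$, $S$, $\delta$, all determined before the sample is drawn). Since $f_Q(j) = 0 < \epsless\lambda/k$ for $j \in Q$, we have $S \cap Q = \emptyset$, so the random set $Q \cup R$ has exactly the distribution of an independent sample from the fractional vector $Q + \delta S$ (each $j \in Q$ with probability $1$, each $j \in S$ with probability $\delta$, all coordinates at most $1$). By the definition of the multilinear extension, $\evof{f(Q \cup R) \given Q} = \mlf{Q + \delta S}$, so
\[
  \evof{f(Q \cup R) - f(Q) \given Q} = \mlf{Q + \delta S}{Q} \ge \epsless^2 \lambda \frac{\delta \sizeof{S}}{k} \ge \frac{\epsless^2 \parof{\opt - f(Q)}}{k}\,\delta \sizeof{S},
\]
by \refstep{pg-apx-condition} and the threshold invariant. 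Since $t$ increases by exactly $\delta\sizeof{S}$ in \refstep{rpg-inc}, writing $\Phi = \opt - f(Q)$ and $\Delta t$ for that increase, this rearranges to $\evof{\Phi_{\mathrm{new}} \given Q} \le \parof{1 - \epsless^2 \Delta t/k}\Phi \le e^{-\epsless^2 \Delta t/k}\Phi$ --- the discrete-time analogue of the differential inequality $\frac{d}{dt}\evof{f(Q)} \ge \frac{\epsless^2}{k}\parof{\opt - \evof{f(Q)}}$ used in \algo{parallel-greedy}.

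\emph{Globalizing and the main obstacle.} Indexing iterations of \refstep{rpg-sample} by $\ell$ and letting $\Phi_\ell$, $t_\ell$ be the values of $\opt - f(Q)$ and $t$ after $\ell$ of them, the previous bound shows $M_\ell := \Phi_\ell \, e^{\epsless^2 t_\ell/k}$ is a supermartingale for the filtration generated by the samples (using that $S_\ell,\delta_\ell,t_\ell$ are measurable before the $\ell$-th sample), with $M_0 = \opt$. Hence $\evof{\Phi_\ell \, e^{\epsless^2 t_\ell/k}} \le \opt$ for all $\ell$; at any moment where $t$ has a deterministic value this gives $\evof{f(Q)} \ge \parof{1 - e^{-\epsless^2 t/k}}\opt \ge \apxless\parof{1 - e^{-t/k}}\opt$. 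For the ``in particular'' claim: if the algorithm exits with $\lambda \le e^{-1}\opt$, the threshold invariant already yields $f(\Qout) \ge \parof{1 - e^{-1}}\opt$ pointwise; otherwise it exits with $t = (1-2\eps)k$ (the step size $\delta$ lands $t$ exactly on the cap), and the supermartingale bound gives $\evof{\opt - f(\Qout)} \le \opt\, e^{-\epsless^2(1-2\eps)} \le \parof{e^{-1} + O(\eps)}\opt$, hence $\evof{f(\Qout)} \ge \apxless\parof{1 - e^{-1}}\opt$. I expect the delicate point to be exactly this last step: because $\delta$, $S$, and the number of sampling iterations are random, the telescoping must be organized as a supermartingale rather than a plain product, and one must verify that $t$ is pinned to a deterministic value at the relevant termination; everything else is a routine transcription of the \algo{parallel-greedy} argument with $f_Q$ replacing $\dmlf{x}$ and conditional expectations inserted.
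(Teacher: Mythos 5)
Your proposal is correct and follows essentially the same route as the paper: the same threshold invariant $\lambda \geq \opt - f(Q)$, the same key identity $\evof{f_Q(R) \given Q} = \mlf{Q + \delta S}{Q} \geq \epsless^2 \lambda \delta \sizeof{S}/k$, and the same conversion into a decay bound in $t$, which the paper phrases informally as a differential inequality (interchanging expectation and derivative via Fubini) exactly where you instead set up the supermartingale $\parof{\opt - f(Q)}\, e^{\epsless^2 t/k}$. The only caveat is that your pathwise comparison $\parof{1 - \epsless^2 \Delta t / k}\Phi \leq e^{-\epsless^2 \Delta t / k}\Phi$ presumes $\Phi = \opt - f(Q) \geq 0$, which can fail on the rare paths where $\sizeof{Q}$ exceeds $k$; the paper's expectation-level argument sidesteps this (and it is easily repaired, e.g.\ by conditioning on the high-probability event of \reflemma{rpg-cardinality}), so it is a cosmetic issue rather than a gap.
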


\begin{proof}
  Given $Q$, We have
  \begin{align*}
    \evof{f_Q(R) \given Q}           %
    \tago{=}                           %
    \mlf{Q + \delta S}{Q}
    \tago{\geq}                        %
    \epsless^2 \frac{\delta \sizeof{S}}{k} \parof{\opt - f(Q)}
  \end{align*}
  by \tagr definition of the multilinear extension and \tagr
  \reflemma{pg-threshold}. Taking expectations over $Q$
  \begin{align*}
    \evof{\frac{d f(Q)}{d t}} %
    \geq                               %
    \evof{\frac{\epsless^2}{k} \parof{\opt - f(Q)}}
    =                           %
    \frac{\epsless^2}{k} \parof{\opt - \evof{f(Q)}}.
  \end{align*}
  By Fubini's theorem, one can interchange the expectation and the
  derivative, which gives
  \begin{align*}
    \frac{d}{dt} \evof{f(Q)}
    =
    \evof{\frac{d}{dt} f(Q)}    %
    \geq                        %
    \frac{\epsless^2}{k} \parof{\opt - \evof{f(Q)}}.
  \end{align*}
  Solving the differential inequality, we have
  \begin{align*}
    \evof{f(Q)}
    \geq                        %
    \parof{1 - e^{- \epsless^2 t / k}} \opt
    \geq
    \apxless \parof{1 - e^{-t / k}} \opt
    ,
  \end{align*}
  as desired.
\end{proof}

\begin{lemma}
  \labellemma{rpg-depth}
  If
  \begin{math}
    \mlf{Q + \delta S}{Q} = \epsless^2 \lambda \frac{\delta \sizeof{S}}{k},
  \end{math}
  then $\sizeof{S}$ decreases by a $\epsless$-multiplicative factor in
  expectation. This implies that, for fixed $\lambda$, the loop at
  \refsubsteps{rpg-inc-loop} iterates at most
  $\bigO{\frac{\log n}{\eps}}$ times in expectation (via
  \reflemma{concentration-decay}). In total, the loop at
  \refsubsteps{rpg-inc-loop} iterates at most
  $\bigO{\frac{\log n}{\eps^2}}$ times in expectation.
\end{lemma}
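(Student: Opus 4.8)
The plan is to mirror the deterministic argument of \reflemma{pg-depth}, but to track $\sizeof{S}$ \emph{in expectation} over the single source of randomness, the sample $R \sim \delta S$ in \refstep{rpg-sample}. Throughout I condition on the history of the algorithm up to the current iteration of \refsubsteps{rpg-inc-loop}; this history determines $Q$, $\lambda$, the step size $\delta$ (a deterministic function of $Q$, $\lambda$ and the conditions in the figure), and the current good set $S' = S$, leaving only $R$ as fresh randomness. Write $S''$ for the recomputed good set after the update $Q \gets Q \cup R$. The goal is to show $\evof{\sizeof{S''}} \le \epsless \sizeof{S'}$ when $\delta$ is forced by \refstep{pg-apx-condition}, i.e.\ when $\mlf{Q + \delta S'}{Q} = \epsless^2 \lambda \delta \sizeof{S'}/k$.

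First I would compute, for each $j \in S'$, the expected new marginal $\evof{f_{Q \cup R}(j)}$. Decomposing $R = R' \cup (\setof{j}\text{ with probability }\delta)$ with $R' \sim \delta(S' \setminus j)$ and using $f_{Q \cup R' \cup j}(j) = 0$ gives $\evof{f_{Q \cup R}(j)} = (1 - \delta)\,\evof{f_{Q \cup R'}(j)} = (1-\delta)\, \dmlf[j]{Q + \delta S'} \le \dmlf[j]{Q + \delta S'}$, where the middle equality is exactly the definition of the multilinear extension and its gradient (using multilinearity to drop the dependence on coordinate $j$). Summing over $j \in S'$ and applying monotone concavity of $F$ along the direction $\delta S'$ (the endpoint first-order bound $\mlf{Q + \delta S'}{Q} \ge \rip{\dmlf{Q + \delta S'}}{\delta S'} = \delta \sum_{j \in S'} \dmlf[j]{Q+\delta S'}$, exactly as in \reflemma{pg-depth}), together with the hypothesis, yields $\sum_{j \in S'} \evof{f_{Q \cup R}(j)} \le \frac1\delta \mlf{Q+\delta S'}{Q} = \epsless^2 \lambda \sizeof{S'}/k$.

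Next I would bound $\evof{\sizeof{S''}}$. By submodularity $f_{Q \cup R}(j) \le f_Q(j)$, so $S'' \subseteq S'$; and every $j \in S''$ has $f_{Q \cup R}(j) \ge \epsless \lambda/k$ by definition, so pointwise $\sizeof{S''} \le \frac{k}{\epsless \lambda} \sum_{j \in S'} f_{Q \cup R}(j)$. Taking expectations and substituting the previous bound gives $\evof{\sizeof{S''}} \le \frac{k}{\epsless \lambda}\cdot \epsless^2 \lambda \sizeof{S'}/k = \epsless \sizeof{S'}$, the claimed expected decrease. For the iteration count: in every iteration of \refsubsteps{rpg-inc-loop} other than the last, $\delta$ is forced by \refstep{pg-apx-condition} (if instead the budget bound $t + \delta\sizeof{S} \le (1-2\eps)k$ is binding then $t$ reaches its cap and the loop exits), so $\evof{\sizeof{S}\text{ after}\mid\text{history}} \le \epsless\, \sizeof{S}\text{ before}$. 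Since $\sizeof{S} \le n$ and the loop halts once $S = \emptyset$, \reflemma{concentration-decay} bounds the number of iterations for a fixed $\lambda$ by $\bigO{\log n/\eps}$ in expectation: $\probof{\text{still running after }t\text{ steps}} \le \probof{\sizeof{S_t} \ge 1} \le \evof{\sizeof{S_t}} \le \epsless^t n$, and summing this tail over $t$ gives $\bigO{\log n/\eps}$. Finally $\lambda$ takes only the $\bigO{1/\eps}$ values $\opt, \epsless\opt, \epsless^2\opt, \dots$ down to $e^{-1}\opt$, so by linearity of expectation the total number of iterations of \refsubsteps{rpg-inc-loop} is $\bigO{\log n/\eps^2}$ in expectation.

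I expect the main obstacle to be the bookkeeping rather than any hard inequality: one must be careful about which quantities are fixed by the history and which are random, so that the decomposition of $R$, the identification of $\evof{f_{Q\cup R'}(j)}$ with $\dmlf[j]{Q+\delta S'}$, and the interchange of expectation with the sum over $j \in S'$ are all legitimate; and one must invoke \reflemma{concentration-decay} in precisely the form that converts ``$\sizeof{S}$ shrinks by a constant factor in expectation per step, starting from $\le n$'' into ``$\bigO{\log n/\eps}$ steps in expectation.'' The algebraic core — the chain $\evof{f_{Q\cup R}(j)} \le \dmlf[j]{Q+\delta S'} \Rightarrow \sum_{j\in S'}\evof{f_{Q\cup R}(j)} \le \epsless^2\lambda\sizeof{S'}/k \Rightarrow \evof{\sizeof{S''}} \le \epsless\sizeof{S'}$ — is short and parallels \reflemma{pg-depth} line for line.
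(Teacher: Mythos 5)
Your proof is correct and follows essentially the same route as the paper's: the hypothesis plus monotone concavity at the endpoint $Q+\delta S$, the identification of $\dmlf[j]{Q+\delta S}$ with the expected margins $f_{Q\cup R}(j)$, restriction to the new good set via submodularity, and the threshold defining $S$ give $\evof{\sizeof{S''}} \le \epsless \sizeof{S'}$, followed by the expected-decay argument (via \reflemma{concentration-decay} or your equivalent Markov-plus-tower-property computation) and the $\bigO{1/\eps}$ values of $\lambda$. Your handling of the $(1-\delta)$ factor relating $\evof{f_{Q\cup R}(j)}$ to the gradient is in fact slightly more careful than the paper's, which writes this step as an equality.
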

\begin{proof}
  We have
  \begin{align*}
    \epsless^2 \lambda \frac{\delta \sizeof{S}}{k} %
    &\tago{=}                                              %
      F_Q(Q + \delta S)                                    %
      \tago{\geq}                       %
      \delta \rip{\dmlf{Q + \delta S}}{S} %
    \\
    &\tago{=}
      \delta \sum_{j \in S} \evof{f_{Q \cup R}(j)} %
      % \\
      % &
          \tago{=}
          \delta \evof{\sum_{j \in S} f_{Q \cup R}(j)}
    \\
    &\tago{\geq}               %
      \delta \evof{\sum_{j \in S'} f_{Q \cup R}(j)}
      \tago{\geq}
      \delta \evof{\sizeof{S'} \frac{\epsless \lambda}{k}} %
      =                                                    %
      \epsless \lambda \frac{\delta}{k} \evof{\sizeof{S'}}
      \labelthisequation{rpg-depth-derivation}
  \end{align*}
  by \tagr choice of $\delta$, \tagr monotonic concavity, \tagr
  definition of $\dmlf$, \tagr linearity of expectation, \tagr
  monotonicity, and \tagr definition of $S'$.  Dividing both sides by
  $\eps \lambda \delta / k$, we have
  \begin{math}
    \evof{\sizeof{S'}} \leq \epsless \sizeof{S}
  \end{math}
  in expectation.
\end{proof}

We now prove \reftheorem{rpg}.
\begin{proof}
  Let $\mu = \evof{f(Q)} \geq \apxless \parof{1 - e^{-1}} \opt$, per
  \reflemma{rpg-apx}. We first show that $f(Q) \geq \mu$ with high
  probability. Let $p = \probof{f(S) \geq \epsless \mu}$. We first
  observe that
  \begin{align*}
    \probof{f(Q) \geq \opt}     %
    \tago{\leq}                        %
    \probof{\sizeof{Q} \geq k}  %
    \tago{\leq}                        %
    \frac{1}{\poly{n}}
  \end{align*}
  because \tagr $f(Q) \leq \opt$ whenever $\sizeof{Q} \leq k$ and
  \tagr by the concentration bound \reflemma{rpg-cardinality}. We have
  \begin{align*}
    \mu = \evof{f(Q)}                 %
    &\tago{=}                           %
      p \evof{f(Q) \given f(Q) \leq \epsless \mu} %
    \\
    &\quad
      +                                           %
      \parof{ 1 - p - \frac{1}{\poly{n}}} \evof{f(Q) \given \epsless
      \mu \leq f(Q) \leq \mu} %
    \\
    &\quad+                           %
      \frac{1}{\poly{n}} \evof{f(Q) \given f(Q) \geq \mu} %
    \\
    &\tago{\leq}
      p \epsless \mu +
      \parof{1 - p - \frac{1}{\poly{n}}} \mu         %
      +
      \frac{1}{\poly{n}} n \mu    %
    \\
    &\leq                      %
      - \eps p \mu + \mu + \frac{\mu n}{\poly{n}}.
  \end{align*}
  by \tagr conditional expectations and \tagr bounding $f(Q)$ by
  $\epsless \mu$, $\mu$, and $n \mu$ respectively.  Dividing
  both sides by $\mu$ and rearranging, we have
  \begin{align*}
    p \leq \frac{n}{\eps \poly{n}} = \frac{1}{\poly{n}}
  \end{align*}
  for a slightly smaller polynomial $\poly{n}$, as desired.

  It remains to account for the depth and oracle calls, and work. The
  depth was proven \reflemma{rpg-depth}. The expected number of oracle
  calls is bounded by multiplying the expected depth by
  $\apxO{n/\eps^2}$, which is the number of random samples needed to
  estimate $\mlf{x}$.
\end{proof}

%%% Local Variables:
%%% mode: latex
%%% TeX-master: "submodular-mwu"
%%% End:

\section{Knapsack constraints}

\labelappendix{knapsack}

\begin{figure}[t]
  \centering
  \begin{framed}
    \ttfamily\raggedright
    \underline{parallel-greedy-knapsack($f$,$\groundset$,$a$)}
    \begin{steps}
    \item
      \begin{math}
        x \gets \setof{j \where a_j \leq \frac{\eps}{n}},
      \end{math}
      $\lambda \gets \opt$ %
      \commentcode{or any upper bound for $\opt$} %
    \item while $\rip{a}{x} \leq 1$ and $\lambda \geq e^{-1} \opt$
      \begin{steps}
      \item let
        \begin{math}
          S = \setof{j \in \groundset \where \dmlf[j]{x} \geq \epsless
            \lambda a_j}
        \end{math}
      \item while $S$ is not empty and $\rip{a}{x} \leq 1$
        \begin{steps}
        \item \labelstep{pgk-heavy} if
          \begin{math}
            a_j \geq \bigOmega{\frac{\eps^2}{\log n}}
          \end{math}
          for some $j \in S$
          \begin{steps}
          \item if $\rip{a}{x} + a_j \leq 1$
            \begin{steps}
            \item \labelstep{pgk-heavy-inc}
              $x_j \gets 1$
            \end{steps}
          \item \labelstep{pgk-heavy-exit} %
            else return $x$
          \end{steps}
        \item \labelstep{pgk-light} else
          \begin{steps}
          \item choose $\delta$ maximal s.t.\
            \begin{steps}
            \item
              \begin{math}
                \mlf{x + \delta S}{x} %
                \geq %
                \epsless^2 \lambda \delta \rip{a}{S}
              \end{math}
            \item
              \begin{math}
                \rip{a}{x + \delta S} \leq 1
              \end{math}
            \end{steps}
          \item
            \labelstep{pgk-light-inc}
            $x \gets x + \delta S$
          \end{steps}
        \end{steps}
      \item $\lambda \gets \epsless \lambda$
      \end{steps}
    \item return $x$
    \end{steps}
    \caption{A parallel implementation of the
      \algo{continuous-greedy} algorithm specialized to the
      knapsack polytope. \labelfigure{parallel-greedy-knapsack}}
  \end{framed}
\end{figure}

In this section, we consider the parallel continuous greedy algorithm
a single knapsack packing constraint, an intermediate setting in
between the cardinality constraint and general packing
constraints. Formally, we consider the following problem:
\begin{align*}
  \text{maximize } F(x) \text{ over } x \geq \ones \text{ s.t.\ }
  \rip{a}{x} \leq 1,
\end{align*}
where $a: \groundset \to [0,1]$ is a positive cost vector. Here we
have normalized the costs so that the size of the knapsack is 1. We
let $\infnorm{a} = \max_j a_j$ be the maximum cost of any item. We
first present algorithms that obtain approximation factor that depend
on $a$; the dependency can then be removed by partial enumeration
(without increasing the depth, but increasing the total amount of
work).

As with the cardinality constraint, we first consider a model where we
have oracle access to the multilinear extension $\mlf$ and its
derivatives $\dmlf$. We present an algorithm that is called
\algo{parallel-greedy-knapsack} and given in
\reffigure{parallel-greedy-knapsack}. It is very similar to
\algo{parallel-greedy}, and can be interpreted as a parallel extension
of the \algo{continuous-greedy} algorithm of \citet{ccpv} specialized
to the knapsack polytope. The primary differences from
\algo{parallel-greedy} are as follows. First, we simply take any
coordinate with cost at most $\eps / n$. This only uses an
$\eps$-fraction of the budget, and the fact that all remaining
coordinates have cost at least $\eps / n$ will be useful in the
analysis. The second difference is probably the most significant
difference, and is as follows. When gathering the set of ``good''
coordinates $S$, rather than comparing the partial derivative
$\dmlf[j]{x}$ of each coordinate to a fixed threshold, we compare the
``bang-for-buck ratio'' $\dmlf[j]{x} / a_j$ of the partial derivative
to the cost to the threshold. Third, when adding coordinates to our
solution, we take special exception for items whose costs are at least
a $\bigOmega{\eps^2 / \log n}$-fraction of the budget.
When a good coordinate $j$ has such a high cost or partial
derivative, we directly set $x_j = 1$ rather than take a fractional
amount. Maintaining the invariant $x_j \in \setof{0,1}$ for all
coordinates $j$ with $a_j \geq \eps^2 / \log n$ is convenient for
applying the Chernoff inequality should one want to round $x$ to a
discrete solution later.

The final bounds and proof are similar to that of the cardinality
constraints in \refsection{cardinality}, and many of the differences
are analogous to the differences between cardinality and knapsack
constraints in the well-known sequential setting. Consequently we
restrict ourselves to brief sketches of proofs, highlighting the main
differences from the proofs of \refsection{cardinality}.

\begin{lemma}
  \labellemma{pgk-threshold} At any point, we have
  $\lambda \geq \opt - \mlf{x}$. This implies the following.
  \begin{enumerate}
  \item In either step \refstep{pgk-heavy-inc} or
    \refstep{pgk-light-inc}, we have
    \begin{align*}
      \frac{d \mlf{x}}{d \rip{a}{x}} \geq \epsless^2 \parof{\opt -
      \mlf{x}},
      \labelthisequation{pgk-apx-deq}
    \end{align*}
    hence
    \begin{align*}
      \mlf{x} \geq \apxless \parof{1 - e^{-\rip{a}{x}}} \opt
    \end{align*}
    at any point.
  \item If $\lambda \leq e^{-1} \opt$, then
    \begin{math}
      \mlf{x} \geq \parof{1 - e^{-1}} \opt.
    \end{math}
  \end{enumerate}
  Thus, if the algorithm terminates with $\rip{a}{x} = 1$ or
  $\lambda \leq e^{-1} \opt$, then we have
  \begin{math}
    \mlf{x} \geq \apxless \parof{1 - e^{-1}} \opt.
  \end{math}
  Otherwise, the algorithm terminates in step
  \refstep{pgk-heavy-exit}, in which case there is an item
  $j \in \groundset$ such that
  \begin{math}
    \mlf{x} \geq \apxless \parof{1 - e^{-\parof{1 - a_j}}} \opt
  \end{math}
  and
  \begin{math}
    \mlf{x + j} \geq \apxless \parof{1 - e^{-1}} \opt.
  \end{math}
\end{lemma}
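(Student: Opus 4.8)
The argument parallels \reflemma{pg-threshold} and the approximation analysis of \algo{parallel-greedy}, the new ingredients being the bang-for-buck thresholds defining $S$ and the special treatment of ``heavy'' coordinates. The backbone is the invariant $\lambda \geq \opt - \mlf{x}$, which I would maintain exactly as in \reflemma{pg-threshold}: it holds initially since $\lambda = \opt$ and $\mlf{x} \geq 0$; whenever $x$ is raised in \refstep{pgk-heavy-inc} or \refstep{pgk-light-inc} with $\lambda$ fixed, $\mlf{x}$ only increases by monotonicity of $\mlf$, so $\opt - \mlf{x}$ drops; and the only delicate moment is the halving $\lambda \gets \epsless\lambda$, which happens only when $S = \emptyset$ for the current $\lambda$, i.e.\ $\dmlf[j]{x} < \epsless \lambda a_j$ for all $j$. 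For an optimal $z$, which has $\rip{a}{z} \leq 1$, a short chain using monotonicity of $\mlf$, monotone concavity, and nonnegativity of $\dmlf$ then gives $\opt - \mlf{x} \leq \mlf{z}{x} \leq \rip{\dmlf{x}}{z} \leq \epsless \lambda \rip{a}{z} \leq \epsless \lambda$, so the invariant survives the update.

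Next I would convert per-step progress into the stated exponential bound. In \refstep{pgk-light-inc} the choice of $\delta$ together with the invariant gives $\mlf{x + \delta S}{x} \geq \epsless^2 \lambda \delta \rip{a}{S} \geq \epsless^2 \parof{\opt - \mlf{x}} \delta \rip{a}{S}$, while $\rip{a}{x}$ grows by exactly $\delta \rip{a}{S}$. In \refstep{pgk-heavy-inc} the coordinate $j \in S$ has $a_j$ far above $\eps/n$, so $j$ was not set by the initial assignment and $x_j = 0$ just before it is raised; hence multilinearity yields $\mlf{x + j}{x} = \dmlf[j]{x} \geq \epsless \lambda a_j \geq \epsless^2 \parof{\opt - \mlf{x}} a_j$, while $\rip{a}{x}$ grows by $a_j$. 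So every step satisfies $\opt - \mlf{x'} \leq \parof{1 - \epsless^2 \Delta} \parof{\opt - \mlf{x}}$ where $\Delta$ is the increase in $\rip{a}{x}$; chaining this across all steps (and absorbing both the initial budget $\rip{a}{x} \leq \eps$ and the gap between $\epsless^2$ and $\epsless$ into $\apxless$) yields $\mlf{x} \geq \apxless \parof{1 - e^{-\rip{a}{x}}} \opt$ at every point, which is the claimed differential inequality in integrated form. The assertion that $\lambda \leq e^{-1}\opt$ forces $\mlf{x} \geq \parof{1 - e^{-1}}\opt$ is then immediate from the invariant.

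Finally I would split on how \algo{parallel-greedy-knapsack} terminates. If it stops with $\rip{a}{x} = 1$, the bound above gives $\mlf{x} \geq \apxless\parof{1 - e^{-1}}\opt$; if it stops with $\lambda \leq e^{-1}\opt$, the invariant gives it. The remaining exit is \refstep{pgk-heavy-exit}, triggered by some $j \in S$ with $\rip{a}{x} + a_j > 1$, so $\rip{a}{x} > 1 - a_j$ and the bound reads $\mlf{x} \geq \apxless\parof{1 - e^{-\parof{1 - a_j}}}\opt$. For $\mlf{x + j}$ I would append that one deferred heavy step: since $j \in S$, $\dmlf[j]{x} \geq \epsless \lambda a_j \geq \epsless\parof{\opt - \mlf{x}}a_j$, so $\opt - \mlf{x + j} \leq \parof{1 - \epsless a_j}\parof{\opt - \mlf{x}} \leq \opt\, e^{-\epsless^2 \parof{\rip{a}{x} + a_j}} \leq \opt\, e^{-\epsless^2}$, hence $\mlf{x + j} \geq \apxless\parof{1 - e^{-1}}\opt$.

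\textbf{Main obstacle.} The genuinely new bookkeeping is the heavy-coordinate case: checking that a coordinate entering \refstep{pgk-heavy-inc} still has $x_j = 0$ (so multilinearity applies cleanly and its marginal equals $\dmlf[j]{x}$), and that the single item deferred at \refstep{pgk-heavy-exit} already pushes the effective budget past $1$ and recovers the full $(1 - e^{-1})$ factor. The passage from the discrete per-step gains to the integrated exponential bound is routine, mirroring \reflemma{pmwu-apx}, but it is where the various $\epsless$ versus $\epsless^2$ constants and the $O(\eps)$-size initial budget get folded into the $\apxless$ factor.
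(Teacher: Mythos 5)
Your proposal is correct and follows essentially the same route as the paper's (sketched) proof: the invariant $\lambda \geq \opt - \mlf{x}$ maintained exactly as in \reflemma{pg-threshold} with $\rip{a}{z} \leq 1$ replacing the cardinality bound, multilinearity handling the heavy steps, and the integrated differential inequality yielding the exponential bound and the \refstep{pgk-heavy-exit} case. You in fact supply details the paper leaves implicit (that $x_j = 0$ before a heavy update, and the explicit derivation that appending the deferred item $j$ recovers the full $\parof{1 - e^{-1}}$ factor), which is consistent with the paper's argument.
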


\begin{proof}[Proof sketch]
  The proof is the same as \reflemma{pg-threshold}, with the only
  change being that we now have
  \begin{align*}
    \rip{\dmlf{x}}{z}           %
    \tago{\leq}                 %
    \epsless \lambda \rip{z}{a} %
    \tago{\leq}                 %
    \epsless \lambda            %
  \end{align*}
  by \tagr emptiness of $S$ and \tagr $\rip{z}{a} \leq k$. We should
  note that, at the beginning of each iteration of step
  \refstep{pgk-heavy-inc}, we have
  \begin{math}
    \mlf{x + e_j} = \mlf{x} + \dmlf[j]{x}
  \end{math}
  by multilinearity of $\mlf$, which gives the differential inequality
  \refequation{pgk-apx-deq} when increasing $x$ in step
  \refstep{pgk-heavy-inc}.
\end{proof}

\begin{lemma}
  \labellemma{pgk-heavy-depth} \algo{parallel-greedy-knapsack} enters
  the \algo{if} clause \refsubsteps{pgk-heavy} at most
  $\bigO{\frac{\log n}{\eps^2}}$ times over the course of the
  algorithm.
\end{lemma}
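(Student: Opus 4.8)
The plan is to charge each entry into the \algo{if} clause \refsubsteps{pgk-heavy} against the budget constraint $\rip{a}{x} \le 1$. The key observation is that when the clause is entered, it either terminates the algorithm (via \refstep{pgk-heavy-exit}), which can happen at most once, or it executes \refstep{pgk-heavy-inc}, which permanently sets $x_j \gets 1$ for a coordinate $j \in S$ with $a_j \ge \bigOmega{\eps^2/\log n}$; write $a_j \ge c\,\eps^2/\log n$ for the hidden constant $c > 0$. First I would record two elementary facts. (a) $x$ is only ever increased over the course of the algorithm, and a coordinate $j$ with $x_j = 1$ is henceforth dropped from $S$ (part of the ``update $S$'' bookkeeping), so distinct executions of \refstep{pgk-heavy-inc} act on distinct coordinates. (b) The invariant $\rip{a}{x} \le 1$ is preserved throughout: every increase of $x$ is guarded either by the while-loop condition $\rip{a}{x} \le 1$ together with the maximal feasible choice of $\delta$ in the light branch, or by the explicit test $\rip{a}{x} + a_j \le 1$ preceding \refstep{pgk-heavy-inc}.

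Given these, let $J \subseteq \groundset$ collect the coordinates set to $1$ by some execution of \refstep{pgk-heavy-inc}. By (a) these coordinates are distinct, and by (b), at the end of the run $\sum_{j \in J} a_j \le \rip{a}{x} \le 1$. Since $a_j \ge c\,\eps^2/\log n$ for every $j \in J$, this forces $\sizeof{J} \le \log n/(c\,\eps^2) = \bigO{\log n/\eps^2}$. Each entry into \refsubsteps{pgk-heavy} either adds a fresh element to $J$ or is the unique final entry that exits via \refstep{pgk-heavy-exit}, so the clause is entered at most $\sizeof{J} + 1 = \bigO{\log n/\eps^2}$ times, as claimed.

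The one point deserving care is fact (a): a coordinate must not be able to trigger the clause twice. This is immediate once saturated coordinates are excluded from $S$, which is the natural convention and is needed anyway to keep the inner loop finite; I would simply invoke it rather than argue — as one alternatively could — that a coordinate already at $1$ has $\rip{a}{x} \ge a_j$ and so fails the re-entry test $\rip{a}{x}+a_j \le 1$ once $a_j$ is a large enough fraction of the budget. Everything else follows directly from the budget constraint and the lower bound $a_j = \bigOmega{\eps^2/\log n}$ built into the definition of the clause, so I do not expect any real obstacle; this lemma is essentially the knapsack analogue of the ``heavy coordinate'' counting in \reflemma{pmwu-weight-step-count}.
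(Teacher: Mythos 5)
Your proof is correct and takes essentially the same route as the paper: charge each execution of the heavy branch against the unit budget, since each such step consumes at least $\Omega(\eps^2/\log n)$ of $\rip{a}{x} \le 1$ (plus at most one terminating exit). Your extra care that distinct heavy steps act on distinct coordinates (saturated coordinates leaving $S$) is a point the paper glosses over but does not change the argument.
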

\begin{proof}
  Each time we enter the if clause (except possibly the last), we
  increase $t$ by at least $\bigOmega{\frac{\eps^2}{\log n}}$. But we
  always have $t \leq 1$.
\end{proof}

\begin{lemma}
  \labellemma{pgk-light-depth}
  If
  \begin{math}
    \mlf{x + \delta S}{x} %
    = %
    \epsless^2 \lambda \delta \rip{a}{S},
  \end{math}
  then the step \refstep{pgk-light-inc} decreases $\rip{a}{S}$ by at
  least a $\epsless$-multiplicative factor. Since $\rip{a}{S}$ ranges
  from at most $n$ to at least $\eps / n$ (unless $S$ is empty), this
  implies that, for fixed $\lambda$, the steps \refsubsteps{pgk-light}
  iterate at most $\bigO{\log n / \eps}$ times, and at most
  $\bigO{\frac{\log n}{\eps^2}}$ times total.
\end{lemma}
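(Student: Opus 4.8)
The plan is to mirror the proof of \reflemma{pg-depth}, replacing the cardinality measure $\sizeof{S}$ by the weighted measure $\rip{a}{S}$ and the ``good coordinate'' threshold $\dmlf[j]{x} \geq \epsless\lambda/k$ by the bang-for-buck threshold $\dmlf[j]{x} \geq \epsless\lambda a_j$. Let $x'$, $S'$ be the values of $x$ and $S$ entering the execution of \refstep{pgk-light-inc}, and let $x'' = x' + \delta S'$ and $S''$ be their values afterward; the goal is $\rip{a}{S''} \leq \epsless\rip{a}{S'}$.

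The crux is a four-term chain of inequalities. The hypothesis gives $\mlf{x''}{x'} = \epsless^2 \lambda\delta\rip{a}{S'}$. By monotone concavity of $F$ along the ray $\tau\mapsto F(x' + \tau S')$, the marginal $\mlf{x''}{x'} = F(x'') - F(x')$ is at least $\delta$ times the value of the derivative at $\tau = \delta$, so $\mlf{x''}{x'} \geq \delta\rip{\dmlf{x''}}{S'}$. Since $\dmlf{x''}\geq\zeroes$ by monotonicity and $S'' \subseteq S'$ (the set $S$ only shrinks as $x$ grows), $\rip{\dmlf{x''}}{S'} \geq \rip{\dmlf{x''}}{S''}$. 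Finally, every surviving coordinate $j\in S''$ satisfies $\dmlf[j]{x''}\geq\epsless\lambda a_j$ by the definition of $S''$ with respect to $x''$, whence $\rip{\dmlf{x''}}{S''}\geq\epsless\lambda\rip{a}{S''}$. Concatenating and cancelling the common factor $\epsless\lambda\delta$ yields $\rip{a}{S''}\leq\epsless\rip{a}{S'}$.

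For the iteration bound, note that each pass of \refsubsteps{pgk-light} either has $\delta$ determined by the approximation condition --- in which case that condition holds with equality (the concave left side meets the linear right side at the maximal $\delta$, once the budget constraint is not what binds), the lemma applies, and $\rip{a}{S}$ drops by $\epsless$ --- or has $\delta$ cut off by the budget constraint, in which case $\rip{a}{x}=1$ afterward and the inner loop exits. By the preprocessing of \refsection{background}, every coordinate still eligible for $S$ has $a_j\geq\eps/n$ (coordinates with $a_j = 0$ or $a_j\leq\eps/n$ are set to $1$ in the first line and do not participate), so $\eps/n\leq\rip{a}{S}\leq n$ whenever $S\neq\emptyset$. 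Hence, for a fixed $\lambda$, steps of the first kind occur $\bigO{\log(n^2/\eps)/\eps}=\bigO{\log n/\eps}$ times (using $\eps>\poly{1/n}$), plus at most one step of the second kind. Since a constant-factor upper estimate of $\opt$ is available without increasing the depth (running $\bigO{\log n}$ geometric guesses for $\lambda$ in parallel), $\lambda$ decreases only $\bigO{1/\eps}$ times, for a total of $\bigO{\log n/\eps^2}$.

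The substantive content is the inequality chain, which is essentially identical to that of \reflemma{pg-depth}; I expect the only mild obstacles to be bookkeeping: verifying that the budget-limited branch is taken at most once per outer iteration (so it cannot inflate the count), and pinning down the lower bound $\rip{a}{S}\geq\eps/n$ so that zero-cost and already-saturated coordinates are correctly excluded from the potential. Neither should cause real difficulty.
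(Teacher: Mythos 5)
Your proof is correct and follows essentially the same route as the paper: the paper simply reruns the chain from \reflemma{pg-depth} (choice of $\delta$, monotone concavity, $S'' \subseteq S'$ with $\dmlf \geq \zeroes$, and the bang-for-buck definition of $S$) with the endpoints changed to $\epsless^2\lambda\delta\rip{a}{S'}$ and $\epsless\lambda\delta\rip{a}{S''}$, exactly as you do, and your counting of iterations via the range $\eps/n \leq \rip{a}{S} \leq n$ matches the paper's. No gaps worth noting.
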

\begin{proof}[Proof sketch]
  The proof is the same as in \reflemma{pg-depth}, except the
  modified definition of $S$ changes the endpoints of equation
  \refequation{pg-depth-derivation} to
  \begin{math}
    \epsless^2 \lambda \delta \rip{a}{S'}
  \end{math}
  and
  \begin{math}
    \epsless \lambda \delta \rip{a}{S''},
  \end{math}
  respectively.
\end{proof}
We conclude with a theorem summarizing the analysis. We note that the
upper bound on the depth comes from the combination of
\reflemma{pgk-heavy-depth} and \reflemma{pgk-light-depth}.  The oracle
complexity and total work follow from essentially the same analysis as
the cardinality setting.
\begin{theorem}
  \algo{parallel-greedy-knapsack} computes a vector $x$ with the
  following properties.
  \begin{enumerate}[label={\alph*}]
  \item $\rip{a}{x} \leq 1$
  \item
    \begin{math}
      \mlf{x} \geq \apxless \parof{1 - e^{-1 - \infnorm{a}}} \opt,
    \end{math}
    where $\infnorm{a} = \max_j a_j$ is the maximum cost of any item.
  \item If $\mlf{x} < \apxless \parof{1 - e^{-1}}\opt$, then there is
    an item $j \in \groundset$ such that
    \begin{math}
      f(x + j) \geq \parof{1 - \bigO{\eps}} \parof{1 - e^{-1}} \opt.
    \end{math}
  \item If $a_j \geq \frac{c \eps^2}{\log n}$ (for any desired
    constant $c > 0$), then $x_j \in \setof{0,1}$.
  \end{enumerate}
  \algo{parallel-greedy-knapsack} has depth
  $\bigO{\frac{\log n}{\eps^2}}$, uses
  $\bigO{\frac{n \log n}{\eps^2}}$ oracle calls to $\mlf{x}$ and
  coordinates of $\dmlf{x}$, and does total work
  $\apxO{\frac{n}{\eps^2}}$.
\end{theorem}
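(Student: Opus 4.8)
\emph{Proof plan.} The theorem is essentially an assembly of \reflemma{pgk-threshold}, \reflemma{pgk-heavy-depth}, and \reflemma{pgk-light-depth}, so the plan is to read each of the four output properties and the resource bounds off those lemmas, supplying only routine bookkeeping. Property (a) I would get by inspecting \algo{parallel-greedy-knapsack} directly: the initialization puts mass only on coordinates $j$ with $a_j \le \eps/n$, so $\rip{a}{x} \le n \cdot (\eps/n) \le 1$ to start; and every subsequent change to $x$ is guarded, since step \refstep{pgk-heavy-inc} raises $x_j$ to $1$ only when $\rip{a}{x}+a_j \le 1$ and step \refstep{pgk-light-inc} moves along $\delta S$ only for $\delta$ with $\rip{a}{x+\delta S}\le 1$. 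Hence $\rip{a}{x}\le 1$ is an invariant and (a) holds.

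Properties (b) and (c) I would obtain as a direct repackaging of the termination analysis already carried out in \reflemma{pgk-threshold}. If the outer loop exits with $\rip{a}{x}=1$ or $\lambda \le e^{-1}\opt$, that lemma gives $\mlf{x}\ge \apxless\parof{1-e^{-1}}\opt$, which implies (b) and makes the hypothesis of (c) vacuous; otherwise the algorithm returned at \refstep{pgk-heavy-exit}, and \reflemma{pgk-threshold} exhibits the witnessing heavy item $j$ with $\rip{a}{x}>1-a_j \ge 1-\infnorm{a}$, with $\mlf{x}\ge \apxless\parof{1-e^{-(1-a_j)}}\opt$, and with $\mlf{x+j}\ge\apxless\parof{1-e^{-1}}\opt$ --- the first two yielding (b) (since $a_j\le\infnorm{a}$) and the last yielding (c), which is exactly the regime in which $\mlf{x}$ can dip below $\apxless\parof{1-e^{-1}}\opt$. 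The small thing to nail down here is that $x_j=0$ at the instant of exit, so that $x+j$ is the relevant point; this comes out of the argument for (d).

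For (d) I would trace the history of a coordinate $j$ with $a_j \ge c\eps^2/\log n$: it begins at $x_j=0$ (the initialization touches only coordinates with $a_j\le\eps/n$, not heavy for small $\eps$), and the only branch of the inner loop that can modify it is the heavy branch \refsubsteps{pgk-heavy}, because as soon as a heavy coordinate lies in $S$ the \algo{if} at \refstep{pgk-heavy} fires and the light branch \refsubsteps{pgk-light} is skipped, so no light step ever increases a heavy coordinate; and that branch either sets $x_j\gets 1$ or returns, so $x_j\in\setof{0,1}$ throughout (taking the heavy-threshold constant in the algorithm to be at most $c$). For the depth, the outer loop shrinks $\lambda$ by a $\epsless$ factor per pass and stops below $e^{-1}\opt$, so it runs $\bigO{1/\eps}$ times; by \reflemma{pgk-heavy-depth} and \reflemma{pgk-light-depth} the inner loop takes the heavy branch at most $\bigO{\log n/\eps^2}$ times and the light branch at most $\bigO{\log n/\eps^2}$ times in total, so there are $\bigO{\log n/\eps^2}$ inner iterations overall. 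I would then check that each iteration costs $\bigO{1}$ adaptive rounds: assembling $S$ is one parallel round of $n$ queries to coordinates of $\dmlf$, and a $\parof{1+\eps/2}$-approximation to the maximal $\delta$ in the light branch is found by testing the $\bigO{\log n/\eps}$ relevant powers of $\epsmore$ (between $\poly{\eps/n}$ and $1$) in parallel with one $\mlf$-query each, exactly as in the cardinality analysis of \refsection{cardinality}. Multiplying then gives depth $\bigO{\log n/\eps^2}$, a total of $\bigO{n\log n/\eps^2}$ queries to $\mlf$ and to coordinates of $\dmlf$, and, charging $\bigO{n}$ work per iteration to maintain $S$ and the guards, total work $\apxO{n/\eps^2}$.

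Since the heavy lifting is in the three cited lemmas, the only place that calls for real care is the \refstep{pgk-heavy-exit} case of (b) and (c): one must confirm that the heavy item $j$ triggering that exit has $x_j=0$ at that instant, and then track the constants through the differential-inequality bound of \reflemma{pgk-threshold} to land exactly on the stated $\infnorm{a}$-dependent ratio and, with one further item added, on $\parof{1-\bigO{\eps}}\parof{1-e^{-1}}\opt$. Everything else is bookkeeping on top of the iteration counts and the cardinality-constraint template.
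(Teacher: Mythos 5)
Your proposal matches the paper's own (very brief) proof, which likewise just assembles \reflemma{pgk-threshold} for properties (b)--(c), \reflemma{pgk-heavy-depth} and \reflemma{pgk-light-depth} for the depth, and the cardinality-setting analysis of \refsection{cardinality} for the oracle-call and work bounds, with (a) and (d) read directly off the guarded updates of the algorithm. Your reading of the exponent in (b) as $e^{-(1-\infnorm{a})}$ (consistent with \reflemma{pgk-threshold}; the ``$e^{-1-\infnorm{a}}$'' in the statement is evidently a typo) and your flagging of the $x_j=0$ detail at \refstep{pgk-heavy-exit} are both in line with the intended argument.
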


\subsection{Oracle complexity w/r/t $f$}

\begin{figure}[t]
  \centering
  % \begin{minipage}{.6\paperwidth}
  \begin{framed}
    \ttfamily\raggedright
    \underline{randomized-parallel-greedy-knapsack($f$,$a$,$\eps$)}
    \begin{steps}
    \item $Q \gets \emptyset$, $t \gets 0$, $\lambda \gets \opt$
      \commentcode{or any upper bound for $\opt$}
    \item while $t \leq 1$ and $\lambda \geq e^{-1} \opt$
      \begin{steps}
      \item let $S = \setof{j \in \groundset \where f_Q(j) \geq
          \epsless \lambda a_j}$
      \item while $S$ is not empty and $t \leq 1 - \eps$
        \begin{steps}
        \item if
          \begin{math}
            a_j \geq \bigOmega{\frac{\eps^2}{\log n}}
          \end{math}
          or
          \begin{math}
            F_Q(j) \geq \bigOmega{\frac{\eps^2 \opt}{\log n}}
          \end{math} for some $j \in S$
          \begin{steps}
          \item if $t + a_j \leq 1 - \eps$
            \begin{steps}
            \item $Q \gets Q + j$, $t \gets t + a_j$
            \end{steps}
          \item else return $Q$
          \end{steps}
        \item else
          \begin{steps}
          \item
            choose $\delta > 0$ maximal s.t.\
            \begin{steps}
            \item
              \begin{math}
                \mlf{Q + \delta S}{Q} \geq \epsless^2 \delta
                \rip{a}{S}
              \end{math}
            \item $t + \delta \rip{a}{S} \leq 1 - \eps$
            \end{steps}
          \item \labelstep{rpgk-sample} sample $R \sim \delta S$
          \item
            $Q \gets Q \cup R$, $t \gets t + \delta \rip{a}{S}$
          \end{steps}
        \item $\lambda \gets \epsless \lambda$
        \end{steps}
      \end{steps}
    \item return $Q$
    \end{steps}
    \caption{A randomized, combinatorial variant of the previous
      \algo{parallel-greedy-knapsack} algorithm for knapsack
      constraints.\labelfigure{randomized-parallel-greedy-knapsack}}
  \end{framed}
\end{figure}

\begin{lemma}
  Let $\Qout$ be the final set $Q$ output by
  \algo{randomized-parallel-greedy-knapsack}. With high probability,
  \begin{math}
    \epsless t \leq \rip{a}{x} \leq \epsmore t \leq 1.
  \end{math}
\end{lemma}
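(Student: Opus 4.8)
The plan is to follow the template of \reflemma{rpg-cardinality}, now tracking the $a$-weighted cardinality $\rip{a}{Q}$ instead of $\sizeof{Q}$, and to accommodate two extra features of the knapsack algorithm: the desired bound is two-sided, and there is a deterministic ``heavy'' branch in addition to the randomized sampling step.

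The rightmost inequality $\epsmore t \le 1$ is deterministic. One checks by induction on the iterations that after every update $t \le 1-\eps$: in the heavy branch we perform $Q \gets Q+j$, $t \gets t+a_j$ only after verifying $t+a_j \le 1-\eps$, and in the light branch $\delta$ is chosen so that $t + \delta\rip{a}{S} \le 1-\eps$. Hence $t \le 1-\eps$ throughout, and for $\eps$ sufficiently small $\epsmore t \le (1+\eps)(1-\eps) < 1$.

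For the two concentration bounds, I would show that $t$ is exactly the predictable compensator of the random process $\rip{a}{Q}$ and that the martingale increments are tiny. Decompose the execution into heavy and light iterations. A heavy iteration adds $a_j$ to both $t$ and $\rip{a}{Q}$, so it contributes nothing to the discrepancy $\rip{a}{Q}-t$; all randomness comes from \refstep{rpgk-sample}. For each light iteration $\ell$ and each $j\in S_\ell$ let $X_{\ell,j}\in\setof{0,1}$ indicate that $j$ is drawn into $R$ in that iteration, and let $Y_{\ell,j}$ be the conditional expectation of $X_{\ell,j}$ given all outcomes of earlier iterations; then $Y_{\ell,j}=\delta_\ell$ for $j\in S_\ell$, so the increment of $t$ in iteration $\ell$, namely $\delta_\ell\rip{a}{S_\ell}=\sum_{j\in S_\ell}a_jY_{\ell,j}$, is exactly the conditional expectation of the increment $\rip{a}{R_\ell}=\sum_{j\in S_\ell}a_jX_{\ell,j}$ of $\rip{a}{Q}$ (items of $R_\ell$ are new, since any $j\in Q$ has $f_Q(j)=0$ and so leaves $S$). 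Summing, $\rip{a}{\Qout}-t=\sum_{\ell,j}a_j(X_{\ell,j}-Y_{\ell,j})$ is a sum of martingale differences. The crucial quantitative point is that the light branch is entered only when \emph{every} coordinate of $S$ is light, in particular $a_j=\bigO{\eps^2/\log n}$; hence each difference $a_j(X_{\ell,j}-Y_{\ell,j})$ is bounded by $\bigO{\eps^2/\log n}$ in absolute value, and the predictable quadratic variation is at most $\bigO{\eps^2/\log n}\cdot\sum_{\ell,j}a_jY_{\ell,j}\le\bigO{\eps^2/\log n}\cdot t$. Feeding these two estimates into the online/martingale Chernoff--Bernstein bound (\reflemma{online-chernoff}), exactly as in the proof of \reflemma{rpg-cardinality}, yields $\probof{|\rip{a}{\Qout}-t| \ge \eps t + \bigO{\eps^2}}\le 1/\poly{n}$: the factor $\eps^2/\log n$ in the step bound makes the exponent $\bigOmega{\log n}$. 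Rearranging (and, if one wants a perfectly clean statement, absorbing the $\bigO{\eps^2}$ additive term by running the outer loop with threshold $1-2\eps$ so that $\epsmore t + \bigO{\eps^2}\le 1$ still holds) gives $\epsless t\le\rip{a}{\Qout}\le\epsmore t\le 1$ with high probability.

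I expect the main obstacle to be precisely this last concentration step, which differs from \reflemma{rpg-cardinality} in that (i) we now need a two-sided, hence also a lower, tail bound, and (ii) we cannot invoke an assumption like $k\ge c\log n/\eps^2$ to make $t$ ``large'', since the knapsack budget is normalized to $1$ and $t$ may be small. Both difficulties are overcome by the smallness of the per-step weights: because every item processed in the randomized branch has $a_j=\bigO{\eps^2/\log n}$ --- which is exactly why the algorithm diverts heavy items to the deterministic branch --- the martingale increments and quadratic variation are small enough that the exponent is $\bigOmega{\log n}$ even for modest $t$, and the residual additive slack is $\bigO{\eps^2}$, comfortably within the $\Theta(\eps)$ of headroom left below $1$. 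A minor bookkeeping point is that the number of summands, the sets $S_\ell$, and the step sizes $\delta_\ell$ are themselves random; this is the standard ``online'' setting that \reflemma{online-chernoff} is formulated to handle, and the argument goes through as for \reflemma{rpg-cardinality}.
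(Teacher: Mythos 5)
Your proof follows essentially the same route as the paper's (brief) proof sketch: $t$ is the running conditional expectation of $\rip{a}{Q}$, heavy items are handled deterministically so every random increment is at most $O(\eps^2/\log n)$, and the online Chernoff bound of \reflemma{online-chernoff}, applied as in \reflemma{rpg-cardinality}, gives the concentration; your martingale bookkeeping just makes this explicit. One small quantitative note: after rescaling the light increments to $[0,1]$, the additive slack needed for inverse-polynomial failure probability is $\Theta(\log n/\eps)\cdot O(\eps^2/\log n)=O(\eps)$ with a small constant (not $O(\eps^2)$ as you claim), but this is still absorbed by the headroom created by capping $t$ at $1-\eps$, so the argument stands.
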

\begin{proof}[Proof sketch]
  We use $t$ to track the expected size of $\rip{a}{Q}$. Every item
  with cost at least $c \eps^2 / \log n$, for some small constant
  $c > 0$, is treated deterministically. Conversely, every randomized
  decision contributes at most $c \eps^2 / \log n$ to $\rip{a}{Q}$. By
  online extensions of the Chernoff inequality
  (\reflemma{online-chernoff}, applied similarly to to the proof of
  \reflemma{rpg-cardinality}), the total cost is at most
  $\epsmore t \leq 1$ with probability $\geq 1 - 1 / \poly{n}$.
\end{proof}

\begin{lemma}
  \labellemma{rpgk-apx}
  At each time $t$,
  \begin{math}
    \evof{f(Q)} \geq \epsless^2 \parof{1 - \bigO{\eps}} \parof{1 -
      e^{-t / k}}.
  \end{math}
  In particular, since the algorithm exits with either $t = (1 - 2
  \eps) k$ or $\opt - f(S) \leq e^{-1} \opt$, the final set $\Qout$
  satisfies
  \begin{align*}
    \evof{f(\Qout)} \geq \apxless \parof{1 - e^{-1}} \opt.
  \end{align*}
\end{lemma}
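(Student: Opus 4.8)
The plan is to run the argument of \reflemma{rpg-apx} in the knapsack setting, showing that each iteration of \algo{randomized-parallel-greedy-knapsack} increases $\evof{f(Q)}$, per unit increase of the budget tracker $t$, by at least $\epsless^2\parof{\opt - f(Q)}$. The engine is again the threshold invariant $\lambda \geq \opt - f(Q)$: since the marginals $f_Q(\cdot)$ used by the algorithm are computed exactly, this invariant holds along every sample path exactly as in \reflemma{pgk-threshold} (whenever $\lambda$ is about to drop, $S$ is empty w.r.t.\ the current $\lambda$, so for an optimal $z$ we get $\opt - f(Q) \leq \mlf{z}{Q} \leq \rip{\dmlf{Q}}{z} \leq \epsless\lambda\rip{a}{z} \leq \epsless\lambda$), and I will use it freely. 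The new wrinkle compared with \reflemma{rpg-apx} is the split between the deterministic ``heavy'' branch (which, for some $j\in S$ of large cost or large marginal, sets $Q \gets Q + j$, $t \gets t + a_j$) and the randomized ``light'' branch (which samples $R \sim \delta S$, sets $Q \gets Q \cup R$, $t \gets t + \delta\rip{a}{S}$); I need the per-step estimate in both.

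For a heavy step, membership $j \in S$ gives $f_Q(j) \geq \epsless \lambda a_j \geq \epsless\parof{\opt - f(Q)} a_j \geq \epsless^2\parof{\opt - f(Q)} a_j$, while $t$ grows by exactly $a_j$; the step is deterministic, so the bound holds pathwise. For a light step, note first that $S \cap Q = \emptyset$ (an element of $Q$ has zero marginal and hence fails the membership test), so by the definition of the multilinear extension $\evof{f_Q(R) \given Q} = \mlf{Q + \delta S}{Q}$; by the greedy choice of $\delta$ (made so that $\mlf{Q + \delta S}{Q} \geq \epsless^2 \lambda \delta \rip{a}{S}$) together with the threshold invariant, this conditional expectation is at least $\epsless^2\parof{\opt - f(Q)}\delta\rip{a}{S}$, while $t$ grows by exactly $\delta\rip{a}{S}$. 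Combining the two cases, conditioning on the history and then taking the outer expectation (and interchanging $\frac{d}{dt}$ with $\evof{\cdot}$ by Fubini, exactly as in \reflemma{rpg-apx}), one obtains $\frac{d}{dt}\evof{f(Q)} \geq \epsless^2\parof{\opt - \evof{f(Q)}}$.

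Solving this gives $\evof{f(Q)} \geq \parof{1 - e^{-\epsless^2 t}}\opt$, and the elementary inequality $1 - e^{-ct} \geq c\parof{1 - e^{-t}}$ for $c\in[0,1]$ upgrades this to $\evof{f(Q)} \geq \epsless^2\parof{1 - e^{-t}}\opt$, which is the asserted bound after folding the constant into $\epsless^2\parof{1 - \bigO{\eps}}$. For the ``in particular'' statement I would split on how the algorithm terminates: if it exits with $t \geq 1 - \bigO{\eps}$, the rate bound already yields $\evof{f(\Qout)} \geq \apxless\parof{1 - e^{-1}}\opt$; if it exits because $\lambda$ has dropped to $e^{-1}\opt$ (so $S$ is empty at the last decrease and $\lambda \geq \opt - f(\Qout)$ still holds), then $f(\Qout) \geq \opt - \lambda \geq \parof{1 - e^{-1}}\opt$ along every sample path; and if it returns early inside the heavy branch because $t + a_j > 1 - \eps$, I would argue as in \reflemma{pgk-threshold} and the analysis of \algo{parallel-greedy-knapsack}, obtaining a bound with an extra $\infnorm{a}$-dependence that is then removed by partial enumeration over the $\bigO{\log n/\eps}$ costliest elements of the optimum, without increasing the depth.

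The step I expect to require the most care is the conditioning bookkeeping in the light branch: the identity $\evof{f_Q(R)\given Q} = \mlf{Q + \delta S}{Q}$ holds only conditionally on the current $Q$ (on which both $S$ and $\delta$ depend), so the passage to $\frac{d}{dt}\evof{f(Q)} \geq \epsless^2\parof{\opt - \evof{f(Q)}}$ must go through the conditional expectation first, then the outer expectation, using linearity and Fubini exactly as in \reflemma{rpg-apx}; and one must check that the deterministic heavy steps, which carry no randomness, satisfy the same per-step estimate (they do, since any added $j$ lies in $S$). The randomization itself is not the obstacle: the only genuinely random operation, the sampling $R \sim \delta S$, enters solely through the expectation identity already exploited in \reflemma{rpg-apx}, and any concentration needed downstream is precisely the high-probability bound on $\rip{a}{Q}$ established in the preceding lemma.
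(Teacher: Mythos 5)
Your proposal is correct and takes essentially the same route as the paper, whose proof of this lemma is literally the argument of \reflemma{rpg-apx} with the threshold invariant of \reflemma{pg-threshold} replaced by that of \reflemma{pgk-threshold}; your per-branch accounting (deterministic heavy step versus sampled light step), the identity $\evof{f_Q(R)\given Q}=\mlf{Q+\delta S}{Q}$, and the Fubini interchange are exactly the details the paper's sketch leaves implicit. The only quibble is peripheral: the partial-enumeration fix for the early-exit case guesses a constant number (three) of high-value elements of the optimum rather than the $O(\log n/\eps)$ costliest ones, but this lies outside the lemma proper and is handled by the paper in a separate subsection.
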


\begin{proof}[Proof sketch]
  The proof essentially the same as \reflemma{rpg-apx}, except
  \reflemma{pg-threshold} is replaced by \reflemma{pgk-threshold}.
\end{proof}

\begin{lemma}
  \labellemma{rpgk-depth} If
  \begin{math}
    \dmlf{Q + \delta S}{Q} = \epsless^2 \delta \rip{a}{S}
  \end{math}
\end{lemma}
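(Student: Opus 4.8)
The plan is to prove \reflemma{rpgk-depth} as the randomized--knapsack hybrid of \reflemma{rpg-depth} (randomized, cardinality) and \reflemma{pgk-light-depth} (deterministic, knapsack): under the stated equality, the potential $\rip{a}{S}$ should drop by a $\epsless$-multiplicative factor \emph{in expectation}, so that the inner loop of \algo{randomized-parallel-greedy-knapsack} runs $\bigO{\log n / \eps}$ times in expectation for a fixed $\lambda$ and $\bigO{\log n / \eps^2}$ times in total. The core is to replay the derivation \refequation{rpg-depth-derivation}, replacing the cardinality ratios $\sizeof{S}/k$ and $\sizeof{S'}/k$ by the cost sums $\rip{a}{S}$ and $\rip{a}{S'}$, and using that here $S = \setof{j \in \groundset \where f_Q(j) \ge \epsless \lambda a_j}$ is the ``bang-for-buck'' set, exactly as the endpoints of \refequation{pg-depth-derivation} were rescaled in the proof sketch of \reflemma{pgk-light-depth}.

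Concretely, write $x' = Q + \delta S$ for the point after the step, $R \sim \delta S$ for the sample drawn in step \refstep{rpgk-sample}, $Q' = Q \cup R$, and $S'$ for the updated good set with respect to $Q'$. Starting from the hypothesis $\mlf{x'}{Q} = \epsless^2 \lambda \delta \rip{a}{S}$, monotone concavity of $\mlf$ gives $\mlf{x'}{Q} \ge \delta \rip{\dmlf{x'}}{S} = \delta \sum_{j \in S} \dmlf[j]{x'}$. Each partial derivative satisfies $\dmlf[j]{x'} \ge \evof{f_{Q'}(j)}$ --- in fact they agree up to the harmless factor $1 - \delta$, since sampling $j$ into $Q'$ zeroes its marginal --- so we may further lower-bound by $\delta \sum_{j \in S} \evof{f_{Q'}(j)} = \delta \evof{\sum_{j \in S} f_{Q'}(j)}$ using linearity of expectation. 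Because $S' \subseteq S$ (marginals only shrink as $Q$ grows, so $j \in S'$ forces $j \in S$) and marginals are nonnegative, this is at least $\delta \evof{\sum_{j \in S'} f_{Q'}(j)}$, and the defining inequality of $S'$ --- namely $f_{Q'}(j) \ge \epsless \lambda a_j$ for $j \in S'$ --- makes it at least $\epsless \lambda \delta \evof{\rip{a}{S'}}$. Chaining these and dividing through by $\epsless \lambda \delta$ yields $\evof{\rip{a}{S'}} \le \epsless \rip{a}{S}$, which is the claim.

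For the iteration count I would argue as in \reflemma{rpg-depth}: whenever $\delta$ is cut off by the objective-progress condition rather than by the budget, the above shows that $\rip{a}{S}$ shrinks by a factor $\epsless$ in expectation, while for nonempty $S$ the quantity $\rip{a}{S}$ stays in a $\poly{n / \eps}$-width window --- as in \reflemma{pgk-light-depth}, $\rip{a}{S}$ is at most $\sum_j a_j \le n$ (since $\infnorm{a} \le 1$) and at least $\eps / n$ (every surviving coordinate has cost at least $\eps/n$). \reflemma{concentration-decay} then converts this per-step geometric decay into a bound of $\bigO{\log n / \eps}$ inner iterations in expectation for each fixed $\lambda$; since $\lambda$ decreases geometrically from $\opt$ down to $e^{-1} \opt$ over $\bigO{1 / \eps}$ outer iterations, the total is $\bigO{\log n / \eps^2}$. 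Combined with a bound on the ``heavy''-item steps (bounded analogously to \reflemma{pgk-heavy-depth}) and the approximation guarantee \reflemma{rpgk-apx}, this finishes the analysis of \algo{randomized-parallel-greedy-knapsack}.

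The routine portion is the cost-weighted rewrite of the chain. The only thing that needs care is keeping the random survivor set $S'$ \emph{inside} a single expectation, so that linearity of expectation and nonnegativity of the marginals do all the work, and then invoking \reflemma{concentration-decay} to pass from expected multiplicative decay to an expected iteration count; both of these are already arranged in the corresponding steps of \reflemma{rpg-depth}, so I expect no genuinely new obstacle. In particular, unlike the general packing analysis (\reflemma{pmwu-gradient-step}), the potential $\rip{a}{S}$ here is honestly monotone decreasing --- $Q$ only grows, so every marginal $f_Q(j)$ only falls --- and the clean geometric-decay argument applies directly, without the extra care needed there when $x$ is simultaneously increasing.
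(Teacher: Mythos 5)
Your proof is correct and follows essentially the same route as the paper's own (sketched) argument: replay the chain \refequation{rpg-depth-derivation} with the cost-weighted endpoints $\epsless^2 \lambda \delta \rip{a}{S}$ and $\epsless \lambda \delta \evof{\rip{a}{S'}}$, using $S' \subseteq S$, nonnegativity of marginals, and the bang-for-buck definition of $S$, then convert the expected $\epsless$-decay of $\rip{a}{S}$ (bounded between $\eps/n$ and $n$) into the iteration bound via \reflemma{concentration-decay}, exactly as in \reflemma{rpg-depth} and \reflemma{pgk-light-depth}. No substantive difference from the paper's proof.
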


\begin{proof}[Proof sketch]
  The same proof as \reflemma{rpg-depth} goes through, except the
  modified definition of $S$ changes the first term of the derivation
  \refequation{rpg-depth-derivation} is replaced with
  \begin{math}
    \epsless^2 \lambda \delta \rip{a}{S}
  \end{math}
  and the last two terms of \refequation{rpg-depth-derivation} are
  replaced by
  \begin{align*}
    \cdots \geq \delta \evof{\epsless \lambda \rip{a}{S'}} \geq %
    \epsless \lambda \delta \evof{\rip{a}{S'}}.
  \end{align*}
  We note that the objective value is tightly concentrated because any
  item with margin $\geq \bigOmega{\frac{\eps^2 \opt}{\log n}}$ is
  decided deterministically.
\end{proof}

\begin{theorem}
  \algo{randomized-parallel-greedy-knapsack} returns a randomized set $Q$ such
  that
  \begin{enumerate}[label=(\alph*)]
  \item $\rip{a}{Q} \leq 1$ with high probability.
  \item
    \begin{math}
      \evof{f(Q)} \geq \apxless \parof{1 - e^{-(1 - \infnorm{a})}} \opt
    \end{math}
    and
    \begin{math}
      f(Q) \geq \epsless \evof{f(Q)}
    \end{math}
    with high probability, where $\infnorm{a} = \max_j a_j$ is the
    maximum cost of any item.
  \item
    \begin{math}
      \evof{\max_{j \in [n]} f(Q + j)} \geq \apxless \parof{1 -
        e^{-1}} \opt,
    \end{math}
    and
    \begin{math}
      \max_{j \in [n]} f(Q + j) \geq \evof{\max_{j \in [n]} f(Q + j)}
    \end{math}
    with high probability.
  \end{enumerate}
  \algo{randomized-parallel-greedy-knapsack} has depth
  $\bigO{\frac{\log n}{\eps^2}}$, and uses
  $\apxO{\frac{n}{\eps^2}}$ oracle calls to $f$.
\end{theorem}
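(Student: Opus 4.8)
The plan is to read off the four conclusions of the theorem from the lemmas already established for \algo{randomized-parallel-greedy-knapsack}, following the template of the proof of \reftheorem{rpg} and importing the deterministic reasoning behind \reflemma{pgk-threshold}. Part~(a) is immediate from the preceding feasibility lemma, which gives $\rip{a}{Q} \le \epsmore t \le 1$ with high probability. For the expectation bound in part~(b), the case where the algorithm runs until $t$ reaches the budget or $\lambda \le e^{-1}\opt$ is exactly \reflemma{rpgk-apx}; when instead it exits early through the ``heavy'' branch (the ``else return $Q$'' step), I would argue as in the ``otherwise'' clause of \reflemma{pgk-threshold} --- there is a rejected item $j$ whose cost $a_j$ accounts for the unused budget, so integrating the in-expectation growth bound of \reflemma{rpgk-apx} (that $\evof{f(Q)}$ increases with $\rip{a}{Q}$ at rate at least $\epsless^2\parof{\opt - \evof{f(Q)}}$) up to $\rip{a}{Q} \ge 1 - a_j \ge 1 - \infnorm{a}$ yields $\evof{f(Q)} \ge \apxless\parof{1 - e^{-(1 - \infnorm{a})}}\opt$.

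For the high-probability bound $f(Q) \ge \epsless\evof{f(Q)}$ I would reuse the averaging argument from the proof of \reftheorem{rpg} almost verbatim: writing $\mu = \evof{f(Q)}$, use that $f(Q) \le \opt$ whenever $Q$ is feasible (which by part~(a) fails with probability at most $1/\poly{n}$) and that $f(Q) = \bigO{n\opt}$ always, decompose $\mu$ into its conditional contributions over the events $\setof{f(Q) < \epsless\mu}$, $\setof{\epsless\mu \le f(Q) \le \mu}$ and $\setof{f(Q) > \mu}$, and solve the resulting inequality for $p = \probof{f(Q) < \epsless\mu}$ to obtain $p \le 1/\poly{n}$. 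What makes this legitimate here --- and the one place that needs genuine care --- is that every item with cost $a_j \ge \bigOmega{\eps^2/\log n}$ or margin $f_Q(j) \ge \bigOmega{\eps^2\opt/\log n}$ is added deterministically, so each random coin flip in step \refstep{rpgk-sample} perturbs $\rip{a}{Q}$ and $f(Q)$ by only an $\bigO{\eps^2/\log n}$ fraction of the quantity being concentrated; this is exactly what feeds the online Chernoff bound behind the feasibility lemma and the almost-sure bound $f(Q) \le \opt$ used in the decomposition.

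Part~(c) splits on whether the algorithm terminates early. If it does not, then $\evof{f(Q)} \ge \apxless(1 - e^{-1})\opt$ already and trivially $\evof{\max_j f(Q + j)} \ge \evof{f(Q)}$; if it exits through the heavy branch, the argument above produces a specific rejected item $j^\star$ with $\evof{f(Q + j^\star)} \ge \apxless(1 - e^{-1})\opt$, so $\evof{\max_j f(Q + j)} \ge \evof{f(Q + j^\star)}$. The high-probability lower bound for $\max_j f(Q + j)$ then follows from the same averaging argument applied to this random variable, which is likewise $\bigO{n\opt}$ always and at most $\opt$ (up to lower-order terms) once $Q$ is feasible.

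Finally, for the resource bounds I would combine \reflemma{rpgk-depth} (at most $\bigO{\log n/\eps^2}$ ``light'' inner iterations over the whole run, in expectation) with a count of heavy-branch entries: exactly as in \reflemma{pgk-heavy-depth}, each entry (bar the last) either consumes an $\bigOmega{\eps^2/\log n}$ fraction of the unit budget $t$ or --- since $Q$ is feasible with high probability, so $f(Q) \le \opt$ --- raises $f(Q)$ by an $\bigOmega{\eps^2/\log n}$ fraction of $\opt$, so there are $\bigO{\log n/\eps^2}$ of them and the depth is $\bigO{\log n/\eps^2}$. The oracle count follows as in \refsection{cardinality-oracle}: assembling $S$ from the margins $f_Q(\cdot)$ costs $\bigO{n}$ queries per step, while each candidate value of $\mlf{Q + \delta S}{Q}$ is estimated to relative accuracy $\eps$ using $\apxO{n/\eps^2}$ samples by \reflemma{dmlf-sample}, for $\apxO{n/\eps^2}$ queries to $f$ in total, a union bound over the $\apxO{1/\eps^2}$ estimates controlling the overall failure probability. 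I expect the main obstacle to be exactly the concentration step flagged above: verifying that the randomized low-cost, low-margin decisions contribute only a negligible fraction of $\rip{a}{Q}$ and of $f(Q)$, so that the deterministic upper bound on $f(Q)$ and the online Chernoff feasibility bound can both be invoked cleanly; everything else transports routinely from the cardinality analysis of \refsection{cardinality}.
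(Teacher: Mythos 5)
Your proposal is correct and follows essentially the same route the paper intends: part (a) from the feasibility lemma, part (b) from \reflemma{rpgk-apx} together with the early-exit argument of \reflemma{pgk-threshold}, part (c) from the rejected heavy item, the high-probability statements via the averaging argument of \reftheorem{rpg} (legitimized, as you note, by the deterministic handling of high-cost/high-margin items), and the depth from \reflemma{rpgk-depth} plus the heavy-branch count as in \reflemma{pgk-heavy-depth}. Your extra observation that heavy entries triggered by the margin condition must be counted separately via the $\bigOmega{\eps^2\opt/\log n}$ increase in $f(Q)$ is a detail the paper leaves implicit, and you fill it in correctly.
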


\subsection{Partial enumeration}

Above we derived low-depth algorithms for knapsack constraints with an
approximation factor that degrades with the maximum cost of an
item. This suffices for many real applications, where large costs are
exceptional. For theoretical purposes, it is preferable to obtain
approximation ratios independent of the cost of any large item, which
may be as much as 1 (for which the corresponding approximation bound
is vacuous). In the sequential setting, the well-known technique of
``partial enumeration'' removes the dependence on the maximum cost and
obtains the same approximation ratio as the cardinality constraint
\citep{sviridenko}. In partial enumeration, one initializes the
solution ($x$ or $Q$) with different combinations of a constant number
of initial elements (3 suffices), hoping to guess the largest margin
items in the optimal solution.  It is easy to see that partial
enumeration extends here as well, and obviously can be done in
parallel without increasing the depth.

\begin{theorem}
  In $\bigO{\frac{\log n}{\eps^2}}$ depth, one can compute an
  $\apxless \parof{1 - e^{-1}}$-multiplicative approximation to
  maximizing a normalized monotone submodular function subject to a
  knapsack constraint.
\end{theorem}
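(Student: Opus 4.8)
The plan is to remove the $\infnorm{a}$ dependence of \algo{parallel-greedy-knapsack} (equivalently \algo{randomized-parallel-greedy-knapsack}) by Sviridenko's partial enumeration, run in parallel. For every $T \subseteq \groundset$ with $\sizeof{T} \le 3$ and $\rip{a}{T} \le 1$ — there are $\bigO{n^3}$ of them — run the knapsack algorithm of the previous subsection on the \emph{contracted} instance: the submodular function $f_T(\cdot) = f(T \cup \cdot) - f(T)$ over $\groundset \setminus T$ with residual budget $1 - \rip{a}{T}$ (rescaled to a unit knapsack; elements that do not fit on top of $T$ are discarded). Output the branch maximizing $f(T)$ plus the value returned by its contracted run. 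The $\bigO{n^3}$ branches have no dependencies among them, so they execute in parallel and the depth stays $\bigO{\log n/\eps^2}$; the reduction only multiplies the total work and the number of oracle calls by a polynomial factor, and for the randomized variant a union bound over $\bigO{n^3}$ branches (after boosting each branch's success probability to $1 - 1/\poly{n}$) preserves the high-probability guarantee.

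For correctness, fix an optimal set $O$ with $\rip{a}{O} \le 1$; if $\sizeof{O} \le 3$ the branch $T = O$ is already exact, so assume $\sizeof{O} > 3$. List $O = \setof{o_1, o_2, \dots}$ so that each $o_i$ maximizes $f_{\setof{o_1,\dots,o_{i-1}}}(\cdot)$ over the remaining elements of $O$, and consider the branch $T^\star = \setof{o_1, o_2, o_3}$. Telescoping and submodularity make $f_\emptyset(o_1) \ge f_{\setof{o_1}}(o_2) \ge f_{\setof{o_1,o_2}}(o_3)$ non-increasing with sum $f(T^\star)$, so $f_{T^\star}(e) \le f_{\setof{o_1,o_2}}(o_3) \le \tfrac13 f(T^\star) \le \tfrac13 \opt$ for every $e \in O \setminus T^\star$: once $T^\star$ is committed, the contracted instance carries no element of large marginal, while its optimum is still $\opt - f(T^\star)$, attained by $O \setminus T^\star$. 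This is the combinatorial input to Sviridenko's analysis. Now apply \reflemma{pgk-threshold} to the $T^\star$-contracted run. Either the run already returns a solution worth at least $\apxless\parof{1 - e^{-1}}\parof{\opt - f(T^\star)}$ beyond $f(T^\star)$, whence the branch certifies $\apxless\parof{1 - e^{-1}}\opt$; or it exits at \refstep{pgk-heavy-exit} with a would-be witness item $j$ that does not fit, and then, because $T^\star$ already contains the largest-marginal elements of $O$, the gradient $\dmlf[j]{x} = \mlf{j}{x \setminus j}$ at the offending item — and likewise the marginals of the ``big'' items committed integrally at \refstep{pgk-heavy} — are bounded relative to $\opt$, so Sviridenko's accounting (which gives $(1-1/e)$ in the discrete setting for the $\sizeof{T}\le 3$ enumeration) ported here still yields $\parof{1 - e^{-1} - \bigO{\eps}}\opt$ for this branch. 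In either case the output certifies $\apxless\parof{1 - e^{-1}}\opt$.

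The main obstacle is exactly this last step: carrying Sviridenko's discrete argument over to the continuous process of \reffigure{parallel-greedy-knapsack}. One must determine, at the instant the contracted run would stop at \refstep{pgk-heavy-exit}, how much of the residual budget $1 - \rip{a}{T^\star}$ the fractional solution $x$ has consumed, bound the marginal of the offending item $j$ using the removal of $O$'s three largest-marginal elements, and combine these through the (contracted form of the) differential inequality \refequation{pgk-apx-deq} of \reflemma{pgk-threshold} to see that the boundary shortfall is $\bigO{\eps}\opt$ rather than $\infnorm{a}\cdot\opt$. The remaining pieces — dispatching the $\bigO{n^3}$ runs in parallel, the unchanged depth, rounding a fractional output to a feasible set (using that the high-cost coordinates of the output are already integral, together with independent rounding of the low-cost coordinates as in \refsection{rounding}, or simply invoking \algo{randomized-parallel-greedy-knapsack} directly), and the oracle/work bookkeeping — are routine and parallel what was already done for \algo{parallel-greedy-knapsack}.
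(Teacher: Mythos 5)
Your overall route is the same as the paper's: enumerate all $\bigO{n^3}$ triples $T$ in parallel, run the knapsack algorithm of the appendix on the contracted instance $f_T$ with the residual budget, and argue that the branch $T^\star$ consisting of the three largest greedy-ordered elements of the optimum $O$ certifies $\apxless\parof{1-e^{-1}}\opt$ (the paper itself only sketches this, citing Sviridenko and asserting the extension is easy). The gap is in the one step you flag as the crux and then assert goes through: you bound the marginal of the item $j$ that triggers \refstep{pgk-heavy-exit} by appealing to the fact that $T^\star$ contains the largest-marginal elements of $O$. But the Sviridenko ordering only gives $f_{T^\star}(e) \le \frac{1}{3} f(T^\star)$ for elements $e \in O \setminus T^\star$, and the offending item $j$ at \refstep{pgk-heavy-exit} need not belong to $O$ at all: it is merely a heavy, good-density item that no longer fits. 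Such a $j$ can have $f_{T^\star}(j)$ equal to a constant fraction of $\opt$ (for instance a single high-value item whose cost is close to the whole budget, which is excluded from $O$ precisely because it cannot be combined with the rest of $O$), in which case the ``greedy plus overflow item'' accounting gives no useful bound for this branch, and it is not at all immediate that the maximum over the other branches rescues the $\parof{1-e^{-1}-\bigO{\eps}}$ guarantee.

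In the discrete setting this issue is invisible because Sviridenko's greedy \emph{skips} an item that does not fit and continues, and the analysis is performed at the first moment an element of $O \setminus T$ is skipped; skipping non-optimal items is harmless since the density recursion only compares the greedy choice against the still-feasible elements of $O$. To port this honestly you need either to modify \refstep{pgk-heavy-exit} so that the non-fitting heavy item is discarded from $S$ (and from further consideration) and the run continues, with the differential inequality of \reflemma{pgk-threshold} re-derived relative to $\opt - f(T^\star)$ up to the first time an element of $O \setminus T^\star$ is discarded, or to supply a genuinely different accounting over the branches. As written, your justification for the boundary case is invalid, and this is exactly the step on which the claimed $\infnorm{a}$-free bound rests; the parallel dispatch, depth, union bound, and oracle bookkeeping are indeed routine, as you say.
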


Note that although partial enumeration does not increase the depth, it
does increase the total number of oracle queries and work by a
$\bigO{n^3}$-multiplicative factor.  \citet{en-17} recently obtained
an alternative to partial enumeration that increases the total work
and number of oracle queries by a
$\bigO{\exp{\poly{1/\eps}}}$-multiplicative factor instead, which is
preferable for modest values of $\eps$. The techniques may extend
here, but the details are tedious and beyond the scope of this paper.

%%% Local Variables:
%%% mode: latex
%%% TeX-master: "submodular-mwu"
%%% End:

\section{Concentration bounds}

\subsection{Online Chernoff inequalities}

We employ the following online extension of multiplicative Chernoff
inequalities, previously used in \cite{young-00,cq-18}.
\begin{lemma}
  \labellemma{online-chernoff}
  Let $X_1,\dots,X_n,Y_1,\dots,Y_n \in [0,1]$ be random variables and
  let $\eps > 0$ be sufficiently small.
  If
  \begin{math}
    \evof{X_i \given X_{1},\dots,X_{i-1},Y_1,\dots,Y_i} \leq Y_i
  \end{math}
  for $i \in [n]$, then for any $\delta > 0$,
  \begin{align*}
    \probof{\sum_{i=1}^n X_i \geq \epsmore \sum_{i=1}^n Y_i + \delta}
    \leq                        %
    \epsmore^{-\delta}.
  \end{align*}
\end{lemma}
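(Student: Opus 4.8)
The plan is to run the standard exponential-moment (Chernoff) argument, with the adaptive conditioning packaged as a supermartingale. Set $\eta = \ln\epsmore$, so that $e^{\eta} = \epsmore$ and $e^\eta - 1 = \eps$, and let $\mathcal{G}_i = \sigma(X_1,\dots,X_{i-1},Y_1,\dots,Y_i)$ denote the information available just before $X_i$ is revealed; the hypothesis reads $\evof{X_i \mid \mathcal{G}_i} \leq Y_i$. The first ingredient is the elementary convexity bound $e^{\eta x} \leq 1 + (e^\eta - 1)x$, valid for $x \in [0,1]$, which together with $X_i, Y_i \in [0,1]$ gives
\[
  \evof{e^{\eta X_i} \mid \mathcal{G}_i} \;\leq\; 1 + (e^\eta - 1)\,\evof{X_i \mid \mathcal{G}_i} \;\leq\; 1 + \eps Y_i \;\leq\; \exp(\eps Y_i).
\]

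Next I would define $M_k = \exp\!\bigl(\eta \sum_{i=1}^k X_i - \eps \sum_{i=1}^k Y_i\bigr)$ with $M_0 = 1$. Since $\sum_{i=1}^{k-1}X_i$ and $\sum_{i=1}^{k}Y_i$ are $\mathcal{G}_k$-measurable, the display above yields $\evof{M_k \mid \mathcal{G}_k} \leq M_{k-1}$, hence, taking expectations and iterating, $\evof{M_n} \leq \evof{M_0} = 1$; this is where the online/adaptive form of the hypothesis enters. I would then observe that on the event $E = \{\sum_i X_i \geq \epsmore \sum_i Y_i + \delta\}$ one has $M_n \geq \epsmore^{\delta}$: substituting $\sum_i X_i \geq \epsmore \sum_i Y_i + \delta$ into the exponent of $M_n$ and using $e^\eta - 1 = \eps$ shows the exponent is at least $\bigl(\epsmore \ln\epsmore - \eps\bigr)\sum_i Y_i + \eta \delta$, and since $\epsmore\ln\epsmore - \eps \ge 0$ for every $\eps > 0$ (it vanishes at $\eps = 0$ and has derivative $\ln\epsmore > 0$) while $\sum_i Y_i \ge 0$, the exponent is at least $\eta\delta = \delta\ln\epsmore$.

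The conclusion follows by Markov's inequality applied to the nonnegative variable $M_n$:
\[
  \probof{E} \;\leq\; \probof{M_n \geq \epsmore^{\delta}} \;\leq\; \frac{\evof{M_n}}{\epsmore^{\delta}} \;\leq\; \epsmore^{-\delta}.
\]
I do not anticipate a genuine obstacle: the argument is entirely standard, and the ``$\eps$ sufficiently small'' hypothesis is not actually needed for this proof. The only things to get right are the measurability bookkeeping in the supermartingale step (so that $\sum_{i\le k}Y_i$ may be treated as a constant inside $\evof{\cdot \mid \mathcal{G}_k}$) and the one-line inequality $\epsmore\ln\epsmore \ge \eps$.
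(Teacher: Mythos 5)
Your proof is correct, and it needs no repair: the exponential-moment/supermartingale argument with $\eta=\ln(1+\eps)$, the convexity bound $e^{\eta x}\le 1+\eps x$ on $[0,1]$, the product estimator $M_k$, the inequality $(1+\eps)\ln(1+\eps)\ge\eps$, and Markov's inequality together give exactly the stated bound, and you are right that ``$\eps$ sufficiently small'' is not used. The paper itself supplies no proof of this lemma --- it imports it from the cited prior work --- and your argument is essentially the standard one found there, so there is nothing of substance to compare beyond noting that your write-up is a correct self-contained derivation.
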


\subsection{Concentration bounds for decay processes}

\begin{lemma}\labellemma{concentration-decay}
  Let $X_1,X_2,\dots \in \naturalnumbers$ where $X_1 = n$, and for $i
  \geq 2$,
  \begin{math}
    \evof{X_i \given X_1,\dots,X_{i-1}} \leq \max{\epsless X_{i-1},
      1}.
  \end{math}
  Then there exists a constant $c > 0$ such that, for
  $k \in \naturalnumbers$
  \begin{align*}
    \probof{X_k > 1} \leq \exp{- c \frac{k \eps}{\log n}}.
  \end{align*}
\end{lemma}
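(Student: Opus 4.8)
The plan is to exploit that $X=1$ is an absorbing state of the recursion: after shifting the variables down by one, the $\max$-recursion collapses to an honest multiplicative supermartingale, and then Markov's inequality plus a one-line computation delivers the claimed form.

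First I would note that we may take $X_i \ge 1$ throughout (in the application the $X_i$ are cardinalities of nonempty sets, and in any case only the event $\{X_k \ge 2\}$ matters). Set $Z_i = X_i - 1 \ge 0$, so $Z_1 = n-1$. The one substantive point is the elementary inequality
\begin{math}
  \max\{(1-\eps)x,\,1\} - 1 \le (1-\eps)(x-1)
\end{math}
valid for every integer $x \ge 1$ and every $\eps \le 1/2$: it is an equality ($0=0$) at $x=1$, and for $x \ge 2$ one has $(1-\eps)x \ge 1$, so the left side is $(1-\eps)x - 1 = (1-\eps)(x-1) - \eps$. Plugging this into the hypothesis $\evof{X_i \mid X_1,\dots,X_{i-1}} \le \max\{(1-\eps)X_{i-1},1\}$ gives $\evof{Z_i \mid X_1,\dots,X_{i-1}} \le (1-\eps) Z_{i-1}$ for all $i \ge 2$; equivalently $(1-\eps)^{-i}Z_i$ is a nonnegative supermartingale. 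Iterating (taking total expectations repeatedly), $\evof{Z_k} \le (1-\eps)^{k-1}(n-1) \le n\, e^{-\eps(k-1)}$, and Markov's inequality then yields $\probof{X_k > 1} = \probof{Z_k \ge 1} \le \evof{Z_k} \le n\, e^{-\eps(k-1)}$ (which is also trivially at most $1$).

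Finally I would rewrite this in the stated form. For $k$ at least a suitable constant multiple of $(\log n)/\eps$ — say $k \ge 2(\log n)/\eps$ — one checks, by taking logarithms, that $n\, e^{-\eps(k-1)} \le \exp(-c k \eps/\log n)$ with, e.g., $c = 1/4$: the inequality reduces to $\eps k\,(1 - c/\log n) \ge \log n + \eps$, which holds since $\eps k \ge 2\log n$ while $1 - c/\log n \ge 1/2$. For smaller $k$ the asserted bound is (near-)trivial, and in the applications the lemma is only used through the bound $\probof{X_k > 1} \le \min\{1,\, n(1-\eps)^{k-1}\}$ anyway (e.g.\ to sum a geometric tail into an $O((\log n)/\eps)$ bound on the expected number of iterations). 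I do not anticipate a real obstacle here; the only place to be careful is verifying that the $\max$ in the recursion still collapses to a clean multiplicative decay after subtracting $1$, which is precisely the elementary inequality above, and everything else is routine.
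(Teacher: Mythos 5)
Your proof is correct (given the reading $X_i \geq 1$, which is also what the paper's own proof implicitly requires), but it takes a genuinely different route. The paper works with the potential $Y_i = \log X_{i-1} - \log X_i$, lower-bounds its conditional drift by $\Omega(\eps)$ via Jensen's inequality, and then invokes a separate concentration lemma for decay processes that groups the $Y_i$'s into blocks of length about $\frac{\log n}{\eps}$ and applies a Chernoff-type lower-tail bound to each block; note that this argument quietly uses $Y_i \geq 0$ (monotonicity of the $X_i$'s, true in the application but not implied by the stated hypothesis) and absorption at $1$. Your argument instead observes that $Z_i = X_i - 1$ is, up to the factor $(1-\eps)$, a nonnegative supermartingale, so that $\evof{Z_k} \leq (1-\eps)^{k-1}(n-1)$ and Markov gives $\probof{X_k > 1} \leq n(1-\eps)^{k-1}$. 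This is more elementary (no concentration inequality at all), needs neither monotonicity nor absorption, and in the regime where the bound is nontrivial it is actually stronger than the stated $\exp\parof{-c k \eps/\log n}$. Both proofs share the same caveat about small $k$: for $k \lesssim \frac{\log n}{\eps}$ (indeed already for $k=1$) the literal inequality in the lemma cannot hold in general since $\probof{X_k>1}$ may equal $1$, and the paper's own argument only yields the bound at multiples of roughly $\frac{\log n}{\eps}$; you flag this explicitly and note, correctly, that the application (an $O\parof{\frac{\log n}{\eps}}$ bound on the expected number of iterations) only needs $\min\setof{1,\, n(1-\eps)^{k-1}}$. Your constant-checking for $k \geq 2\log n/\eps$ is off by an additive $\eps$ as written, but this is fixed by taking a slightly smaller $c$ or a slightly larger threshold on $k$; it is not a substantive gap.
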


\begin{proof}
  Let $Y_i = \log{X_{i-1}} - \log{X_i}$.
  Then
  \begin{math}
    0 \leq Y_i \leq \log n,
  \end{math}
  and
  \begin{math}
    \sum Y_i \leq \log n.
  \end{math}
  For each $i$, if $X_{i-1} = 1$ (and the process has essentially
  halted), we have $Y_i = 0$. Otherwise,
  we have
  \begin{align*}
    \evof{Y_i \given X_1,\dots,X_{i-1}} %
    = %
    \log{X_{i-1}} - \evof{\log{X_i}}
    \tago{\geq} %
    \log{X_{i-1}} - \log{\evof{X_i}}
    = %
    \log{\frac{1}{1 - \eps}}    %
    \geq %
    c \eps
  \end{align*}
  for some constant $c > 0$.  The claim now follows from applying the
  following lemma to the variables $\setof{Y_i / c \eps}$ with
  $K = \frac{\log n}{c \eps}$.
\end{proof}

\begin{lemma}
  Let $Y_1,Y_2, \dots \geq 0$ and $K > 0$ such that
  \begin{enumerate}
  \item For any $\ell \in \naturalnumbers$, $\sum_{i=1}^{\ell} Y_i \leq K$.
  \item For any $\ell \in \naturalnumbers$, if
    $\sum_{i=1}^{\ell} Y_i < K$, then
    \begin{math}
      \evof{Y_{\ell+1} \given Y_1,\dots,Y_{\ell}} \geq 1.
    \end{math}
\item For any $\ell \in \naturalnumbers$, if
    $\sum_{i=1}^{\ell} Y_i = K$, the $Y_{\ell+1} = 0$.
  \end{enumerate}
  Then for $i \in \naturalnumbers$,
  \begin{math}
    \probof{\sum_{j=1}^{i \roundup{K}} Y_j < K} \leq e^{- c i},
  \end{math}
  for some absolute constant $c >0$.
\end{lemma}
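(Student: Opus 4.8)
The plan is to reduce the statement to a hitting-time bound and then prove that bound with an exponential supermartingale. Write $S_\ell = \sum_{j=1}^\ell Y_j$ and $\tau = \inf\{\ell \ge 0 : S_\ell = K\}$. Since $Y_j \ge 0$ the partial sums are nondecreasing and, by (1), bounded by $K$; together with (3) this gives $S_\ell = K$ for every $\ell \ge \tau$, so $\{\sum_{j=1}^{L} Y_j < K\} = \{\tau > L\}$ and it suffices to bound $\probof{\tau > i\roundup{K}}$. We may assume $K \ge 1$: otherwise (2) applied with $\ell = 0$ would force $\evof{Y_1} \ge 1$ while (1) forces $Y_1 = S_1 \le K < 1$, so the hypotheses are vacuous.

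The crux is that the $Y_j$ are not bounded by a constant — only by $K$, which may be large — and we only control the first conditional moment ($\evof{Y_{\ell+1} \mid \mathcal{F}_\ell} \ge 1$ on $\{S_\ell < K\}$), so Azuma/Bernstein-type arguments do not apply directly. The remedy is to combine $Y_j \le K$ with convexity of $y \mapsto e^{-\theta y}$ on $[0,K]$. Take $\theta = 1/K$; then $e^{-\theta y} \le 1 - \frac{y}{K}(1 - e^{-1})$ for $y \in [0,K]$, so on $\{\ell < \tau\}$ we get $\evof{e^{-\theta Y_{\ell+1}} \mid \mathcal{F}_\ell} \le 1 - \frac{1-e^{-1}}{K}\evof{Y_{\ell+1}\mid\mathcal{F}_\ell} \le 1 - \frac{1-e^{-1}}{K} =: 1/\rho$, with $\rho = \bigl(1 - (1-e^{-1})/K\bigr)^{-1} > 1$ (positive since $K \ge 1 > 1 - e^{-1}$). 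Consequently the stopped process $V_\ell := e^{-\theta S_{\ell \wedge \tau}}\,\rho^{\,\ell \wedge \tau}$ is a nonnegative supermartingale with $V_0 = 1$: for $\ell < \tau$, $\evof{V_{\ell+1}\mid\mathcal{F}_\ell} = V_\ell \cdot \rho \cdot \evof{e^{-\theta Y_{\ell+1}}\mid\mathcal{F}_\ell} \le V_\ell$, and for $\ell \ge \tau$ it is constant.

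Now I would fix $L = i\roundup{K}$ and use the supermartingale inequality $\evof{V_L} \le V_0 = 1$. On $\{\tau > L\}$ we have $S_L < K$, hence $e^{-\theta S_L} \ge e^{-\theta K} = e^{-1}$, hence $V_L = e^{-\theta S_L}\rho^L \ge e^{-1}\rho^L$; therefore $1 \ge \evof{V_L} \ge e^{-1}\rho^L\,\probof{\tau > L}$, i.e.\ $\probof{\tau > L} \le e\,\rho^{-L}$. Finally $\rho^{-1} = 1 - (1-e^{-1})/K \in (0,1)$ and $L = i\roundup{K} \ge iK$, so $\rho^{-L} \le \bigl(1 - (1-e^{-1})/K\bigr)^{iK} \le e^{-(1-e^{-1})i}$, which gives
\[
  \probof{\sum_{j=1}^{i\roundup{K}} Y_j < K} \;=\; \probof{\tau > i\roundup{K}} \;\le\; e^{\,1 - (1-e^{-1})\,i}.
\]
This is at most $e^{-ci}$ for an absolute constant $c>0$ once $i$ exceeds a fixed absolute threshold; for the boundedly many smaller values of $i$ one shrinks $c$ accordingly (the same looseness in the small-parameter regime is already present in the statement of \reflemma{concentration-decay}, which this lemma feeds).

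The step I expect to be the main obstacle is precisely turning the bare first-moment drift into an exponential tail despite unbounded increments; the $\theta = \Theta(1/K)$ convexity trick is what converts ``$\evof{Y_{\ell+1}\mid\mathcal{F}_\ell} \ge 1$'' into a per-step multiplicative decay of $\evof{e^{-\theta Y_{\ell+1}}\mid\mathcal{F}_\ell}$ by $1 - \Theta(1/K)$, and one has to check that the compensating geometric factor $\rho^\ell = (1+\Theta(1/K))^\ell$ is chosen sharply enough that, after substituting $L \ge iK$, the final rate is $e^{-\Theta(i)}$ rather than the weaker $e^{-\Theta(i/K)}$ — it is, because a factor $1 - \Theta(1/K)$ compounded $\Theta(iK)$ times is $e^{-\Theta(i)}$.
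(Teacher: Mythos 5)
Your argument is correct for every $i \ge 2$ and takes a genuinely different route from the paper's. The paper pads the sequence (replacing $Y_j$ by $Z_j$ that equal $1$ once the sum has reached $K$, so the unit drift persists forever), groups the $Z_j$ into blocks of length $\roundup{K}$, notes that each block sum lies in $[0,O(K)]$ with conditional mean at least $K$, and then applies an online Chernoff bound (\reflemma{online-chernoff}) to the $i$ block sums. You instead work with the stopped sum directly and build the exponential supermartingale $e^{-S_{\ell \wedge \tau}/K}\rho^{\ell\wedge\tau}$, turning the bare drift condition into per-step decay of the Laplace transform via the chord bound for $e^{-\theta y}$ on $[0,K]$ (this is exactly where the boundedness $Y_j \le K$ enters, playing the role of the paper's block boundedness). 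Your route is more self-contained — no blocking, no appeal to a separate martingale Chernoff lemma — and it produces explicit constants ($\probof{\tau > i\roundup{K}} \le e^{1-(1-e^{-1})i}$, hence $\le e^{-ci}$ with, say, $c = 1/10$ for all $i \ge 2$); the paper's route is shorter given that it already needs \reflemma{online-chernoff} elsewhere.

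The one step that does not close is the small-$i$ regime, and your proposed fix (``shrink $c$'') cannot work: at $i=1$ your bound exceeds $1$, and shrinking $c$ would require separately showing that $\probof{\sum_{j \le \roundup{K}} Y_j < K}$ is bounded away from $1$ by an absolute constant — which is in fact false. For integer $K$, consider the process that at step $1$ takes value $0$ with probability $1-1/K$ and $K$ otherwise, and at each later step, while the current sum $s$ is below $K$, takes value $1-1/K$ with probability $q = \frac{K-s-1}{K-s-1+1/K}$ and jumps to exactly $K$ otherwise; the conditional mean is exactly $1$, all three hypotheses hold, yet the probability that the sum is still below $K$ after $\roundup{K}=K$ steps is $(1-1/K)\prod q \ge 1 - O\parof{\frac{\log K}{K}} \to 1$. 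So the lemma as literally stated fails at $i=1$; only a bound such as $e^{-c(i-1)}$ (equivalently, your $\probof{\cdot} \le e^{-(i-2)/2}$, which follows from $e^{1-(1-e^{-1})i} \le e^{-(i-2)/2}$ for all $i$) is available. To be fair, this blind spot is not yours alone: the paper's own proof is equally vacuous for $i = O(1)$ (its final Chernoff step bounds the probability that $i$ block sums, each of conditional mean $\ge K$ and bounded by about $2K$, total less than $2K$, which gives nothing for $i=1$), and the application in \reflemma{concentration-decay} is insensitive to replacing $e^{-ci}$ by $e^{-c(i-1)}$, since an additive shift in the exponent only changes constants there. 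If you state your conclusion in that shifted form, your proof is complete and, in my view, cleaner than the paper's.
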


\begin{proof}
  Define $Z_1,Z_2,\dots \geq 0$ by
  \begin{align*}
    Z_i =
    \begin{cases}
      Y_i &\text{if } \sum_{j < i} Y_i < K, \\
      1 &\text{if } \sum_{j < i} Y_i = K.
    \end{cases}
  \end{align*}
  For each $i$,
  \begin{math}
    \sum_{j \leq i} Y_j = K \iff \sum_{j \leq i} Z_i \geq K.
  \end{math}
  Moreover, we have $\evof{Z_i \given Z_1,\dots,Z_{i-1}} \geq 1$ for
  all $i$ and any values of $Z_1,\dots,Z_{i-1}$.

  For $L = \roundup{K}$, we divide the $Z_i$'s into groups of $L$. For
  $i \in \naturalnumbers$, let
  \begin{math}
    W_i = \sum_{j = 1}^{L} Z_{(i-1) L + j}.
  \end{math}
  For each $i$, we have
  \begin{math}
    0 \leq W_i \leq 2 K -1
  \end{math}
  unconditionally,
  and
  \begin{math}
    \evof{W_i \given W_1,\dots,W_{i-1}} \geq K
  \end{math}
  for any values of $W_1,\dots,W_{i-1}$.

  We now have
  \begin{align*}
    \probof{\sum_{j=1}^{i L - 1} Y_j < K} %
    =                                     %
    \probof{\sum_{j=1}^{i L - 1} Z_j < K}
    =                           %
    \probof{\sum_{j=1}^{i} W_j < K} %
    \leq                           %
    \probof{\sum_{j=1}^i W_j < 2 K} %
    \tago{\leq}                               %
    e^{- c i}
  \end{align*}
  by \tagr Chernoff inequalities, for some constant $c > 0$.
\end{proof}

%%% Local Variables:
%%% mode: latex
%%% TeX-master: "submodular-mwu"
%%% End:

\end{document}

%%% Local Variables:
%%% mode: latex
%%% TeX-master: t
%%% End: